\newcolumntype{s}{>{\hsize=.33\hsize}X}
\NewTableCommand{\tinytableDefineColor}[3]{\definecolor{
#1}{#2}{#3}}
\newcommand*{\addFileDependency}[1]{
  \typeout{(#1)}
  \@addtofilelist{#1}
  \IfFileExists{#1}{}{\typeout{No file #1.}}
}
\newtheorem{assumption}{Assumption}[section]
\newtheorem{theorem}{Theorem}[section]
\newtheorem{lemma}{Lemma}[section]
\newtheorem{prop}{Proposition}[section]
\newtheorem{corollary}{Corollary}[section]
\newtheorem{definition}{Definition}
\theoremstyle{definition}
\newtheorem{remark}{Remark}[section]
\newtheorem{proposition}{Proposition}[section]
\renewcommand{\Pr}{\mathbb{P}}
\newcommand{\R}{\mathbb{R}}
\newcommand{\1}{\mathbf{1}}
\newcommand{\E}{\mathbb{E}}
\newcommand{\ind}{\mathbf{1}}
\title{Evaluating Local Policies in Centralized Markets\thanks{{\small We thank Kevin Chen, Ramesh Johari, Lihua Lei, Evan Munro, Davide Viviano, and Stefan Wager for their feedback, which has greatly improved the paper. We also thank seminar participants at Harvard and Stanford, as well as conference participants at the 2024 Winter Meetings of the Econometric Society, and ACIC 2025. Wisse Rutgers acknowledges financial support from the Fundación Ramón Areces, the Maria de Maeztu Unit of Excellence CEX2020-001104-M, funded by MCIN/AEI/10.13039/501100011033, and CEMFI. }}} 
\author{Dmitry  Arkhangelsky \thanks{{\small  Associate Professor, CEMFI, darkhangel@cemfi.es. }} \and Wisse Rutgers \thanks{{\small PhD student, CEMFI, wisse.rutgers@cemfi.edu.es.}}}
\date{\today}
\begin{document}
\onehalfspacing

\begin{titlepage}
    \maketitle
\begin{abstract}
We study a policy evaluation problem in centralized markets. We show that the aggregate impact of any marginal reform, the Marginal Policy Effect (MPE), is nonparametrically identified using data from a baseline equilibrium, without additional variation in the policy rule. We achieve this by constructing the equilibrium-adjusted outcome: a policy-invariant structural object that augments an agent's outcome with the full equilibrium externality their participation imposes on others. We show that these externalities can be constructed using estimands that are already common in empirical work. The MPE is identified as the covariance between our structural outcome and the reform's direction, providing a flexible tool for optimal policy targeting and a novel bridge to the Marginal Treatment Effects literature.
\end{abstract}
    
    \vspace{2\baselineskip}
    
    \noindent\textbf{Keywords:} Causal Inference, Market Equilibrium, Policy Evaluation, Spillovers
    
    \thispagestyle{empty}
\end{titlepage}

\section{Introduction}

Centralized marketplaces are a cornerstone of the modern economy, organizing a vast and growing share of economic activity. In the digital world, they match riders with drivers, allocate advertising slots in real-time auctions, and connect millions of sellers to buyers. In the public and non-profit sectors, they assign students to schools, allocate housing vouchers, and match organ donors to recipients. The defining feature of these markets is a well-defined algorithm or set of rules that processes inputs from participants, e.g., bids, preferences, scores, and produces an allocation of scarce resources.

A fundamental question for both the designers and regulators of these marketplaces is how to improve the outcomes they generate. This paper focuses on a particularly common class of interventions: those that influence participants' behavior within a fixed set of market rules. This includes policies like providing subsidies to certain users, offering informational nudges, or creating incentives to alter how they participate in the market. Given an existing policy instrument, the central challenge for a platform or regulator is how to optimize it. This optimization is often an iterative process, focused on the aggregate welfare consequences of a marginal adjustment, for example, slightly expanding eligibility for a fee waiver or tweaking the size of a subsidy.

The natural approach to evaluate such a change is to run an experiment. However, standard experimentation in these environments faces a well-known challenge: equilibrium spillovers. Any intervention that meaningfully changes the behavior of one group of participants induces an endogenous response from the market-clearing mechanism that affects all other participants. For example, a subsidy that encourages more applications to a university with fixed capacity will raise the admission cutoff, creating a negative spillover for all other applicants. Because this spillover affects both treated and control groups alike, a simple comparison between them would difference away this common, system-wide component. 

This issue, sometimes called the "missing intercept problem" in macroeconomics \citep{wolf2023missing}, has led to a conventional wisdom that nonparametric identification of aggregate effects requires observing the system's response to explicit variation in the policy environment itself. Researchers typically seek this variation either over time, as is common in industry with switchback experiments \citep{bojinov2023design}, or across distinct economic contexts where the policy rule or its intensity differs. For instance, a common design in development economics involves a two-stage randomization where the share of participants receiving a benefit is experimentally varied across different local labor markets (e.g., \citealp{crepon2013labor}). However, such designs can be costly to implement, their findings may be difficult to interpret due to substantial heterogeneity across environments, and finally, the statistical results might lack power due to a small number of experimental units.

Our central contribution is to show that aggregate effects of policy changes are, in fact, nonparametrically identified from data within a single policy environment. We achieve this without requiring cross-market variation or imposing strong extrapolating assumptions. The key to our approach is the construction of a single, policy-invariant structural object for each agent: the equilibrium-adjusted outcome ($\Psi_{i}^{\text{total}}$). This object augments an agent's observed outcome with a correction term that captures the full equilibrium externality their participation imposes on all others. The result is a single measure of an agent's total contribution to welfare, accounting for both their private outcomes and the full cost of the competitive pressure they exert on the system.  We show that this key theoretical object is identified from the data under natural assumptions about the market structure. 

Our framework uses this structural object to evaluate the effects of local reforms---that is, marginal adjustments to an existing policy, such as slightly increasing the share of participants who receive a subsidy. Any such reform is characterized by a "score" function, $s_W$, which describes the precise direction of the change. For instance, a reform that marginally increases the share of treated individuals would be represented by a score that is positive for the treated group and negative for the untreated group, capturing the small shift of participants between them. This construction leads to a powerful "separation principle" that is the main practical result of our paper. We show that the Marginal Policy Effect (MPE)---the first-order welfare impact of the reform---can be expressed as a simple covariance between this score and our structural outcome:
$$
\text{MPE} = \mathbb{E}[\Psi_{i}^{\text{total}} \cdot s_{W}(W_{i})].
$$
This result provides a practical tool for policy evaluation. It separates the complex, fixed market structure, which is entirely encapsulated in $\Psi_{i}^{\text{total}}$, from the specific policy change under consideration, which is represented by the score $s_W$. A researcher or platform can invest in estimating the structural object $\Psi_{i}^{\text{total}}$ once and then use it to evaluate any local reform. 

While we frame our discussion in terms of average outcomes for clarity, the framework is substantially more general. It applies to any welfare criterion that is a smooth functional of the outcome distribution, allowing policymakers to evaluate a reform's impact on quantiles, measures of inequality like the Gini coefficient, or other distributional objectives. This generality is a direct consequence of our focus on local reforms. For a marginal policy change, the first-order impact on any smooth distributional functional can be represented as an expectation of a specific, policy-invariant transformation of the outcome, known as its influence function. Our analysis, therefore, proceeds by first developing the results for the simple case where welfare is the average outcome, and we later demonstrate that our framework accommodates these more general criteria by simply substituting the outcome variable with its relevant influence function.

The identification of different components of $\Psi_i^{\text{total}}$ is not straightforward. The key technical hurdle is that centralized allocation mechanisms are often discontinuous. A school admission rule, for example, is a step function of a student's test score. A marginal policy reform that infinitesimally raises an admission cutoff has no effect on most students, but it causes a discrete jump in the allocation of students right at the margin, who now lose their seats. This creates a fundamental identification challenge, as the observed data contains no direct information about the outcomes of these marginal students under their new, counterfactual allocation.

Our framework resolves this by formally characterizing the indirect, equilibrium component of a policy's effect and showing that it can be identified by focusing on the "marginal agents" at the allocation boundary. This approach builds a direct bridge between the theory of market design and a large body of empirical work. We show that the crucial inputs required to compute the market externalities are often the same local average treatment effects (LATEs) identified in regression discontinuity (RDD) studies of admission cutoffs or randomized lotteries (e.g., \citealp{abdulkadirouglu2017research,abdulkadi̇rouglu2022breaking,kirkeboen2016field,walters2018demand}). Our framework clarifies that these well-studied parameters are not merely reduced-form objects but essential structural inputs required to conduct a full equilibrium evaluation of any marginal policy change (see \citet{kline2016evaluating} for a related discussion).

The flexibility of our framework allows us to extend the analysis beyond idealized experiments to more common observational settings. We first consider the case of selection on observables, where a policy is assigned randomly conditional on a set of covariates. The primary application of this extension is to provide a rigorous foundation for optimal policy targeting, connecting our results to the literature on empirical welfare maximization (EWM) \citep{manski2004statistical, kitagawa2018should, athey2021policy,viviano2024policy}. A practical implication of our framework is that it can avoid a common curse of dimensionality. As long as the market mechanism is anonymous with respect to the covariates---reacting only to agents' reports, not their background characteristics---the structural components of $\Psi_{i}^{\text{total}}$ do not need to be re-estimated conditionally. A researcher can proceed directly with an aggregate analysis, using unconditionally estimated parameters like the RDD effects.

Finally, the framework can be adapted to answer a different and more structural class of questions central to economic analysis. In many contexts, a policymaker cannot directly mandate an action because it is an agent's endogenous choice, such as the decision to apply for a voucher or take up a program. We show that our framework can still be used to evaluate the welfare consequences of a marginal shift in the distribution of these choices. This is achieved by introducing an instrumental variable that provides exogenous variation in agents' decisions, building a novel bridge to the literature on Marginal Treatment Effects (MTE) \citep{bjorklund1987estimation, heckman2001policy, heckman2005structural}. We show that the MTE of the equilibrium-adjusted outcome is precisely the correct structural object for evaluating policies that operate by influencing agents' choices. This connects the estimands from an IV analysis to the MPE for specific, economically meaningful reforms, providing a powerful tool for structural evaluation in the presence of endogenous selection.

Our analysis is local, providing the welfare gradient for marginal policy reforms rather than evaluating large-scale, global changes. This focus is necessitated by a fundamental identification challenge: a marginal change to a market-clearing cutoff can assign agents to allocations they never would have received in the baseline equilibrium, meaning their counterfactual outcomes are unobserved. While methods for identifying global effects exist \citep{munro2025causal}, they often rely on strong assumptions to bridge this identification gap. We show that even identifying the local welfare effect in the presence of these discontinuities is a non-trivial problem that requires a dedicated framework. 

To isolate this core challenge, our framework makes several simplifying assumptions. First, we focus exclusively on spillovers transmitted through the market-clearing mechanism itself. We abstract from other empirically important channels of interference, such as peer effects where agents' preferences respond to the allocation of others \citep{allende2019competition,leshno2022stable}, strategic reporting in non-strategy-proof environments \citep{agarwal2018demand,bertanha2024causal}, and exogenous policy spillovers like information diffusion. 
Second, we focus on identification rather than estimation or inference. This involves assuming that the researcher observes all relevant data, including agents' full reports, thereby abstracting from important practical challenges such as mistakes or incomplete preference rankings that are the subject of a separate literature \citep{artemov2023stable, fack2019beyond}. 

We adopt these limitations not because these other channels are unimportant, but to establish what is possible in an ideal setting. Our finding that identification is limited to local effects even under these demanding assumptions suggests that incorporating additional complexities would require further, potentially less credible, restrictions. In particular, after establishing our key results, we discuss the issue of strategic reporting, arguing that existing solutions \citep{agarwal2018demand, bertanha2023causal} can only partially address the challenges that arise in our context.

In summary, our framework provides a bridge between reduced-form data and the equilibrium structure of the market, yielding a tool with several distinct applications. First, our results can be applied directly to policy optimization. While our analysis is local, many practical policy decisions are iterative and marginal in nature, such as tweaking the size of a subsidy or adjusting eligibility criteria. Our framework provides the precise tool needed to guide these decisions by evaluating the "bang-for-the-buck" of a wide range of potential local reforms. Second, our work serves as a disciplined first step toward evaluating global policy changes. By sharply delineating what is identified from the data, it provides a transparent foundation upon which any analysis of large-scale reforms must be built; any claim about global effects must necessarily rely on extrapolating from the local effects that we identify. Finally, and relatedly, our results can be used to assess and discipline more ambitious structural models. A global model, likely estimated under more restrictive conditions, should be able to reproduce the local equilibrium effects that our framework identifies from the data, providing a powerful, data-driven specification test for more complex models of market behavior.

Our work is situated at the intersection of several active research areas: causal inference in markets, sufficient statistics approach in public economics, the empirical analysis of market design, the literature on optimal policy targeting, and the analysis of treatment effects with endogenous selection.

Our paper contributes to the recent literature on causal inference in the presence of interference and market equilibrium effects. The challenge that equilibrium adjustments can invalidate standard treatment effect comparisons has long been recognized \citep{heckman1998general}. One prominent branch of the recent literature leverages auxiliary experimental variation---for instance, randomized prices---to identify spillovers \citep{wager2021experimenting, munro2025treatment}. Another branch, closer to our own, relies on institutional knowledge of the market-clearing rule \citep{munro2025causal}. Our framework follows this latter approach but makes a distinct contribution by focusing on general, stochastic downstream outcomes (e.g., future earnings) rather than on the allocation itself or its deterministic functions. This broader scope for the outcome variable is what necessitates our focus on local, rather than global, policy effects. Our work also provides a clear economic structure for the statistical decompositions of interference proposed in the causal inference literature \citep{hu2022average}, showing precisely how the indirect effects arise from the market mechanism. Finally, it is related to the recent design-based causal analysis of equilibrium systems by \citet{menzel2025fixed}.

Our work is related to the influential sufficient statistics literature in public economics, which connects credibly identified, reduced-form parameters to welfare theory without requiring the estimation of a full structural model (e.g., \citealp{chetty2009sufficient,kleven2021sufficient}). In particular, in constructing the equilibrium-adjusted outcome, we explicitly rely on quasi-experimental estimands (such as LATEs from RDDs), combining them with institutional knowledge of the market mechanism. Our approach is more structural in nature, relying on details of the allocation mechanism.

Our analysis is directly related to applied market design. A prominent empirical literature uses randomized lotteries or regression discontinuity designs to estimate the causal effect of attending a particular school \citep{abdulkadirouglu2017research, abdulkadi̇rouglu2022breaking,walters2018demand}. We show that the LATEs identified in these studies are not merely reduced-form parameters. Instead, they are the essential structural inputs required to evaluate the equilibrium consequences of any marginal policy change. Our analysis builds on theoretical work that characterizes large matching markets with cutoffs \citep{azevedo2016supply, leshno2021cutoff} and also speaks to the econometric challenges that arise in non-strategy-proof mechanisms, as studied in a growing literature on preference recovery and strategic reporting \citep{agarwal2018demand, bertanha2023causal}. Finally, our analysis is related to recent empirical work on outcomes and choices in empirical market design by \citep{agarwal2025choices}

By focusing on policy optimization, our work connects to the literature on optimal policy targeting \citep{manski2004statistical, kitagawa2018should, athey2021policy,viviano2024policy}. This literature typically seeks to find a globally optimal policy rule, which often depends only on the sign of a conditional average treatment effect (CATE). Our approach is local, focusing on the welfare gradient to guide iterative policy improvement. Our central contribution to this literature is to identify the correct welfare-relevant object for policy targeting in an equilibrium setting. We show that the policymaker's objective should be to maximize the CATE of the equilibrium-adjusted outcome, $\Psi_{i}^{\text{total}}$, not the observed outcome. This objective function correctly accounts for equilibrium spillovers and leverages the magnitude of the causal effect, not just its sign.

Finally, to address settings with endogenous policy take-up, we connect to the literature on MTE \citep{bjorklund1987estimation, heckman2001policy,heckman2005structural}. In contexts where a policy instrument influences, but does not mandate, an agent's choice, we show how to use instrumental variables to conduct a full welfare analysis. Our key contribution is to demonstrate that the proper object of study is the MTE of the equilibrium-adjusted outcome. This synthesizes the MTE framework, which accounts for selection on unobservables, with our framework, which accounts for equilibrium spillovers. By connecting our equilibrium analysis to recent advances in the MTE literature \citep{brinch2017beyond,mogstad2018using,mogstad2024instrumental}, this result provides a clear path from reduced-form IV estimates to a rich set of structural statements about the welfare impact of policies that target endogenous choices.

The remainder of the paper is organized as follows. Section~\ref{sec:framework} lays out the theoretical framework, defining the economic environment and the propagation of a policy reform. Section~\ref{sec:derivation} presents our main identification result, detailing the construction of the equilibrium-adjusted outcome and the derivation of the Marginal Policy Effect. Section~\ref{sec:examples} illustrates the framework with several canonical examples, including auctions and school choice. Section~\ref{sec:extensions} develops the extensions to general welfare functionals, optimal policy targeting, and endogenous selection. Section~\ref{sec:conclusion} concludes.
\section{Framework}\label{sec:framework}

This section develops the formal framework to address the challenge of equilibrium spillovers outlined in the introduction. Standard causal inference methods are ill-suited for this environment, as a policy change for one agent directly alters the competitive landscape for all others. Our analysis overcomes this "missing intercept problem" by leveraging institutional knowledge of the market. To do so, we first specify the economic environment, defining the two components of institutional context on which our analysis rests: the allocation mechanism and the market conduct rule. We then trace how these components allow a local policy reform to propagate through the system.

\subsection{Environment}

Our analysis begins with a population of agents, indexed by $i$. Each agent has a potential outcome, $Y_i(w, a)$, which is a function of two variables: the allocation they ultimately receive, $a$, and their exposure to a policy instrument, $w$. The allocation $a$ belongs to a discrete set $\mathcal{A} = \{0, 1, \dots, K\}$, representing one of $K$ scarce goods---such as a seat at a charter school, a housing voucher, or a specific advertising slot---or an outside option ($a=0$). The policy instrument (or "treatment") $w \in \mathcal{W}$ represents an existing intervention that a platform or regulator is considering adjusting. Examples include the size of a tuition subsidy, an informational treatment about market options, or a targeted incentive. We denote the realized policy for agent $i$ by the random variable $W_i$.

In addition to the outcome, the policy $W_i$ influences agent $i$'s report to the allocation mechanism, $R_i = R_i(W_i)\in \mathcal{R}$. This report can correspond to a vector of preferences, a bid, or school priorities. An agent's final allocation, $A_i$, is determined by their own report $R_i$ and the aggregate competitive environment. We summarize this environment with a vector of market-clearing parameters, $\mathbf{c}$ (e.g., equilibrium prices, rationing probabilities, or admission cutoffs), and the population-wide distribution of reports, $P_R$.
\begin{assumption}[Anonymous Allocation Mechanism] \label{ass:mechanism}
The counterfactual allocation $A_i(r, \mathbf{c}, P_R)$ is determined by an anonymous mechanism that depends on an agent's report $r$, the common parameter $\mathbf{c}$, and the counterfactual marginal distribution of reports in the population $P_R$. The probability of receiving allocation $a$ is given by a \textbf{known} function $\mu_a(r, \mathbf{c}, P_R)$.
\end{assumption}
This framework is designed to capture two distinct but related economic settings. The first is a large market where an agent's allocation depends on their report relative to aggregate competitive conditions. In many such markets, these conditions are fully summarized by the equilibrium parameter $\mathbf{c}$, rendering the direct dependence of $\mu_a$ on $P_R$ redundant once $\mathbf{c}$ is known. The second setting is a market with a finite number of symmetric participants, such as a symmetric auction. Here, $\mu_a(r, \mathbf{c}, P_R)$ represents an agent's interim probability of receiving allocation $a$, which naturally depends on both the common parameter (e.g., a reserve price) and the distribution of their opponents' reports, $P_R$. Our general formulation, $\mu_a(r, \mathbf{c}, P_R)$, is deliberately chosen to encompass both of these cases.

Crucially, we assume the functional form of this allocation rule is known to the researcher. This institutional knowledge is essential for analyzing counterfactual allocations under different market conditions. Together, these two assumptions formalize the institutional knowledge of the market's structure. This knowledge is an essential component of our identification strategy, allowing us to proceed without requiring the cross-policy variation used in conventional approaches. Our next assumption extends this requirement from the mechanism's rules to the market's conduct.
\begin{assumption}[Market Conduct Rule] \label{ass:equilibrium_adj}
The counterfactual parameter vector $\mathbf{c}$ is determined by the counterfactual competitive environment $P_R$,
\begin{equation*}
    \mathbf{c} = \mathbf{c}(P_R),
\end{equation*}
for some \textbf{known} function $\mathbf{c}(\cdot)$.
\end{assumption}

We deliberately separate the market-clearing parameter $\mathbf{c}$ from the full report distribution $P_R$ as arguments in the allocation function, $\mu_a(r, \mathbf{c}, P_R)$. This distinction is not merely notational but economically and mathematically meaningful. Economically, it reflects a natural hierarchy: $P_R$ represents the primitive competitive environment, while $\mathbf{c}$ is the endogenous summary statistic of that environment (e.g., a vector of prices or admission cutoffs) to which agents directly react. Mathematically, this separation allows us to impose different regularity conditions on each component. For instance, an agent's allocation is often a discontinuous function of the clearing parameter $\mathbf{c}$. In contrast, the market conduct rule $\mathbf{c}(P_R)$ can be a smooth functional of the underlying distribution $P_R$. This structure is critical for analyzing the propagation of marginal policy reforms, as the examples below will illustrate.

\paragraph{Example 1: Competitive Equilibrium.}
Consider a market where for each product $k \in \{1, \dots, K\}$, there is a fixed supply $q_k$. The market-clearing parameter $\mathbf{c} \in \mathbb{R}^{K}$ can be interpreted as a vector of prices, admission standards, or rationing probabilities that adjust to equilibrate demand with supply. An equilibrium in a typical allocation mechanism, e.g., the Deferred Acceptance Algorithm, will depend only on $\mathbf{c}$, not the whole distribution of reports $P_R$ \citep{azevedo2016supply}. The equilibrium constraint requires that these parameters satisfy market-clearing:
\begin{equation*}
    \E[\ind\{A_i = k\}] = \int \mu_k(r, \mathbf{c}) dP_{R}(r) = q_k, \quad \forall k \in \{1, \dots, K\}.
\end{equation*}
This system of equations implicitly defines the parameter $\mathbf{c}$ as a functional of the entire distribution of reports, $\mathbf{c}(P_R)$.

\paragraph{Example 2: Optimal Reserve Price in an Auction.}
Suppose a seller is auctioning a single item ($\mathcal{A} = \{0,1\}$) and agents' reports $R_i$ are their valuations. Parameter $c$ corresponds to a reserve price. In this situation, the allocation probability for an agent with valuation $r$ will depend discontinuously on the reserve price and smoothly on the distribution of the competitors' valuations. Suppose the seller sets a reserve price $c$ to maximize expected revenue. Following \citet{myerson1981optimal}, the optimal reserve price solves the first-order condition of the seller's problem, which balances the revenue gain from a higher price against the risk of the item going unsold. This trade-off is captured by the equation:
\begin{equation*}
    f_{R}(c)c - (1 - F_{R}(c)) = 0,
\end{equation*}
where $F_R$ and $f_R$ are the distribution and density functions of valuations, respectively. The solution, $c$, is the price where the marginal revenue of raising the price equals zero. Because this condition directly involves the distribution of valuations, the optimal reserve price is a function of the report distribution, $c(P_R)$. This example highlights the value of our two-part structure: Assumption~\ref{ass:mechanism} defines the allocation rule for any given reserve price, while the market conduct rule in Assumption~\ref{ass:equilibrium_adj} models the seller's optimizing behavior.

\paragraph{Example 3: School Choice with Trading Cycles.}
Suppose $K$ products correspond to spots in public schools. An agent's report consists of a vector of priorities, $(V_{i,1},\dots, V_{i,K})$, and a strict preference relation, $\succ_i$. The slots are allocated using Gale's Top Trading Cycles algorithm \citep{shapley1974cores,abdulkadirouglu2003school}. As shown by \citet{leshno2021cutoff}, the final allocation from this process can be characterized by a matrix of admission cutoffs $\mathbf{c} \in \mathbb{R}^{K\times K}$. Here, $c_{a,b}$ represents the minimum priority score required at an endowment school $b$ to successfully obtain a seat at a destination school $a$. A student is assigned their most-preferred school from the set of schools for which they are admissible:
\begin{equation*}
    A_i = \max_{\succ_i}\left\{\{ a |V_{i,b} \ge c_{a,b} \text{ for some $b$}\} \cup \{0\}\right\}.
\end{equation*}
These cutoffs are the endogenous outcome of the matching process, which can be described as a solution to a dynamic system that depends on the joint distribution of priorities and preferences, allowing us to write $\mathbf{c} = \mathbf{c}(P_R)$.\footnote{We discuss application of our results to TTC in more detail in Appendix~\ref{app:examples}.}
\vspace{2em}

The previous assumptions describe the counterfactual world. The observed data are generated from a baseline equilibrium that unfolds in a sequence of steps. First, the baseline policy distribution, $P_{W|0}$, induces a distribution of agent reports, $P_{R|0}$. This report distribution, via the market conduct rule in Assumption~\ref{ass:equilibrium_adj}, determines the parameter $\mathbf{c}_0 := \mathbf{c}(P_{R|0})$. Finally, the allocation mechanism from Assumption~\ref{ass:mechanism} assigns an allocation $A_i = A_i(R_i, \mathbf{c}_0, P_{R|0})$, which determines the realized outcome $Y_i := Y_i(W_i, A_i)$. We define the benchmark aggregate welfare, $\mathcal{U}_0$, as the expected value of this realized outcome:
\begin{align*}
    \mathcal{U}_0 := \E_{0}[Y_i],
\end{align*}
where the expectation $\mathbb{E}_{0}$ is taken over the distribution of the observed data described above. 

We will think of a marginal policy reform as a specific "direction" of change to the baseline policy distribution. Any such change can be characterized by a score function, $s_W(w)$, which tells us how the reform re-weights the baseline distribution of $W_i$. Intuitively, if $s_W(w)$ is positive, the reform slightly increases the proportion of individuals receiving policy $w$; if it is negative, it slightly decreases that proportion. A convenient way to generate a local reform is to embed the baseline policy distribution in a parametric family, $P_{W|\theta, s_{W}}$, indexed by a real-valued parameter $\theta$ and score $s_{W}$, so that $P_{W|0, s_{W}} = P_{W|0}$. A marginal reform is then represented by an infinitesimal change in $\theta$ away from its baseline value of zero. 

To make this more concrete, consider two examples. First, suppose the policy is a binary treatment ($W_i \in \{0, 1\}$) and the reform's goal is to marginally increase the share of treated agents. This corresponds to a score function that is proportional to $\frac{W_i}{\mathbb{E}_{0}[W_i]} - \frac{1-W_i}{1- \mathbb{E}_{0}[W_i]}$ and thus is positive for the treated ($s(1) > 0$) and negative for the untreated ($s(0) < 0$), effectively shifting a small amount of probability mass between the two groups. Alternatively, suppose the policy is a subsidy (in logs) distributed as $W_i \sim \mathcal{N}(a_0, \sigma^2_0)$, and the reform aims to reduce the subsidy's variance. This corresponds to a score function $s_W(w)$ that is high for subsidies near the mean and negative for subsidies in the tails, effectively pulling the distribution in towards its center.

Given $(\theta, s_{W})$ we can define the welfare as the average outcome, 
\begin{align*}
    \mathcal{U}(\theta, s_w) := \E_{(\theta, s_{W})}[Y_i],
\end{align*}
where the expectation is over the distribution of the outcomes induced by $(\theta,s_{W})$, which we will describe in detail in the next section. By definition $\mathcal{U}(0,s_{W}) = \mathcal{U}_0$. Following \citet{hu2022average}, our primary objective is to characterize the marginal policy effect (MPE), which measures the first-order impact of the reform on the aggregate outcome:
\begin{align*}
   \text{Marginal Policy Effect} :=  \frac{\partial \mathcal{U}(\theta,s_{W})}{\partial \theta}\bigg|_{\theta=0}.
\end{align*}
The next several sections of the paper will focus on the identification of the MPE for a fixed local reform (score $s_{W}$). We will discuss the question of optimal local reform in detail in Section~\ref{sec:extensions}.

Our analysis focuses on identification of the MPE, not estimation or inference. We therefore proceed as if the joint distribution of the data vector $D_i = (Y_i, A_i, W_i, R_i)$ under the baseline policy regime is known to the researcher. Note that it implies that the full report $R_i$ is observed. Part of this data is often missing in some empirical market design applications, such as auctions (e.g., observing the bids only for winners). The reports can contain mistakes or incomplete rankings in other applications, such as school choice \citep{artemov2023stable,fack2019beyond}. Our analysis abstracts from these data limitations.

\begin{remark}[Beyond average outcomes]\label{rem:in_local}
    Our discussion focuses on welfare, which is defined as an average outcome, but given that the reforms we analyze are local, our results apply almost immediately to any other smooth functionals of the outcome distribution. For instance, they can be extended to cover quantiles or aggregate measures of inequality, such as the Gini coefficient. We discuss this extension in Section~\ref{sec:extensions}.
\end{remark}

\subsection{From Policy to Likelihood: Tracing the Perturbation}

To trace how a local policy reform propagates through the market, we must connect the observed data to the counterfactual world. Our first step is to decompose the joint distribution of the observed data for agent $i$, $D_i = (Y_i, A_i, R_i, W_i)$, which we denote $P_{D}^{obs}$.  We assume this distribution has a density, $f_D^{obs}(y, a, r, w)$, with respect to an underlying well-behaved measure, which we factor as:
\begin{align*}
    f_{D}^{obs}(y, a, r, w) = f^{obs}_{Y|A,R,W}(y|a,r,w) f^{obs}_{A|R,W}(a|r,w) f^{obs}_{R|W}(r|w) f^{obs}_{W}(w).
\end{align*}
This factorization is purely statistical and holds by construction. Our goal in this section is to use this distribution to inform us about the counterfactual distribution of the data $P_{D}^{count}(\theta,s_{W})$ induced by a particular local reform.

Our model already imposes structure on this decomposition. Specifically, Assumptions~\ref{ass:mechanism}-\ref{ass:equilibrium_adj} imply that the observed conditional allocation probability, $f^{obs}_{A|R,W}(a|r,w)$ is given by the known allocation rule $\mu_{a}(\cdot)$ evaluated at the baseline equilibrium:
\begin{align*}
    f^{obs}_{A|R,W}(a|r,w) = \mu_a(r, \mathbf{c}_0, P_{R|0}).
\end{align*}
The counterfactual distribution of $W_i$ is controlled by the policy maker, and for a given $(\theta, s_{W})$, we denote its density by $f_{W}(w|\theta, s_{W})$. This leaves two components that we must understand: $f^{obs}_{Y|A,R,W}(y|a,r,w)$ and $f^{obs}_{R|W}(r|w)$. To do so, we start with an assumption about the assignment of the baseline policy.
\begin{assumption}[Random Assignment] \label{ass:random_assignment}
The policy $W_i$ is randomly assigned to agents in the baseline environment.
\end{assumption}
This assumption isolates the mechanism's spillover effects from confounding selection effects, simplifying our initial analysis. We relax this restriction in Section~\ref{sec:extensions} to allow for selection on both observed and unobserved characteristics. Our next assumption isolates the spillovers created by the allocation mechanism from other potential interference channels.
\begin{assumption}[Policy Invariance] \label{ass:structural_invariance}
The potential outcomes $Y_i(w,a)$ and potential reports $R_i(w)$ are structural primitives that are invariant to $(\theta,s_{W})$.
\end{assumption}
Assumption~\ref{ass:structural_invariance} requires that an agent's underlying potential outcomes and potential reports, $Y_{i}(w,a)$ and $R_i(w)$, do not respond to changes in the aggregate policy environment $(\theta, s_{W})$. This allows us to focus squarely on externalities transmitted through the allocation mechanism. The assumption would be violated in two main scenarios. First, if the mechanism were not strategy-proof, an agent's optimal report $R_i$ would depend on the distribution of competitors' reports, which is a function of $(\theta,s_{W})$. Second, if direct peer effects were present (e.g., an agent's utility is affected by the allocation of its competitors), both potential outcomes and potential reports would depend on the policy distribution (e.g., \citealp{allende2019competition,leshno2022stable}). By ruling these out at the outset, our framework provides a clean benchmark for understanding spillovers induced by the mechanism only. We discuss the strategic reporting channel in Section~\ref{sec:extensions}.

With these assumptions in place, we can now trace how the policy perturbation reshapes the joint distribution of the data.
\begin{prop}[Propagation of a Policy Perturbation]\label{prop:id_sup}
    Suppose Assumptions \ref{ass:mechanism}-\ref{ass:structural_invariance} hold. Then $P_{D}^{obs}$-almost surely we have
    \begin{align*}
    f_{D}^{\text{count}}(Y_i, A_i, R_i, W_i|\theta, s_{W}) = \; & f^{\text{obs}}_{Y|A,R,W}(Y_i|A_i,R_i,W_i) \mu_a(R_i, \mathbf{c}(P_{R|\theta,s_{W}}), P_{R|\theta,s_{W}}) \\
    & \times f^{\text{obs}}_{R|W}(R_i|W_i) f_{W}(W_i|\theta,s_{W})
\end{align*}
    where $f_{D}^{count}(y,a,r,w|\theta,s_{W})$ is the density of $P_{D}^{count}(\theta, s_{W})$, and  $P_{R|\theta,s_{W}}$ is the counterfactual distribution of reports induced by the new policy. Its density is formed by integrating the observed reporting rule, $f^{obs}_{R|W}(r|w)$, against the new policy distribution, $f(w|\theta, s_{W})$.
\end{prop}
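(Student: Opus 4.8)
The plan is to verify the claimed factorization one conditional factor at a time. By the chain rule, for any reform the counterfactual joint density admits the same decomposition as the observed one,
\[
 f^{\text{count}}_{D}(y,a,r,w|\theta,s_{W}) = f^{\text{count}}_{Y|A,R,W}(y|a,r,w)\, f^{\text{count}}_{A|R,W}(a|r,w)\, f^{\text{count}}_{R|W}(r|w)\, f^{\text{count}}_{W}(w),
\]
so it suffices to identify the four conditional factors: two of them will coincide with their observed counterparts, one is the known mechanism re-evaluated at the counterfactual aggregates, and the last is the policy density chosen by the planner.

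The two ``outer'' factors are immediate. The policy marginal is controlled by construction, $f^{\text{count}}_{W}(w)=f_{W}(w|\theta,s_{W})$. For the reporting factor, Assumption~\ref{ass:random_assignment} implies that the observed conditional law of $R_i$ given $W_i=w$ is exactly the law of the structural potential report $R_i(w)$, and Assumption~\ref{ass:structural_invariance} makes this law invariant to $(\theta,s_{W})$; hence $f^{\text{count}}_{R|W}(r|w)=f^{\text{obs}}_{R|W}(r|w)$. Combining the last two observations, the counterfactual joint law of $(R_i,W_i)$ has density $f^{\text{obs}}_{R|W}(r|w)f_{W}(w|\theta,s_{W})$, and marginalizing over $w$ delivers the stated formula for the density of $P_{R|\theta,s_{W}}$. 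For the allocation factor, note that the \emph{function} $\mu_a$ is policy-invariant by Assumption~\ref{ass:mechanism}; what changes in the counterfactual world is only the competitive environment, which is now $P_{R|\theta,s_{W}}$, and the induced clearing parameter, which by the market conduct rule in Assumption~\ref{ass:equilibrium_adj} equals $\mathbf{c}(P_{R|\theta,s_{W}})$. Evaluating the mechanism at these counterfactual aggregates gives $f^{\text{count}}_{A|R,W}(a|r,w)=\mu_a\big(r,\mathbf{c}(P_{R|\theta,s_{W}}),P_{R|\theta,s_{W}}\big)$, exactly as in the observed case with $(\mathbf{c}_0,P_{R|0})$ replaced.

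The remaining factor, $f_{Y|A,R,W}$, is where I expect the only real subtlety. Naively, policy invariance of the potential outcomes $Y_i(w,a)$ suggests this conditional density cannot move, but some care is needed because the allocation $A_i$ is itself an endogenous object whose distribution shifts with the reform; conditioning on $A_i$ is conditioning on a variable that the policy perturbs, and one must check this does not distort the conditional law of the outcome. The resolution is to write $f^{\text{obs}}_{Y|A,R,W}(y|a,r,w)$ as the conditional law of the structural primitive $Y_i(w,a)$ given $R_i(w)=r$ and given that the mechanism's (exogenous) tie-breaking randomization delivered allocation $a$: since the tie-break is independent of the structural primitives, conditioning on the realized allocation adds no information about $Y_i(w,a)$ beyond what is carried by $R_i(w)=r$, regardless of which aggregates the mechanism is run at. Because $Y_i(w,a)$, $R_i(w)$, and the tie-break are all invariant to $(\theta,s_{W})$ by Assumptions~\ref{ass:random_assignment} and~\ref{ass:structural_invariance}, this conditional law is the same in the counterfactual world, so $f^{\text{count}}_{Y|A,R,W}=f^{\text{obs}}_{Y|A,R,W}$. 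Multiplying the four identified factors and evaluating at the observed realization $(Y_i,A_i,R_i,W_i)$ yields the claim; the ``$P^{\text{obs}}_{D}$-almost surely'' qualifier is merely the usual caveat that these conditional densities are pinned down only up to $P^{\text{obs}}_{D}$-null sets.
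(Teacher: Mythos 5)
Your proof is correct and follows essentially the same route the paper takes implicitly (the paper treats the factorization as holding ``by construction'' and lets Assumptions~\ref{ass:mechanism}--\ref{ass:structural_invariance} pin down each factor without a separate formal argument); your explicit handling of the outcome factor---recognizing that conditioning on the endogenous $A_i$ is harmless because, given $(R_i,W_i)$, the realized allocation depends only on mechanism randomization that is independent of the structural primitives---is the one genuinely non-trivial step, and you resolve it correctly. One small interpretive point: the ``$P_{D}^{obs}$-almost surely'' qualifier is not only the usual null-set caveat for conditional densities; as the discussion immediately following the proposition emphasizes, its substantive content is that the formula constrains $P_{D}^{count}(\theta,s_W)$ only on the support of the observed data, which is precisely what creates the identification gap the rest of the paper is built to bridge.
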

Proposition~\ref{prop:id_sup} provides a crucial link between the observed data and the counterfactual world. The expression reveals that a policy reform propagates through two distinct channels: (1) a direct effect on agents from the change in the policy distribution itself, from $f_{W}^{obs}(w)$ to $f_W(w|\theta,s_{W})$; and (2) an indirect equilibrium effect that operates through the allocation rule $\mu_{a}(\cdot)$, which is shifted by changes in both the market-clearing parameters, $\mathbf{c}(P_{R|\theta, s_{W}})$, and the aggregate report distribution, $P_{R|\theta, s_{W}}$.

The key challenge is that Proposition \ref{prop:id_sup} only restricts the distribution of $P^{count}_{D}(\theta,s_{W})$ on the support of the distribution of the observed data, $P_{D}^{obs}$. To see why this creates a problem, note that in most economically relevant examples, a small change in $\mathbf{c}$ induces a discontinuous change in allocation for some agents. This implies that the observed data contains no direct information about the outcomes of these agents under their newly assigned allocations. The analysis in the next section introduces the key assumptions that allow us to bridge this identification gap.
\section{The Marginal Policy Effect}\label{sec:derivation}

This section derives our main identification result for the Marginal Policy Effect (MPE). The derivation must confront the central technical challenge of this environment: the inherent discontinuities in centralized allocation mechanisms. We begin by showing why these discontinuities violate the core smoothness assumptions of standard statistical methods, such as a score-based decomposition, rendering them insufficient for identifying the total MPE. To overcome this, our analysis proceeds in two steps. First, we dissect the indirect equilibrium effect into two identified economic forces: a competition effect, from shifting reports, and a market conduct effect, from the response of the clearing parameters. Second, we show how these components can be combined to construct our central result: a single, policy-invariant structural object for each agent—the equilibrium-adjusted outcome, $\Psi_i^{\text{total}}$. This variable captures an agent's full contribution to welfare, including the market externalities they generate, and ultimately reduces the MPE of any policy to a simple covariance with the policy's score. Our discussion proceeds informally to build intuition; Appendix~\ref{app:mpe_proof} collects the regularity conditions and formal proofs.

\subsection{A Score-Based Decomposition}

To analyze the MPE, we begin with a standard method for evaluating the impact of a marginal perturbation of a distribution: a score-based decomposition. If the joint density of the data, $f_D^{count}(y, a, r, w | \theta,s_{W})$, varies smoothly, the effect on welfare can be expressed as the covariance between the outcome and the model's score.\footnote{Formally, this requires the family of densities to be Differentiable in Quadratic Mean (DQM), a central concept in asymptotic statistics; see \citet{van2000asymptotic}. The interchange of differentiation and integration is permissible because the mean is a differentiable functional, subject to mild moment conditions \citep{van1991differentiable}.} The derivative of the aggregate welfare function is then given by:
\begin{align*}
    \mathcal{U}'(s_{W}) = \frac{\partial}{\partial \theta} \mathbb{E}_{(\theta,s_{W})}[Y_i] \bigg|_{\theta=0} = \mathbb{E}_{0}[Y_i s_{D}(Y_i, A_i, R_i, W_i | s_{W})],
\end{align*}
where the score, $s_{D}(Y_{i},A_i, R_i, W_i|s_{W}) = \frac{\partial}{\partial \theta} \log f_{D}^{count}(Y_{i},A_i, R_i, W_i|\theta,s_{W}) \big|_{\theta=0}$, measures the sensitivity of an observation's log-likelihood to the policy change.

This approach is powerful because it yields a highly intuitive decomposition of the total welfare effect. Based on our factorization of the data-generating process from Proposition~\ref{prop:id_sup}, the score is additive in its components, allowing us to write:
\begin{equation} \label{eq:welfare_decomp}
     \mathcal{U}'(s_{W}) = \underbrace{\mathbb{E}_{0}[Y_i s_{W}(W_i)]}_{\text{Direct Effect}} + \underbrace{\mathbb{E}_{0}[Y_i s_{A|R}(A_i|R_i, s_{W})]}_{\text{Indirect Effect}}.
\end{equation}
This decomposition, a specific application of the concepts introduced by \citet{hu2022average}, separates the MPE into two channels. The first term is a direct effect: the impact of perturbing the policy instrument $W_i$, holding the market's allocation rule fixed. The second is an indirect effect, which captures the equilibrium consequences of the policy as the allocation mechanism adjusts to the change.

The decomposition \eqref{eq:welfare_decomp}, however, rests on the critical assumption of smoothness, which, as already foreshadowed by the discussion following Proposition~\ref{prop:id_sup}, fails in the settings we study. The reason is fundamental to the nature of centralized markets: their allocation rules are often discontinuous. For instance, a school admission rule is a step function of a student's test score. A marginal policy reform that tightens admission standards by infinitesimally raising the cutoff score has no effect on most students, but it has a discrete and dramatic effect on students right at the original cutoff, who now lose their seats. This economic discontinuity breaks the mathematics behind the standard score-based machinery. Because the allocation probability, $\mu_a(r, \mathbf{c}(\theta), P_{R|\theta})$, does not change smoothly with a reform that moves a hard threshold, the allocation score, $s_{A|R}(A_i|R_i, s_{W})$, is not a well-defined random variable that we can evaluate for each agent. The standard approach, which relies on this score, cannot be directly applied.\footnote{Formally, the family is not DQM because a first-order change in the cutoff induces a first-order change in the support of the distribution of the data.} 

Despite this challenge, the decomposition in \eqref{eq:welfare_decomp} remains a valuable conceptual tool. In particular, the first term---the direct effect---is well-defined and identifiable. This is because the policymaker controls the perturbation to the policy distribution, $f(w|\theta,s_{W})$, and can ensure that the policy score $s_{W}(W_i)$ is well-behaved. For example, if $W_i \in \{0,1\}$ is a binary treatment with baseline probability $\pi_0 = \E_{0}[W_i]$, the score is proportional to $\frac{W_i}{\pi_0} - \frac{1-W_i}{1-\pi_0}$, and the direct effect becomes:
\begin{align*}
    \mathbb{E}_{0}[Y_i s_{W}(W_i)] \propto \E_{0}[Y_i|W_i = 1] - \E_{0}[Y_i|W_i = 0].
\end{align*}
Under random assignment (Assumption~\ref{ass:random_assignment}), this corresponds to the average treatment effect for the potential outcome $Y_i(w, A_i(w))$. Here $A_i(w)$ is a random variable with the distribution $\mu_a(R_i(w), \mathbf{c}_{0}, P_{R|0})$. 
This direct effect is therefore the average causal effect of the policy on the final outcome $Y_i$ in the baseline market equilibrium. 

The central technical challenge of our analysis, therefore, is to characterize and identify the indirect effect when the allocation score is ill-defined. The intuition for our approach, which we discuss in detail in the next section, is that while we cannot measure the effect of the market response on all agents, we can identify it by focusing precisely on those at the margin of their allocation---the very agents for whom an infinitesimal change in market conditions alters their allocation.

\subsection{The Indirect Effect: Competition and Market Conduct}

The indirect effect operates through the two distinct economic channels foreshadowed in the introduction. A policy reform alters the distribution of reports, changing the competitive environment. This leads to a competition effect. In response, the market's conduct adjusts the clearing parameters $\mathbf{c}$, leading to a market conduct effect. To formalize these, we define a counterfactual welfare function that depends on arbitrary clearing parameters $(\mathbf{c}, P)$:
\begin{align*}
    \mathcal{U}(\mathbf{c}, P) := \mathbb{E}_{0}\left[\sum_{a=0}^K m_a(R_i) \mu_a(R_i, \mathbf{c}, P)\right],
\end{align*}
where $m_{a}(r):= \mathbb{E}_0[Y_i(W_i,a)|R_i = r]$. Applying the chain rule to the aggregate welfare function $\mathcal{U}(\mathbf{c}_0, P_{R|0})$ decomposes the indirect effect:
\begin{align*}
 \text{Indirect Effect} = \underbrace{ \nabla_{\mathbf{c}}\mathcal{U}(\mathbf{c}_0, P_{R|0}) \cdot \mathbf{c}'[s_{R}] }_{\text{Market Conduct Effect}} + \underbrace{D_{P}\mathcal{U}(\mathbf{c}_0, P_{R|0})[ s_{R}]}_{\text{Competition Effect}}.
\end{align*}
This decomposition is driven by the report score $s_{R}(R_i) := \mathbb{E}_{0}[s_{W}(W_i)|R_i]$. The remainder of this section is dedicated to characterizing these two effects. To do so, our strategy is to first impose a structure on the allocation rule that isolates the source of the discontinuity.

\begin{assumption}[Well-Behaved Mechanism]\label{ass:mechanism_structure}
The allocation probability, $\mu_a$, can be decomposed into a smooth component, $h_a$, and a sharp eligibility boundary, $\phi_a$, with the following properties:
\begin{enumerate}
    \item[\textbf{(a)}] The allocation rule takes the form:
    \begin{equation*}
        \mu_{a}(R_i, \mathbf{c}, P) = h_a(R_{i}, \mathbf{c}, P) \ind\{ \phi_{a}(R_{i}, \mathbf{c}) \ge 0\}.
    \end{equation*}
    \item[\textbf{(b)}] The conditional allocation probability, $h_a(R_i, \mathbf{c}, P)$, is a smooth function of the market-clearing parameters, $\mathbf{c}$, and the aggregate report distribution, $P$.
    \item[\textbf{(c)}] The eligibility index, $\phi_a(R_i, \mathbf{c})$, is a smooth function of the clearing parameters, $\mathbf{c}$. Crucially, it does not depend on the aggregate report distribution $P$.
\end{enumerate}
\end{assumption}

This structure is general enough to capture a wide range of common market designs, including all examples discussed in this paper. Intuitively, it represents the allocation as a combination of a lottery and a cutoff rule. An agent must first pass a hard eligibility threshold determined by the cutoff rule ($\phi_a \ge 0$). Conditional on being eligible, they are then assigned the good with some probability determined by the lottery ($h_a > 0$). This structure ensures that while the overall competitive environment ($P$) affects allocation probabilities smoothly through the $h_a$ term, sharp discontinuities are driven solely by the interaction of the clearing parameter $\mathbf{c}$ with agent reports at the eligibility boundary.

\subsubsection{The Competition Effect}

The competition effect captures the welfare impact of the shift in the distribution of reports, holding the clearing parameters fixed. To quantify this, we consider how perturbing the density of one agent's report, $r'$, affects the allocation probability of another agent with report $r$. This peer externality is captured by a functional derivative of $h_a$ with respect to $P$, which we denote $L_{a}(r, r')$. The total competition effect is the expected impact of this change on welfare:
\begin{align*}
    D_{P}\mathcal{U}(\mathbf{c}_0, P_{R|0})[ s_{R}] = \mathbb{E}_{0}\left[\sum_{a=0}^K m_a(R_i) \mathbb{E}_{0}[L_{a}(R_i, R_j)s_{R}(R_j)|R_i]\right],
\end{align*}
where $R_j$ is an independent copy of $R_i$. This expression presents a potential identification challenge, as it depends on the unobserved conditional mean of the potential outcome, $m_a(R_i)$. However, our mechanism structure resolves this: if an agent is ineligible for good $a$, a marginal change in others' reports cannot make them eligible, meaning the spillover effect must also be zero ($L_a=0$). This feature allows us to use an inverse probability weighting approach to identify the competition effect. Since the terms in the sum are non-zero only when $\mu_a(R_i) > 0$, we can substitute the observed outcome $Y_i$ for the unobserved mean, which yields an identified expression:
\begin{align*}
    D_{P}\mathcal{U}(\mathbf{c}_0, P_{R|0})[s_{R}] = \mathbb{E}_{0}[\gamma(R_j) s_{W}(W_j)],
\end{align*}
where $\gamma(R_j) := \mathbb{E}_{0}\left[Y_i\frac{L_{A_i}(R_i, R_j)}{\mu_{A_i}(R_i, \mathbf{c}_0, P_{R|0})}|R_j\right]$. This term, $\gamma(R_j)$, is the average welfare spillover that an agent with report $R_j$ imposes on others through pure competition, holding the market's conduct fixed. Since the functional derivative $L_{A_i}$ is known from the mechanism rule (Assumption~\ref{ass:mechanism_structure}), this entire expression is identified from the data.

\subsubsection{The Market Conduct Effect}

The market conduct effect, $\nabla_{\mathbf{c}}\mathcal{U}(\mathbf{c}_0, P_{R|0}) \cdot \mathbf{c}'[s_{R}]$, captures the total welfare impact from the endogenous response of the market-clearing parameters. Characterizing it requires understanding two building blocks: first, how aggregate welfare responds to an infinitesimal change in the clearing parameters ($\nabla_{\mathbf{c}}\mathcal{U}$), and second, how the clearing parameters themselves respond to the policy reform ($\mathbf{c}'[s_{R}]$).

\paragraph{The Welfare Response to Clearing Parameters ($\nabla_{\mathbf{c}}\mathcal{U}$).} The gradient $\nabla_{\mathbf{c}}\mathcal{U}$ captures the effect of tightening or loosening the market's eligibility constraints. An infinitesimal change in $\mathbf{c}$ affects welfare through two channels: a smooth change for inframarginal agents (via $h_a$) and a discontinuous change for marginal agents at the eligibility boundary (via $\phi_a$). To formally analyze these marginal agents, we require the following regularity condition.

\begin{assumption}[Marginal Agents]\label{ass:marginal}
For each $a \in \mathcal{A}$:
\begin{enumerate}
    \item[\textbf{(a)}] The report $R_i$ consists of two components, $(R_{i,un}, R_{i,cont})$, such that conditional on $R_{i,un}$ the distribution of $R_{i,cont}$ is absolutely continuous with respect to the Lebesgue measure.
    \item[\textbf{(b)}] The functions $h_a$ and $\phi_a$ are smooth in $r_{cont}$.
    \item[\textbf{(c)}] The continuous report component has a non-degenerate effect on eligibility, 
    \begin{align*}
        \| \nabla_{r_{cont}}\phi_{a}(r_{un},r_{cont}, \mathbf{c})\|_{2} > 0.
    \end{align*}
    \item[\textbf{(d)}] The conditional mean potential outcome, $m_a(r) := \mathbb{E}_0[Y_i(W_i,a)|R_i = r]$, is continuous in $r_{cont}$.
\end{enumerate}
\end{assumption}

This assumption ensures that the concept of "marginal agents" is well-defined and provides the necessary regularity to identify the welfare impact at the boundary. The gradient $\nabla_{\mathbf{c}}\mathcal{U}$ is the sum of the effects on these two groups. Let $\Xi_a(R_i)$ be the welfare impact for agents at the margin of allocation $a$:

\begin{equation*}
 \Xi_a(R_i) := m_{a}(R_i)h_{a}(R_i, \mathbf{c}_0, P_{R|0})\nabla_{\mathbf{c}} \phi_{a}(R_i, \mathbf{c}_0).
\end{equation*}
The total gradient is then given by:
\begin{align*}
    \nabla_{\mathbf{c}}\mathcal{U}(\mathbf{c}_0, P_{R|0}) = & \underbrace{\sum_{a=0}^K \mathbb{E}_0\left[\mathbb{E}_{0}\left[\Xi_a(R_i) \bigg| \phi_{a}(R_i,\mathbf{c}_0) = 0,R_{i,un}\right]f_{\phi_{a}|R_{un}}(0|R_{i,un})\right]}_{\text{Marginal (RDD) Effect}} \\
    & + \underbrace{\mathbb{E}_{0}\left[Y_i \frac{\nabla_{\mathbf{c}} h_{A_i}(R_i, \mathbf{c}_0, P_{R|0})}{h_{A_i}(R_i, \mathbf{c}_0, P_{R|0})}\right]}_{\text{Inframarginal Effect}}
\end{align*}
Each component of this gradient is identified from the data. The inframarginal effect is a standard expectation over observed quantities, using the known functional form of $h_a$. The marginal effect is a sum of RDD effects, which are identified under Assumption~\ref{ass:marginal} from local comparisons of agents at the eligibility boundary. As we will demonstrate concretely in our examples in Section~\ref{sec:examples}, these boundary terms often correspond directly to the LATEs that are the focus of the empirical market design literature. Our framework thus clarifies that these well-studied parameters are not merely reduced-form objects but are, in fact, essential structural inputs for any equilibrium analysis. 

\paragraph{The Response of Clearing Parameters ($\mathbf{c}'[s_R]$).} The second building block, the derivative of the market conduct rule $\mathbf{c}'(\cdot)$, describes how the clearing parameters themselves respond to a policy reform. This response depends critically on the nature of the rule, and we highlight two canonical cases that correspond to distinct economic environments.

\textbf{Case 1: Competitive Equilibrium.} In many markets, the clearing parameters passively adjust to satisfy exogenous constraints, such as fixed supply. This is analogous to a competitive equilibrium where prices clear the market. As shown in our fixed-supply example from Section~\ref{sec:framework}, the conduct rule $\mathbb{E}_{P_{R}}[\mu_{a}(\mathbf{c},R_i)] = q_a$ depends on integrals over the entire report distribution. This property ensures the derivative $\mathbf{c}'[s_R]$ is a continuous linear functional in the standard space of square-integrable functions, $L_2$. This means the derivative can be represented by a familiar influence function through a standard expectation:
\begin{equation*}
    \mathbf{c}'[s_R] = \mathbb{E}_{0}[\boldsymbol{\psi}_{\mathbf{c}_0}(R_i)s_{R}(R_i)].
\end{equation*}
The influence function $\boldsymbol{\psi}_{\mathbf{c}_0}(R_i)$ is derived from the implicit function theorem and its form is determined by the local structure of the market at the baseline equilibrium:
\begin{equation*}
    \boldsymbol{\psi}_{\mathbf{c}_0}(R_i) = - \mathbf{J}_0^{-1}\begin{pmatrix}
        \mu_{1}(\mathbf{c}_0,R_i)\\ \vdots \\ \mu_{K}(\mathbf{c}_0,R_i)
    \end{pmatrix}.
\end{equation*}
Here, the matrix $\mathbf{J}_0$ is the Jacobian of the vector of aggregate market shares with respect to $\mathbf{c}$. This object, which is identified from the data given Assumption~\ref{ass:mechanism}, captures how a small change in the clearing parameters affects aggregate demand.

\textbf{Case 2: Monopoly.} In other settings, the market maker is an active agent who sets parameters to optimize an objective, such as maximizing revenue. This is analogous to a monopolist's problem. Our optimal reserve price example, where the rule $(1-F_{R}(c)) - c f_{R}(c) = 0$ is the seller's first-order condition, illustrates this case. Here, the rule's dependence on the probability density function, $f_R(c)$, at the specific point $c$ makes the operator mathematically ill-behaved in the standard $L_2$ space. Handling such cases requires restricting the analysis to smoother policy reforms, which is achieved by working in a different function space (a Sobolev space).

The contrast between these two economically distinct cases—passive market clearing versus active optimization—motivates our general approach, which we formalize next.

\begin{assumption}[Differentiability of the Market Conduct Rule]\label{ass:c_prime}
    The market conduct rule $\mathbf{c}(\cdot)$ is differentiable at $P_{R|0}$ with respect to a tangent set of scores in a Hilbert space $\mathcal{H}_{R}$. Its derivative has the representation:
    \begin{equation*}
        \mathbf{c}^{\prime}[s_{R}] = \langle \boldsymbol{\psi}_{\mathbf{c}_0}, s_{R}\rangle_{\mathcal{H}_R},
    \end{equation*}
    where $\langle \cdot , \cdot \rangle_{\mathcal{H}_R}$ is the inner product on $\mathcal{H}_{R}$ and $\boldsymbol{\psi}_{\mathbf{c}_0}$ is the representer of the derivative.
\end{assumption}
The function $\boldsymbol{\psi}_{\mathbf{c}_0}(R_i)$ is the influence function of the market conduct rule, generalized to the appropriate space $\mathcal{H}_R$.

\subsection{The Equilibrium-Adjusted Outcome}
Having established that each building block of the indirect effect---the competition externality $\gamma(R_i)$, the welfare gradient $\nabla_{\mathbf{c}}\mathcal{U}$, and the influence function of the market conduct rule $\boldsymbol{\psi}_{\mathbf{c}_0}$---is identified from the data under our assumptions, we can now combine them to state our main result. The total MPE is the sum of the direct effect and the two components of the indirect effect. The technical subtleties of the market conduct rule prevent the entire MPE from always being expressed as a single covariance. Our main theorem, therefore, presents the MPE in a more general form that respects this distinction.

\begin{theorem}[The Marginal Policy Effect]\label{th:mpe_general}
Suppose Assumptions \ref{ass:mechanism}-\ref{ass:structural_invariance} and \ref{ass:mechanism_structure}-\ref{ass:c_prime} hold and for each $a \in \mathcal{A}$ the conditional expectation function, $m_{a}(r_{un},r_{cont})$, is bounded and continuous in $r_{cont}$. Then, the MPE is identified and can be expressed as:
\begin{align*}
\mathcal{U}'(s_W) = \mathbb{E}_{0}[\Psi_i^{\text{fixed}} s_W(W_i)] + \langle \Psi^{\text{conduct}}, s_R \rangle_{\mathcal{H}_R}
\end{align*}
where $\Psi_i^{\text{fixed}}$ represents the portion of an agent's welfare contribution independent of the market's conduct:
\begin{equation*}
    \Psi_i^{\text{fixed}} = \underbrace{Y_i}_{\text{Private Outcome}} + \underbrace{\gamma(R_i)}_{\text{Competition Externality}},
\end{equation*}
and $\Psi^{\text{conduct}}$ captures the market conduct externality:
\begin{equation*}
 \Psi^{\text{conduct}}(R_i) := \nabla_{\mathbf{c}}\mathcal{U}(\mathbf{c}_0, P_{R|0}) \cdot \boldsymbol{\psi}_{\mathbf{c}_0}(R_i).
\end{equation*}
\end{theorem}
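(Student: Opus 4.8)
\emph{Proof idea.} The plan is to bypass the score-based decomposition \eqref{eq:welfare_decomp}, whose allocation-score term is ill-defined, and instead differentiate the counterfactual welfare directly through the structured parametrization of Proposition~\ref{prop:id_sup}. Integrating $Y_i$ out of $f_{D}^{\text{count}}(\cdot\mid\theta,s_W)$ against the policy-invariant conditional $f^{\text{obs}}_{Y|A,R,W}$, the welfare $\mathcal{U}(\theta,s_W)$ depends on $\theta$ only through three objects: the policy density $f_W(\cdot\mid\theta,s_W)$, the report distribution $P_{R|\theta,s_W}$, and the clearing parameter $\mathbf{c}(P_{R|\theta,s_W})$. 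Separating the ``outer'' reweighting of the joint law of $(R_i,W_i)$, carried by $f_W$, from the two arguments that enter the allocation rule $\mu_a$, one gets
$$
\mathcal{U}'(s_W) \;=\; \big[\text{derivative through } f_W\big] \;+\; \tfrac{d}{d\theta}\,\mathcal{U}\big(\mathbf{c}(P_{R|\theta,s_W}),\,P_{R|\theta,s_W}\big)\Big|_{\theta=0},
$$
with $\mathcal{U}(\mathbf{c},P)$ the counterfactual welfare defined in Section~\ref{sec:derivation}. The first bracket is the direct effect: since the policy family is under the researcher's control it is differentiable in quadratic mean with bounded score $s_W$, and a routine computation---using that, given $(R_i,W_i)$, the allocation is drawn from $\mu_a(R_i,\mathbf{c}_0,P_{R|0})$ by Assumption~\ref{ass:mechanism}---collapses it to $\mathbb{E}_{0}[Y_i s_W(W_i)]$.

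For the second term I would apply the chain rule. The induced report family inherits the score $s_R(r)=\mathbb{E}_{0}[s_W(W_i)\mid R_i=r]$ by integrating the fixed reporting rule $f^{\text{obs}}_{R|W}$ against $f_W(\cdot\mid\theta,s_W)$; Assumption~\ref{ass:c_prime} makes $\theta\mapsto\mathbf{c}(P_{R|\theta,s_W})$ differentiable along the tangent set $\mathcal{H}_R$ with $\mathbf{c}'[s_R]=\langle\boldsymbol{\psi}_{\mathbf{c}_0},s_R\rangle_{\mathcal{H}_R}$; and $(\mathbf{c},P)\mapsto\mathcal{U}(\mathbf{c},P)$ is differentiable at $(\mathbf{c}_0,P_{R|0})$---the crux of the argument, addressed below. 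Composing yields the market-conduct term $\nabla_{\mathbf{c}}\mathcal{U}(\mathbf{c}_0,P_{R|0})\cdot\mathbf{c}'[s_R]$ plus the competition term $D_{P}\mathcal{U}(\mathbf{c}_0,P_{R|0})[s_R]$, which I then replace by their identified forms from Section~\ref{sec:derivation}. Because Assumption~\ref{ass:mechanism_structure}(a) writes $\mu_a=h_a\,\ind\{\phi_a\ge0\}$, the functional derivative $L_a$ of $h_a$ with respect to $P$ carries the factor $\ind\{\phi_a(R_i,\mathbf{c}_0)\ge0\}$ and so vanishes on the ineligible set; this lets me replace the unobserved $m_a(R_i)$ by the observed $Y_i$ in an inverse-probability-weighted representation, giving $D_{P}\mathcal{U}(\mathbf{c}_0,P_{R|0})[s_R]=\mathbb{E}_{0}[\gamma(R_j)s_W(W_j)]=\mathbb{E}_{0}[\gamma(R_i)s_W(W_i)]$, the last step because $\gamma$ depends only on its report argument and $s_R(R_i)=\mathbb{E}_{0}[s_W(W_i)\mid R_i]$. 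For the conduct term, Assumption~\ref{ass:marginal} licenses differentiating $\mathcal{U}(\mathbf{c},P_{R|0})=\mathbb{E}_{0}\big[\sum_{a=0}^K m_a(R_i)h_a(R_i,\mathbf{c},P_{R|0})\ind\{\phi_a(R_i,\mathbf{c})\ge0\}\big]$ in $\mathbf{c}$ via a co-area / change-of-variables argument, producing the marginal (RDD) surface integral over $\{\phi_a=0\}$ with integrand $\Xi_a$ plus the inframarginal interior term---exactly the stated $\nabla_{\mathbf{c}}\mathcal{U}(\mathbf{c}_0,P_{R|0})$. Pairing this constant vector with $\mathbf{c}'[s_R]$ and pulling the scalar inside the inner product gives $\langle\nabla_{\mathbf{c}}\mathcal{U}(\mathbf{c}_0,P_{R|0})\cdot\boldsymbol{\psi}_{\mathbf{c}_0},\,s_R\rangle_{\mathcal{H}_R}=\langle\Psi^{\text{conduct}},s_R\rangle_{\mathcal{H}_R}$.

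Collecting the direct and competition pieces, $\mathbb{E}_{0}[Y_i s_W(W_i)]+\mathbb{E}_{0}[\gamma(R_i)s_W(W_i)]=\mathbb{E}_{0}[(Y_i+\gamma(R_i))s_W(W_i)]=\mathbb{E}_{0}[\Psi_i^{\text{fixed}}s_W(W_i)]$, which together with the conduct term is the displayed identity. Identification is then immediate: every input is a functional of $P_{D}^{\text{obs}}$---$Y_i$ is observed; $h_a$, $\phi_a$, $L_a$, the demand Jacobian and its representer $\boldsymbol{\psi}_{\mathbf{c}_0}$ are known from Assumptions~\ref{ass:mechanism}, \ref{ass:equilibrium_adj}, \ref{ass:mechanism_structure} and \ref{ass:c_prime}; the RDD surface integrals are identified under Assumption~\ref{ass:marginal} from local comparisons at $\{\phi_a=0\}$; and the remaining conditional expectations are observed moments.

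The main obstacle is the differentiability of the aggregate functional $(\mathbf{c},P)\mapsto\mathcal{U}(\mathbf{c},P)$ at $(\mathbf{c}_0,P_{R|0})$, since $\mu_a$ is discontinuous in $\mathbf{c}$ across the eligibility boundary---precisely the failure of the smoothness condition flagged after Proposition~\ref{prop:id_sup}. The substantive step is to show that integrating $\ind\{\phi_a(R_i,\mathbf{c})\ge0\}$ against the conditionally absolutely continuous law of $R_{i,cont}$ (Assumption~\ref{ass:marginal}(a)) yields a function of $\mathbf{c}$ that is differentiable, with derivative equal to the surface integral $\mathbb{E}_{0}[\Xi_a(R_i)\mid\phi_a(R_i,\mathbf{c}_0)=0,R_{i,un}]f_{\phi_a|R_{un}}(0\mid R_{i,un})$, and to justify the interchange of differentiation and integration uniformly---this is where the explicit hypotheses that $m_a$ be bounded and continuous in $r_{cont}$, that $\|\nabla_{r_{cont}}\phi_a\|_2>0$, and that $h_a,\phi_a$ be smooth in $r_{cont}$ do the work. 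A secondary subtlety is that $\theta\mapsto\mathbf{c}(P_{R|\theta,s_W})$ is differentiable only along directions lying in $\mathcal{H}_R$ (e.g.\ the Sobolev space forced by the monopoly case of Section~\ref{sec:derivation}), which is exactly why the conduct term cannot in general be folded into a single $L_2$ covariance and must be kept as a separate inner product---the source of the two-piece form of the theorem.
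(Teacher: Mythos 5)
Your proposal follows essentially the same route as the paper's proof in Appendix~\ref{app:mpe_proof}: splitting the derivative into a direct effect plus the derivative of an auxiliary functional that evolves only the mechanism (the paper's $\mathcal{U}_{\text{part}}$), applying the chain rule to obtain the competition and market-conduct pieces, deriving the boundary (RDD) term via a moving-domain/shape-derivative argument (Theorem~\ref{th:main_tech}), and using the same inverse-probability-weighting identity to replace $m_a(R_i)$ by $Y_i$. The only step you gloss over is the content of the paper's Lemma on the decomposition of the MPE, which verifies through the difference quotient that the cross term between the policy-density perturbation and the mechanism perturbation is second order, but this is a matter of rigor rather than of approach.
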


Theorem~\ref{th:mpe_general} provides a universally applicable formula for the MPE. Its two-part structure cleanly separates the welfare change into components that can be analyzed using standard covariance-based methods and a component that depends on the specific geometry of the policy space, $\mathcal{H}_R$. This general form simplifies if the market conduct rule is differentiable in the standard $L_2$ space.

\begin{corollary}[The Equilibrium-Adjusted Outcome]\label{cor:psi}
If the market conduct rule $\mathbf{c}(P)$ is differentiable in $\mathcal{H}_R = L_2(R_i)$, then the MPE from Theorem~\ref{th:mpe_general} can be written as:
\begin{align*}
\mathcal{U}'(s_{W}) = \mathbb{E}_{0}[\Psi_i^{\text{total}} s_W(W_i)],
\end{align*}
where $\Psi_i^{\text{total}}$ is the score-independent equilibrium-adjusted outcome. It represents an agent's total contribution to welfare:
\begin{equation*}
    \Psi_i^{\text{total}} = \Psi_i^{\text{fixed}} + \Psi^{\text{conduct}}(R_i).
\end{equation*}
\end{corollary}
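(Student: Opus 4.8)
The plan is to obtain the statement as a direct specialization of Theorem~\ref{th:mpe_general}, which already supplies the decomposition
\[
\mathcal{U}'(s_W) = \E_{0}[\Psi_i^{\text{fixed}} s_W(W_i)] + \langle \Psi^{\text{conduct}}, s_R \rangle_{\mathcal{H}_R}.
\]
The only work left is to show that, once $\mathcal{H}_R = L_2(R_i)$, the conduct term is itself an expectation against the policy score $s_W$, so that the two pieces merge by linearity of $\E_{0}$ into a single covariance.

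First I would use the hypothesis $\mathcal{H}_R = L_2(R_i)$ to identify the abstract pairing of Assumption~\ref{ass:c_prime} with the ordinary $L_2$ inner product. In this case the representer $\boldsymbol{\psi}_{\mathbf{c}_0}$ is a genuine square-integrable function of $R_i$ (the ``Case~1'' situation, where it is pinned down in closed form via the implicit function theorem), and hence $\Psi^{\text{conduct}}(R_i) = \nabla_{\mathbf{c}}\mathcal{U}(\mathbf{c}_0, P_{R|0}) \cdot \boldsymbol{\psi}_{\mathbf{c}_0}(R_i) \in L_2(R_i)$, since $\nabla_{\mathbf{c}}\mathcal{U}$ is a fixed finite vector under the boundedness of $m_a$ and the regularity imposed in Assumptions~\ref{ass:mechanism_structure}--\ref{ass:marginal}. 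Therefore $\langle \Psi^{\text{conduct}}, s_R \rangle_{\mathcal{H}_R} = \E_{0}[\Psi^{\text{conduct}}(R_i) s_R(R_i)]$.

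Next I would eliminate $s_R$ in favour of $s_W$. By definition $s_R(R_i) = \E_{0}[s_W(W_i) \mid R_i]$, and $\Psi^{\text{conduct}}(R_i)$ is $R_i$-measurable, so the tower property gives $\E_{0}[\Psi^{\text{conduct}}(R_i) s_R(R_i)] = \E_{0}[\Psi^{\text{conduct}}(R_i) \E_{0}[s_W(W_i) \mid R_i]] = \E_{0}[\Psi^{\text{conduct}}(R_i) s_W(W_i)]$. Substituting back and combining with the fixed term,
\[
\mathcal{U}'(s_W) = \E_{0}[(\Psi_i^{\text{fixed}} + \Psi^{\text{conduct}}(R_i)) s_W(W_i)] = \E_{0}[\Psi_i^{\text{total}} s_W(W_i)],
\]
with $\Psi_i^{\text{total}} := \Psi_i^{\text{fixed}} + \Psi^{\text{conduct}}(R_i)$ visibly independent of the score $s_W$, which is the claim.

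There is no substantive obstacle beyond Theorem~\ref{th:mpe_general} itself; what remains is bookkeeping. The one point to verify is that the merged integrand is integrable, i.e.\ $\Psi_i^{\text{total}} s_W(W_i) \in L_1$: this follows because $Y_i$ has a finite mean under the maintained moment conditions, $\gamma(R_i)$ and $\Psi^{\text{conduct}}(R_i)$ lie in $L_2(R_i)$, and $s_W(W_i)$ is a valid score (hence square-integrable), so Cauchy--Schwarz closes the argument. It is also worth recording that the identity is an equality of derivatives at $\theta = 0$ along the given path, so it holds for each admissible score $s_W$ regardless of the particular parametric embedding $P_{W|\theta,s_W}$ used to realize it.
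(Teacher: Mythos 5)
Your proposal is correct and follows essentially the same route as the paper: specialize the abstract pairing of Assumption~\ref{ass:c_prime} to the $L_2$ inner product, then use $s_R(R_i)=\E_0[s_W(W_i)\mid R_i]$ (the paper's Lemma on the induced score) together with the tower property to convert $\E_0[\Psi^{\text{conduct}}(R_i)s_R(R_i)]$ into $\E_0[\Psi^{\text{conduct}}(R_i)s_W(W_i)]$ and merge it with the fixed term. The integrability check via Cauchy--Schwarz is a reasonable piece of bookkeeping that the paper leaves implicit.
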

This corollary recovers the powerful intuition from our motivating discussion. In many common environments, all complex market interactions can be summarized by a single, policy-invariant structural object, $\Psi_i^{\text{total}}$. This object represents the correct welfare-relevant outcome for a policymaker. To find the welfare-maximizing local reform, one must simply find the policy score that has the highest covariance with this fixed, structural outcome.

The power of this "separation principle" is that it provides a unified foundation for addressing a range of practical policy questions. By isolating the full market structure in the single object $\Psi_i^{\text{total}}$, our framework provides a flexible tool for applied work. As we demonstrate in Section~\ref{sec:extensions}, this allows our framework to address complex empirical challenges, including optimal policy targeting and endogenous selection.
\section{Examples}\label{sec:examples}

In this section, we illustrate how our general framework applies to several canonical economic environments. These examples serve to build intuition by showing how the abstract components of the marginal policy effect map onto concrete, estimable quantities in specific models. The examples also clarify the conditions under which a researcher can identify the full welfare function versus only its local gradient. We begin with simple single-product markets and proceed to more complex, multi-product settings.

\subsection{Single Product with Random Rationing}

Consider a market for a single product ($\mathcal{A}=\{0,1\}$) with fixed supply $q$. The policy is a binary treatment, $W_i \in \{0,1\}$, and agents submit a preference report $R_i \in \{0,1\}$, where $R_i=1$ indicates a desire for the product. The product is allocated via random rationing, so the allocation probability is 
\begin{align*}
    \mu_1(R_i, c) = R_i \cdot c,
\end{align*}
where the rationing probability $c$ is the market-clearing parameter that adjusts to satisfy the fixed supply constraint
\begin{align*}
    \mathbb{E}_{(\theta,s_{W})}[\mu_1(R_i,c)] = q_1.
\end{align*}
The mechanism is strategy-proof, so this report reflects agents' true preferences. We assume excess demand at baseline, $\E_{0}[R_i] > q_1$.  The equilibrium constraint $\E_{(\theta,s_W})[R_i \cdot c] = q_1$ implies that $c(P_{R|\theta,s_W}) = q_1 / \E_{(\theta,s_W)}[R_i]$. A policy reform alters the share of agents demanding the product, which creates an equilibrium effect through the adjustment of $c(\theta)$.

We derive the equilibrium-adjusted outcome $\Psi_i^{\text{total}}$ by applying Corollary~\ref{cor:psi}. Since the allocation rule $\mu_1(R_i, c)$ depends only on an agent's own report and the clearing parameter $c$, and not on the aggregate report distribution $P_R$, the competition term, $\gamma(R_i)$, is zero. The indirect effect therefore operates entirely through the market conduct externality. Let $\tau(r) := \mathbb{E}_{\theta_0}[Y_i(W_i, 1) - Y_i(W_i, 0)|R_i = r]$ denote the average treatment effect of the allocation for agents with report $r$. The market conduct externality an agent imposes by demanding the good ($R_i=1$) simplifies to $\tau(1)$.

The resulting equilibrium-adjusted outcome is:
\begin{equation*}
    \Psi_i^{\text{total}} = Y_i - \tau(1) \cdot A_i.
\end{equation*}
The interpretation is direct: an agent's total contribution to welfare is their private outcome, $Y_i$, net of the externality they impose by receiving a unit of the scarce good ($A_i =1$). This externality is valued at $\tau(1)$, which represents the average welfare gain the good provides to the other potential recipients displaced at the margin.

This stylized example connects directly to empirical work. The term $\tau(1)$ is a policy-specific local average treatment effect. If the policy $W_i$ has no direct effect on outcomes ($Y_i(w,a) = Y_i(a)$), this term simplifies to the standard LATE identified in school choice lotteries (e.g., \citealp{abdulkadirouglu2017research, walters2018demand}). Our framework shows that this familiar estimand is not just a reduced form quantity; it is a structural object needed to conduct counterfactual policy analysis, echoing the approach in \citet{kline2016evaluating}.

\subsection{Price-Based Allocation}

We now consider a market where the report $R_i \in \mathbb{R}_{+}$ is a continuous valuation for a single product, and allocation is determined by a market-clearing price or cutoff, $c$. An agent receives the product if their valuation exceeds the price, so the allocation rule is the discontinuous function $\mu_{1}(R_i,c)= \ind\{R_i > c\}$. We assume the distribution of reports $R_i$ admits a continuous, positive density, $f(r)$. We also assume that the conditional mean functions $\{m_0(r), m_1(r)\}$ are continuous functions of reports. The market-clearing price $c(P_{R|\theta, s_W})$ is set to satisfy the supply constraint,
\begin{align*}
    \mathbb{E}_{(\theta,s_{W})}[\ind\{R_i > c\}] = q,
\end{align*}
which implies that $c(P_{R|\theta, s_W})$ is the $(1-q)$-quantile of the report distribution under policy regime $\theta$.

As in the random rationing case, the allocation rule does not depend on the aggregate report distribution $P_R$, so the competition externality term, $\gamma(R_i)$, is zero. The indirect effect operates entirely through the market conduct externality. The welfare gradient with respect to the cutoff is the effect on aggregate welfare of marginally raising the price, which under stated conditions is
\begin{align*}
    \nabla_{c}\mathcal{U}(\theta_0) = -[m_1(c_0) - m_0(c_0)]f(c_0).
\end{align*}
The market's response, $c'(\cdot)$ is given by the following influence function
\begin{equation*}
    \psi_{c_0}(R_i) = -\frac{\ind\{R_i > c\}}{f(c_0)} =  -\frac{A_i}{f(c_0)}
\end{equation*}
The density terms cancel, leaving a simple expression for the market conduct term:
\begin{align*}
    \Psi_i^{\text{total}} = Y_i - \tau(c_0) \cdot A_i.
\end{align*}
The simplicity of this final expression, where the density terms cancel, reveals a powerful economic intuition. The market conduct externality is the product of two opposing forces. The first is the aggregate welfare impact of a marginal increase in the cutoff, which is large when the density at the cutoff, $f(c_0)$, is high, as many agents are affected. The second is the influence of a single inframarginal agent on the equilibrium cutoff, which is small when the density $f(c_0)$ is high, as only a small price change is needed to displace one marginal agent to make room for them. These two effects, one proportional to the density and the other inversely proportional to it, exactly offset each other. The result is that the externality any inframarginal agent imposes is simply the welfare loss of the single agent at the margin that they displace, $-\tau(c_0)$. This last term, $\tau(c_0)$, is precisely the RDD estimand used to quantify the effects of charter schools (e.g., \citealp{abdulkadi̇rouglu2022breaking}). Our framework demonstrates that this RDD parameter can be directly used to compute the welfare consequences of any local policy.

\subsection{Second-Price Auction}

We now illustrate the full decomposition of the MPE in a second-price auction for a single good with $n$ i.i.d. participants. The second-price auction is strategy proof, which implies that bidding one's private valuation, $R_i \in \mathbb{R}_{+}$, is a dominant strategy. The platform sets a reserve price, $c$, to ensure the ex-ante probability of winning is a fixed quantity, $q$.\footnote{This is relevant in applications like sponsored search, where a platform may wish to display advertisements with a certain frequency.} An agent with valuation $R_i$ wins if they bid above the reserve price and have the highest bid among all participants, so their win probability is 
\begin{align*}
    \mu_1(R_i, c, P_R) = \ind\{R_i > c\} \cdot [F_{R}(R_i)]^{n-1}.
\end{align*} 

A policy that perturbs the distribution of valuations creates spillovers through two distinct channels. First, it affects the reserve price $c(P_{R|\theta, s_W})$ needed to meet the win-rate target---a market conduct effect. Second, it changes the distribution of competing bids, $P_{R|\theta,s_W}$, altering the win probability for all bidders above the reserve price. As a result, this example features a non-zero competition effect, $\gamma(R_i) \neq 0$.

To see how this competition effect is constructed, we first compute the functional derivative of the win probability, $L_1(r, r')$. A change in the density of bidders at value $r'$ only affects bidders with valuations $r > r'$, as it changes the value of the CDF $F_R(r)$. Applying the definition of the functional derivative yields:
\begin{equation*}
    L_1(r, r') = \frac{\partial \mu_1(r, c, P_R)}{\partial P_R(r')} = \ind\{r > c\} \cdot (n-1)[F_R(r)]^{n-2} \cdot \ind\{r' \le r\}.
\end{equation*}
Substituting this into the general formula for the competition externality from Section~\ref{sec:derivation}, $\mathbb{E}_{0}\left[Y_i\frac{L_{A_i}(R_i, R_j)}{\mu_{A_i}(R_i, \mathbf{c}_0, P_{R|0})}|R_j\right]$ gives the expression for $\gamma(R_i)$.

Combining these components yields the equilibrium-adjusted outcome, which is the sum of the private outcome and two distinct externality terms:
\begin{align*}
    \Psi_i^{\text{total}} = \underbrace{Y_i}_{\text{Private Outcome}} + \underbrace{\gamma(R_i)}_{\text{Competition Externality}} + \underbrace{\Psi^{\text{conduct}}(R_i)}_{\text{Market Conduct Externality}}.
\end{align*}
Here, the competition externality, $\gamma(R_i)$, is the welfare impact an agent's bid imposes on inframarginal competitors. It can be expressed intuitively using the maximum order statistic of the competing bids, $R_{(n-1)}$:
\begin{align*}
    \gamma(R_i) = \mathbb{E}_0\left[\tau(R_{(n-1)}) | R_{(n-1)} \ge \tilde{r}\right] \times \left(1 - F_{R|0}(\tilde{R}_i)^{n-1}\right), \quad \text{where } \tilde{R}_i = \max(c_0, R_i).
\end{align*}
This is the expected treatment effect for the winning competitor, conditional on them being a relevant threat (bidding above $\tilde{r}$), multiplied by the probability that such a threat exists. The market conduct externality, $\Psi^{\text{conduct}}(R_i)$, is the welfare impact from agent $i$'s influence on the equilibrium reserve price:
\begin{align*}
\Psi^{\text{conduct}}(R_i) = -\tau(c_0) F_{R|0}(c_0)^{n-1} \ind\{R_i > c_0\}.
\end{align*}

\subsubsection{Optimal reserve price}
Our analysis above focused on a $c(\cdot)$ that forces a fixed allocation probability. We now discuss a  more complex objective for the platform: setting the reserve price $c$ to maximize the expected revenue, following the principles of optimal auction design \citep{myerson1981optimal}. The optimal reserve price $c(P_R)$ is the solution to the first-order condition:
\begin{equation*}
    1 - F_{R}(c(P_R)) - c(P_R) f_{R}(c(P_R)) = 0.
\end{equation*}
As discussed in our theoretical section, the dependency on the density $f_R$ means the derivative $c'(P_R)$ is not a continuous operator in $L_2$. Correctly characterizing the market conduct effect requires our general framework based on a Sobolev space, $\mathcal{H}_R$, with the inner product:
\begin{equation*}
    \langle \psi_{c_0}, s_R \rangle_{\mathcal{H}_R} := \mathbb{E}_{0}[\psi_{c_0}(R_i)s_R(R_i)] + \mathbb{E}_{0}[\psi_{c_0}'(R_i)s_R'(R_i)].
\end{equation*}
As shown in Appendix~\ref{app:examples}, the representer for the derivative of the optimal reserve price, $\psi_{c_0}$, is the solution to the following Sturm-Liouville differential equation:
\begin{equation*}
     \psi_{c_0}(r)f_{R|0}(r) - (\psi_{c_0}'(r)f_{R|0}(r))' = K \cdot \left( \ind\{r \le c_0\}f_{R|0}(r) + \alpha \delta(r-c_0) \right).
\end{equation*}
While complex, this equation can be solved for a given baseline density $f_{R|0}$, yielding the influence function $\psi_{c_0}(\cdot)$. The MPE is then fully identified by applying our general formula from Theorem~\ref{th:mpe_general}, demonstrating the framework's capacity to handle a wide class of economically relevant mechanisms.

\subsection{School Choice with Multiple Schools}

Our final example is a multi-product extension of the price-based model: a centralized school choice mechanism. As shown by \citet{azevedo2016supply}, allocations in large matching markets can often be characterized by a vector of market-clearing score cutoffs, which makes this an empirically relevant setting.

Consider a market with two schools ($k=1,2$) and an outside option ($k=0$), each with capacity $q_k$. A student's type $R_i$ consists of a preference ranking, $\succ_i$, and a vector of school-specific scores, $(V_{i,1}, V_{i,2})$. A student is assigned to their most-preferred school $k$ for which they are eligible, which requires their score to exceed the school's cutoff, $V_{i,k} > c_k$. The vector of cutoffs $\mathbf{c}=(c_1, c_2)$ is set endogenously to ensure that the number of assigned students exactly meets the capacity constraints for each school.

A policy reform perturbs the joint distribution of preferences and scores. To maintain equilibrium, the market responds by adjusting the cutoff vector by a marginal amount, $\mathbf{c}'=(c_1', c_2')$. This change in cutoffs reallocates students who are precisely on the margin of admission. The score space, shown in Figure~\ref{fig:scorespace}, helps build intuition. The initial cutoffs $(c_1, c_2)$ define eligibility regions. When the policy changes, the cutoffs shift, creating thin ``bands'' of students whose eligibility status changes. The welfare impact of the reform depends entirely on who these marginal students are and how they are reallocated based on their preferences.

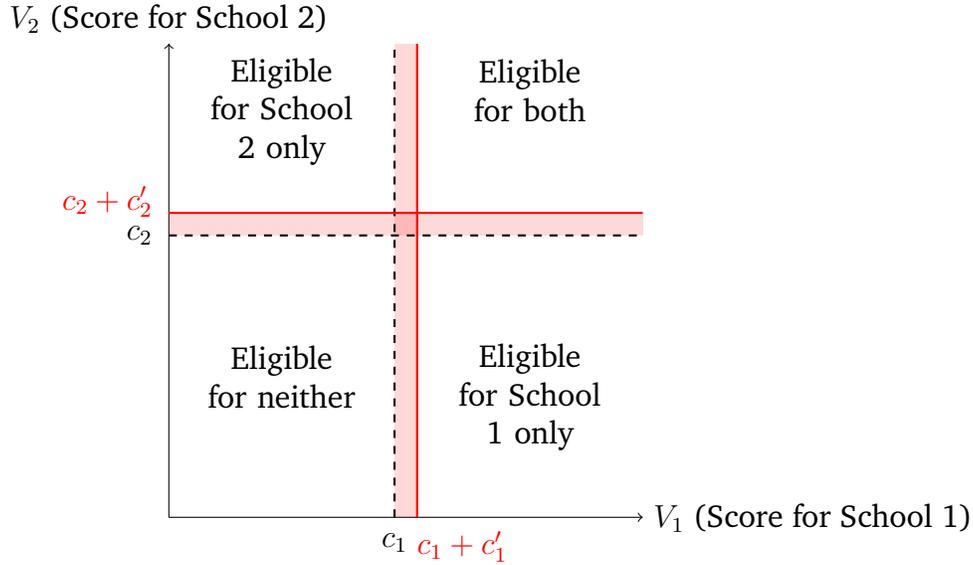
\begin{figure}[t!]
\centering
\begin{tikzpicture}[scale=1.5]
    \draw[->] (0,0) -- (4.2,0) node[right] {$V_1$ (Score for School 1)};
    \draw[->] (0,0) -- (0,4.2) node[above] {$V_2$ (Score for School 2)};
    \def\cOne{2}
    \def\cTwo{2.5}
    \def\cOnePrime{2.2}
    \def\cTwoPrime{2.7}
    \fill[red, opacity=0.15] (\cOne,0) rectangle (\cOnePrime,4.2);
    \fill[red, opacity=0.15] (0,\cTwo) rectangle (4.2,\cTwoPrime);
    \draw[dashed, thick] (\cOne,0) -- (\cOne,4.2);
    \node[below=2pt] at (\cOne, 0) {$c_1$};
    \draw[dashed, thick] (0,\cTwo) -- (4.2,\cTwo);
    \node[anchor=east, left=2pt] at (0, \cTwo) {$c_2$};
    \draw[red, thick] (\cOnePrime,0) -- (\cOnePrime,4.2);
    \node[below=1pt, red] at (2.6, 0) {{$c_1+c'_1$}};
    \draw[red, thick] (0,\cTwoPrime) -- (4.2,\cTwoPrime);
    \node[anchor=east, left=2pt, red] at (0, 2.8) {{$c_2+c'_2$}};
    \node[text width=2cm, align=center] at (3.2, 3.79) {Eligible for both};
    \node[text width=2cm, align=center] at (1, 3.6) {Eligible for School 2 only};
    \node[text width=2cm, align=center] at (3.2, 1.065) {Eligible for School 1 only};
    \node[text width=2cm, align=center] at (1, 1.25) {Eligible for neither};
\end{tikzpicture}
\caption{A policy reform shifts cutoffs from the dashed to the red lines. Students in the shaded red bands are ``marginal''---their eligibility changes.}
\label{fig:scorespace}
\end{figure}

We can define the key building blocks for this effect: let $\rho_{j \to k}$ be the density of students at the cutoff for school $j$ (i.e., with score $V_{i,j}=c_j$) who, upon losing eligibility for $j$, are reallocated to school $k$. Let $\tau_{j \to k}(c_j)$ be the average causal effect of this switch for this specific group:
\begin{align*}
    \tau_{j \to k}(c_j) = \E[Y_i(W_i,k) - Y_i(W_i, j) | V_{i,j}=c_j, \text{reallocated from } j \text{ to } k].
\end{align*}
These are precisely the types of parameters estimated in RDD-based studies of school choice. Table~\ref{table:school_details} illustrates the primary reallocations (assuming cutoffs rise).

These densities of marginal students are the building blocks of the Jacobian matrix, $\mathbf{J}$, which describes the derivative of the market conduct rule, $\mathbf{c}^{\prime}(P_{R|0})$. An element $J_{kj}$ of this matrix represents the change in enrollment at school $k$ from a marginal increase in the cutoff for school $j$. The diagonal elements are negative (raising a school's cutoff lowers its own enrollment), while the off-diagonal elements are positive (raising one school's cutoff pushes some students to the other school). The Jacobian for this market is:
\begin{equation*}
\mathbf{J} = 
\begin{pmatrix}
-(\rho_{1 \to 0} + \rho_{1 \to 2}) & \rho_{2 \to 1} \\
\rho_{1 \to 2} & -(\rho_{2 \to 0} + \rho_{2 \to 1})
\end{pmatrix}
\end{equation*}
The non-zero off-diagonal terms, $\rho_{1 \to 2}$ and $\rho_{2 \to 1}$, explicitly measure the cross-school substitution effects. 

The market conduct externality for a seat at each school, which we denote by the vector $\mathbf{v}=(v_1, v_2)$, is a combination of the marginal effects at both cutoffs, adjusted for the full matrix of equilibrium interactions. Define $G_1 := - [\nabla_{\mathbf{c}}\mathcal{U}(c_0)]_1 = -(\rho_{1\to 0}\tau_{1\to 0} + \rho_{1\to 2}\tau_{1\to 2})$ and $G_2 := - [\nabla_{\mathbf{c}}\mathcal{U}(c_0)]_2 = (\rho_{2\to 1}\tau_{2\to 1} + \rho_{2\to 0}\tau_{2\to 0})$ be the total welfare effect at each margin. Solving the system of equilibrium interactions yields the following expressions for the social externality values:
\begin{align*}
    v_1 &= \frac{1}{\det(\mathbf{J})} \left[ (\rho_{2 \to 0} + \rho_{2 \to 1}) G_1 + \rho_{1 \to 2} G_2 \right] \\
    v_2 &= \frac{1}{\det(\mathbf{J})} \left[ (\rho_{1 \to 0} + \rho_{1 \to 2}) G_2 + \rho_{2 \to 1} G_1 \right]
\end{align*}
The crucial feature of these expressions is that the social value of a seat at School 1 ($v_1$) explicitly depends on the treatment effects at the margin for School 2 (embedded in $G_2$), weighted by the substitution patterns. 

The final equilibrium-adjusted outcome for a student is:
\begin{equation*}
    \Psi_i^{\text{total}} = Y_i - v_1 \cdot \mathbf{1}\{A_i=1\} - v_2 \cdot \mathbf{1}\{A_i=2\}
\end{equation*}
This example shows precisely how the framework synthesizes readily interpretable RDD treatment effects ($\tau_{j \to k}$) with the market's underlying substitution patterns ($\mathbf{J}$) to construct the policy-invariant parameters ($v_k$) required for any counterfactual policy evaluation.

\begin{table}[t!]
\centering
\caption{Classification of Marginal Reallocations}
\begin{tabular}{@{}lll@{}}
\toprule
\textbf{Marginal Group} & \textbf{Reallocation Path} & \textbf{Welfare Effect Component} \\ \midrule
$V_{i,1} \approx c_1$, Pref: $1 \succ 0$ & School 1 $\to$ Outside Option & $\rho_{1 \to 0} \cdot \tau_{1 \to 0}(c_1)$ \\
$V_{i,2} \approx c_2$, Pref: $2 \succ 0$ & School 2 $\to$ Outside Option & $\rho_{2 \to 0} \cdot \tau_{2 \to 0}(c_2)$ \\
$V_{i,1} \approx c_1$, Pref: $1 \succ 2 \succ 0$ & School 1 $\to$ School 2 & $\rho_{1 \to 2} \cdot \tau_{1 \to 2}(c_1)$ \\
$V_{i,2} \approx c_2$, Pref: $2 \succ 1 \succ 0$ & School 2 $\to$ School 1 & $\rho_{2 \to 1} \cdot \tau_{2 \to 1}(c_2)$ \\
\bottomrule
\end{tabular}\label{table:school_details}
\end{table}

\subsection{Discussion}
Our main result identifies the Marginal Policy Effect---the local gradient of the welfare function at the observed equilibrium. A natural question is under what conditions a researcher can go beyond this local result to evaluate large-scale, or "global," policy changes. Proposition~\ref{prop:id_sup} shows that global identification hinges on a stringent support condition, which requires that a policy reform does not assign agents to allocations they could never have received in the baseline equilibrium.

As our examples illustrate, however, this condition is the exception rather than the rule. It holds in markets with pervasive randomness, like the random rationing mechanism, where the allocation process itself acts as an experiment that reveals the distribution of potential outcomes \citep{narita2021algorithm}. In contrast, markets with deterministic cutoffs---such as price-based allocation, auctions, and school choice systems---violate this condition. A marginal change in a cutoff pushes agents across a sharp boundary, meaning the causal effect of the allocation is only ever revealed for agents at that specific, observed margin.

The local nature of our identification result has an immediate implication for any analysis aiming to evaluate global reforms: such an analysis must rely on extrapolation. Our framework contributes to this goal by providing a sharp delineation between what is identified from the data and what must be assumed. By first constructing the equilibrium-adjusted outcome, $\Psi_i^{\text{total}}$, applied researchers can isolate the identified foundation upon which transparent extrapolation assumptions---about functional forms or the outcomes of inframarginal agents---can be built.
\section{Applications and Extensions}\label{sec:extensions}

Section~\ref{sec:derivation} developed our main theoretical result, the "separation principle," which hinges on the construction of the equilibrium-adjusted outcome, $\Psi_i^{\text{total}}$. We now demonstrate the framework's flexibility and breadth by extending it in four directions. First, we generalize the welfare criterion beyond simple averages to a broad class of distributional objectives, such as quantiles and inequality measures. Second, we incorporate observable covariates to handle selection on observables and lay the groundwork for optimal policy targeting. Third, we address endogenous selection by connecting our framework to the MTE literature. Finally, we discuss the significant identification challenges that arise in non-strategy-proof mechanisms where agent reports are themselves endogenous.

\subsection{Beyond Average Outcomes: General Welfare Functionals}
\label{sec:general_welfare}

Our analysis has thus far defined aggregate welfare as the average outcome, $\mathcal{U}_0 = \mathbb{E}_0[Y_i]$. However, as foreshadowed by Remark~\ref{rem:in_local}, because our approach is local---focused on identifying the marginal effect of a reform---it can be extended to any welfare criterion that is sufficiently smooth with respect to the distribution of outcomes. This allows policymakers to evaluate reforms based not only on their average effects but also on their impact on other distributional objectives.

Let the welfare criterion be a functional $\mathcal{U}(F_Y)$ that maps the CDF of the outcome, $F_Y$, to a real number. The key condition for our analysis to apply is that this functional must be Hadamard differentiable at the baseline outcome distribution, $F_{Y|0}$. This is a standard smoothness condition in statistics that guarantees the existence of a well-behaved and identifiable influence function, $\text{IF}(y; F_{Y|0})$, which characterizes the marginal contribution of an observation $y$ to the overall functional.  The entire analysis from our main theorems holds, with one simple substitution: the individual outcome $Y_i$ is replaced by its marginal contribution to the welfare functional, $\text{IF}(Y_i; F_{Y|0})$. 

The new equilibrium-adjusted outcome, which we denote $\Psi_i^\mathcal{U}$, is therefore:
\begin{equation*}
    \Psi_i^\mathcal{U} = \text{IF}(Y_i; F_{Y|0}) + \gamma^\mathcal{U}(R_i) + \Psi^{\text{conduct},\mathcal{U}}(R_i),
\end{equation*}
where the competition externality ($\gamma^\mathcal{U}$) and the market conduct externality ($\Psi^{\text{conduct},\mathcal{U}}$) are constructed exactly as before, but using the conditional expectation of the influence function, $\mathbb{E}[\text{IF}(Y_i; F_{Y|0}) | A_i=a, R_i=r]$, in place of the conditional mean of the outcome.

This generalization covers a wide range of common welfare criteria.
\begin{itemize}
    \item \textbf{Quantiles.} If the policymaker is interested in the effect on the $\tau$-th quantile of the outcome distribution, $q_\tau$, the relevant influence function is $\text{IF}(y; F_{Y|0}) = \frac{\tau - \mathbf{1}\{y \le q_\tau\}}{f_{Y|0}(q_\tau)}$, where $f_{Y|0}$ is the baseline density of the outcome. Our framework can thus be used to find the MPE of any local policy on, for example, the median outcome.
    \item \textbf{Inequality Measures.} As another distributional measure, one could use the Gini coefficient. This functional is also Hadamard differentiable, and its well-known influence function can be substituted into our formulas to find the MPE of a policy on inequality.
\end{itemize}
This extension demonstrates that our framework provides a general toolkit for evaluating the local effects of policies on any social objective that can be expressed as a smooth functional of the outcome distribution.

\subsection{Covariates, Identification, and Optimal Targeting}
\label{sec:covariates}

In many empirical settings, the assumption of unconditional random assignment is unrealistic. It is often more plausible to assume selection on observables, where a policy is randomly assigned only after conditioning on a rich set of pre-determined covariates. This section extends our baseline analysis to incorporate such covariates, serving three critical purposes. First, doing so strengthens the credibility of the underlying identification assumptions. Second, it provides the necessary foundation for designing and evaluating targeted policies. Third, this extension lays the groundwork for our analysis of selection on unobservables in the subsequent section.

We generalize our baseline framework by replacing Assumption~\ref{ass:random_assignment} with the following standard condition:
\begin{assumption}[Unconfoundedness]
\label{ass:cond_random}
The policy instrument $W_i$ is randomly assigned conditional on a vector of observed, pre-determined covariates $X_i$.
\end{assumption}
Under Assumption~\ref{ass:cond_random}, a policy reform is a change to the conditional distribution of the policy instrument, characterized by a conditional score $s_{W|X}(w|x)$. This in turn induces a marginal score on the distribution of reports, $s_{R}(R_i) := \mathbb{E}_{0}[s_{W|X}(W_i|X_i)|R_i]$. The key insight of this section is that the fundamental structure of our main result remains intact. The MPE is still given by the expression from Theorem~\ref{th:mpe_general}:
\begin{equation*}
    \mathcal{U}^{\prime}(s_{W|X}) =  \mathbb{E}_{0}[\Psi_i^{\text{fixed}} s_{W|X}(W_i|X_i)] + \langle \Psi^{\text{conduct}}, s_R \rangle_{\mathcal{H}_R},
\end{equation*}
where the structural components, $\Psi_i^{\text{fixed}}$ and $\Psi^{\text{conduct}}$, are the same as those defined previously.

The invariance of these structural components might seem surprising, but it is a direct consequence of the mechanism's design. Because the allocation rule responds only to an agent's report $R_i$, and not directly to the covariates $X_i$, the equilibrium adjustment functions are anonymous with respect to this observed heterogeneity.\footnote{This does not preclude covariates from being part of the report, i.e., $X_i \subset R_i$, provided they are components that are unaffected by the policy instrument $W_i$.} This anonymity has powerful simplifying implications for empirical analysis. For instance, the local RDD estimands discussed in Section~\ref{sec:examples} 
only need to be identified unconditionally, rather than conditional on the full vector $X_i$, thereby avoiding a curse of dimensionality.

Conceptually, this implies a departure from the standard ``condition-then-aggregate'' approach often used in settings with selection on observables. Our framework instead justifies a direct aggregate analysis. This insight will be crucial in the next section, where we extend this logic to handle selection on unobservables.

This structure allows us to turn to the problem of optimal policy design. Focusing on the important class of environments from Corollary~\ref{cor:psi}, where the MPE simplifies to a single covariance, we have:
\begin{equation*}
    \mathcal{U}^{\prime}(s_{W|X}) = \mathbb{E}_0[\Psi_i^{\text{total}} s_{W|X}(W_i|X_i)].
\end{equation*}
This representation of the MPE as a linear functional connects our framework directly to the literature on optimal policy targeting and empirical welfare maximization (EWM) (e.g., \citet{manski2004statistical,kitagawa2018should,athey2021policy}).
In those frameworks, the objective is to choose a policy that maximizes the expectation of a welfare-relevant outcome. Our central result shows that in an equilibrium environment, the correct welfare-relevant object is not the observed outcome $Y_i$, but the equilibrium-adjusted outcome, $\Psi_i^{\text{total}}$. The policymaker's problem is thus to choose a targeting rule---represented by the conditional score $s_{W|X}(W_i|X_i)$---that maximizes the covariance with this fixed, structural outcome.

Our focus on the MPE as the key object for policy improvement also connects our work to a design-based literature on EWM. For instance, \citet{viviano2024policy} propose an experimental design that uses ``local perturbations'' to treatment probabilities across large, independent clusters to directly estimate the MPE in settings with unknown, decentralized spillovers. Whereas their approach provides an experimental method for estimating the total welfare gradient, our complementary framework provides its structural decomposition in centralized markets.

To make this connection explicit, we analyze the canonical case of a binary policy, $W_i \in \{0,1\}$. A targeted local reform is a marginal perturbation to the baseline propensity score, $p(X_i) = \Pr(W_i=1|X_i)$, in a direction defined by a square-integrable function $h(X_i)$. The score for such a reform is given by:\footnote{This score arises from a perturbation of the log-odds ratio, a standard way to ensure perturbed probabilities remain in $(0,1)$. Specifically, if the new log-odds is $\log\frac{p(X_i)}{1-p(X_i)} + \theta h(X_i)$, the derivative of the log-likelihood with respect to $\theta$ at $\theta=0$ yields this score.}
\begin{equation*}
    s_{W|X}(W_i|X_i) = \left(\frac{W_i}{p(X_i)} - \frac{1-W_i}{1-p(X_i)}\right) h(X_i).
\end{equation*}

Substituting this score into the MPE formula and applying the law of iterated expectations yields:
\begin{align*}
    \mathcal{U}'(s_{W|X}) = \mathbb{E}_{0}\left[ \left( \mathbb{E}_{0}[\Psi^{\text{total}}_i | W_i=1, X_i] - \mathbb{E}_{0}[\Psi^{\text{total}}_i | W_i=0, X_i] \right) \cdot h(X_i) \right]. 
\end{align*}
This result provides a clear recipe for policy design. The welfare gain from a local reform is the inner product of the Conditional Average Treatment Effect on the equilibrium-adjusted outcome (CATE-$\Psi$), defined as the term in parentheses, and the function $h(X_i)$ that defines the reform's direction. Unlike in the global EWM literature, where a policy rule maps covariates to probabilities, the function $h(X_i)$ is unconstrained in sign. It represents the gradient of the reform; a negative value for a subpopulation simply implies that the welfare-improving direction is to locally reduce their probability of treatment.

The optimal local reform is the one that maximizes this welfare gain for a given budget or ``size.'' A natural choice is to constrain the variance of the perturbation, $\mathbb{E}[h^2(X_i)] \le C$. The problem of maximizing the MPE subject to this constraint is a standard Hilbert space projection problem, whose solution, by the Cauchy-Schwarz inequality, is to set $h(X_i)$ proportional to the CATE-$\Psi$. This yields the optimal score:
\begin{equation*}
    s^{\star}_{W|X}(W_i|X_i) \propto \left(\frac{W_i}{p(X_i)} - \frac{1-W_i}{1-p(X_i)}\right)\left( \mathbb{E}_{0}[\Psi^{\text{total}}_i | W_i=1, X_i] - \mathbb{E}_{0}[\Psi^{\text{total}}_i | W_i=0, X_i] \right).
\end{equation*}
This policy for local improvement differs fundamentally from the globally optimal rule derived in the EWM literature, which typically takes the form \citep{manski2004statistical}: 
\begin{equation*}
    s^{\text{EWM}}_{W|X}(W_i|X_i) \propto \left(\frac{W_i}{p(X_i)} - \frac{1-W_i}{1-p(X_i)}\right)\mathbf{1}\left\{ \mathbb{E}_{0}[\Psi^{\text{total}}_i | W_i=1, X_i] - \mathbb{E}_{0}[\Psi^{\text{total}}_i | W_i=0, X_i] \ge 0 \right\}.
\end{equation*}
Our approach identifies the most welfare-improving direction for a marginal reform from the current baseline, which leverages the magnitude of the CATE-$\Psi$. The EWM approach, in contrast, identifies the optimal policy level within a particular class of rules, which depends only on the sign of the CATE-$\Psi$.

\begin{remark}[The Role of Covariates in Defining Reforms]
Our analysis of targeting has focused on using covariates $X_i$ to generate a rich space of policy reforms. This is particularly crucial for binary policies. When the policy instrument is binary ($W_i \in \{0,1\}$), the space of valid scores is one-dimensional; any score must be proportional to $\frac{W_i - \mathbb{E}_0[W_i]}{\mathbb{V}[W_i]}$. Without covariates, there is therefore only a single direction for local policy improvement. In contrast, a non-binary policy instrument, such as a continuous subsidy, naturally admits a high-dimensional space of reforms even without covariates. In that case, the score can be any function of the instrument, $s(W_i)$, that is orthogonal to a constant (i.e., satisfies $\mathbb{E}[s(W_i)]=0$). This allows for a wide variety of budgetary reallocations, such as increasing small subsidies while decreasing large ones.
\end{remark}

\subsection{Unobserved Heterogeneity and Endogenous Selection}
\label{sec:selection}

We now consider a setting where the baseline choice $W_i$ is not assigned by a policymaker but is instead an endogenous decision made by each agent. In this context, it is less natural to think of a change in the distribution of $W_i$ as a directly implementable reform. Nevertheless, it remains economically valuable to quantify how aggregate welfare responds to shifts in the distribution of these choices. To conduct this analysis, we assume the presence of exogenous variation in the form of an instrumental variable (IV), denoted by $Z_i$. We focus on a binary choice, $W_i \in \{0,1\}$, to simplify the exposition.
\begin{assumption}[Instrumental Variable]
\label{ass:iv}
There exists an instrument $Z_i$ for the binary choice $W_i$ that satisfies:
\begin{enumerate}
    \item \textbf{Random Assignment:} $Z_i$ is independent of all potential outcomes and reports.
    \item \textbf{Selection Model:} Selection is governed by the latent variable model $W_i = \mathbf{1}\{p(Z_i) > \xi_i\}$, where $\xi_i$ is uniform on $[0,1]$ and independent of $Z_i$.
    \item \textbf{Exclusion Restriction:} The instrument $Z_i$ does not directly enter the allocation mechanism, potential outcomes $Y_i(w,a)$, or potential reports $R_i(w)$.
\end{enumerate}
\end{assumption}
This setup describes a conventional selection model in the spirit of \citet{heckman1979sample}.\footnote{As shown by \citet{vytlacil2002independence}, this latent variable formulation is equivalent to the monotonicity assumption in the LATE framework of \citet{imbens1994identification}.} It opens two distinct avenues for policy analysis. The first is to treat the instrument $Z_i$ itself as the policy lever. Since $Z_i$ is randomly assigned, this case reduces to a direct application of our main result. In the context of Corollary~\ref{cor:psi}, the MPE with respect to a reform of the instrument's distribution, characterized by a score $s_Z$, is given by:
\begin{equation*}
    \mathcal{U}^{\prime}(s_{Z}) = \mathbb{E}_0[\Psi_i^{\text{total}} s_{Z}(Z_i)].
\end{equation*}
This is effectively an intention-to-treat (ITT) analysis using our equilibrium-adjusted outcome $\Psi_i^{\text{total}}$. Critically, because the market mechanism does not respond directly to $Z_i$, this ITT-type result does not require the exclusion restriction (part 3 of Assumption~\ref{ass:iv}).

The second, more structural avenue uses the MTE framework to analyze policies that target the endogenous choice $W_i$ \citep{bjorklund1987estimation,heckman2001policy,heckman2005structural}. For a binary instrument $Z_i \in \{0,1\}$, it is natural to consider the Wald-type estimand for our welfare-relevant outcome:
\begin{equation*}
    \frac{\mathbb{E}_0[\Psi_i^{\text{total}} | Z_i=1] - \mathbb{E}_0[\Psi_i^{\text{total}} | Z_i=0]}{\mathbb{E}_0[W_i | Z_i=1] - \mathbb{E}_0[W_i | Z_i=0]} = \frac{\text{Cov}(\Psi_i^{\text{total}}, Z_i)}{\text{Cov}(W_i, Z_i)}.
\end{equation*}
A key result from the MTE literature is that this ratio identifies the average treatment effect for the subpopulation of ``compliers''---those induced to change their choice by the instrument \citep{imbens1994identification}. By applying this logic to our structural outcome $\Psi_i^{\text{total}}$, we can show that this estimand is equal to the average MTE for the complier population:
\begin{equation*}
    \mathbb{E}[MTE_{\Psi^{\text{total}}}(\xi_i)| p(0) \le \xi_i \le p(1)], \quad \text{where} \quad MTE_{\Psi^{\text{total}}}(\xi) := \mathbb{E}[\Psi_i^{\text{total}}(1) - \Psi_i^{\text{total}}(0) | \xi_i = \xi].
\end{equation*}
This representation is powerful because it establishes a direct link between an estimable quantity (the Wald ratio for $\Psi_i^{\text{total}}$) and the MPE for a specific, economically meaningful policy. The policy is one that induces a uniform shift in the choice probability for the complier group, characterized by the score:
\begin{equation*}
    s_{W|\xi}(W_i|\xi_i) \propto \left(\frac{W_i}{\Pr(W_i=1|\xi_i)} - \frac{1-W_i}{\Pr(W_i=0|\xi_i)}\right) \cdot \mathbf{1}\{p(0) \le \xi_i \le p(1)\}.
\end{equation*}
This particular policy can be implemented by manipulating the distribution of the instrument $Z_i$. Whether such a manipulation is a practical policy or a purely theoretical benchmark depends on the nature of $Z_i$ itself.

This logic extends directly to a discrete instrument with $L+1$ support points, $Z_i \in \{0, \dots, L\}$. In this case, we can identify the MPE for any policy that targets a linear combination of the $L$ complier groups (those with $\xi_i \in [p(l), p(l+1)]$). A policymaker can then choose the weights on these groups to find the optimal implementable local reform. To evaluate a broader class of policies---those that cannot be implemented by simply re-weighting the instrument---one must rely on additional assumptions to extrapolate the MTE curve beyond the identified regions. The literature provides extensive tools for such exercises, from parametric assumptions \citep{brinch2017beyond} to partial identification approaches \citep{mogstad2018using}. In the latter case, the question of the optimal local reform is directly connected to a robust policy design under ambiguity \citep{manski2011choosing}.

\begin{remark}[Non-Binary Choices]
Our focus on a binary choice $W_i$ is for expositional simplicity. The core logic developed here extends to settings with non-binary choices, such as discrete or ordered choice models and non-ordered instruments. Although the MTE framework is most developed for the binary case, a growing literature provides the necessary tools for these richer settings. For a recent and comprehensive overview, see the Handbook chapter by \citet{mogstad2024instrumental}.
\end{remark}

\subsection{Discussion: Strategic Reporting in Non-Strategy-Proof Mechanisms}
\label{sec:strategic}

Our analysis has so far assumed that an individual's report to the mechanism is a stable function of the policy instrument. This is reasonable in strategy-proof environments, but many real-world markets are not. In such settings, rational individuals adapt their reports to the market environment. A policy that alters this environment will therefore induce a strategic response, adding a new channel through which welfare is affected.

\paragraph{A Framework for Strategic Reporting.} To analyze these settings, we distinguish between an agent's latent ``true'' type, $R_i^\star = R_i^\star(W_i)$, and their strategically chosen report, $R_i$. The path to identification depends critically on the informational content of the observed reports. In some environments, such as a first-price auction, economic theory provides an invertible mapping from reports to types, allowing $R_i^\star$ to be point-identified for each agent \citep{guerre2000optimal}. In such cases, our previous analysis applies directly to the recovered true types.

In more complex settings like matching markets, however, point-identification often fails. A potential way forward is to first recover the distribution of latent types in the baseline equilibrium, following methods like those in \citet{agarwal2018demand}. One can then model the strategic reporting strategy as a conditional distribution, $f_{R|R^{\star}}(R_i|R_i^{\star}, P_{R^{\star}})$, which captures how submitted reports respond to the competitive environment (summarized by the distribution of true types, $P_{R^{\star}}$). A marginal policy reform now propagates through two channels: its direct effect on the distribution of true types, $P_{R^{\star}}$, and its indirect effect on reporting strategies. If the strategic response is smooth, we can linearize it, leading to a total score that is the sum of the baseline policy score and a new strategic-response score. The MPE from Corollary~\ref{cor:psi} would then be:
\begin{equation*}
    \mathcal{U}^{\prime}(s_{W}) = \mathbb{E}_0[\Psi_{i}^{\text{total}}(s_W(W_i) + s_{R|R^{\star}}(R_i|R_i^{\star}, s_W))],
\end{equation*}
where $s_{R|R^{\star}}$ captures the strategic adjustment. Importantly, the structural object $\Psi_{i}^{\text{total}}$ remains invariant, as the market mechanism observes and responds only to the submitted reports $R_i$, not the latent types.

\paragraph{Identification Challenges.} While this representation provides a theoretical path forward, identifying the MPE in practice faces at least two significant challenges. The first challenge concerns the market conduct externality, $\Psi^{\text{conduct}}$. Its identification relies on the continuity of conditional mean outcomes at the allocation margin. As \citet{bertanha2023causal} show, this continuity can be violated in markets with strategic reporting. The intuition is that agents with knowledge of market cutoffs can strategically sort around them, invalidating the local comparisons underlying RDD-type estimands. \citet{bertanha2023causal} propose a solution using partial identification, deriving bounds on the marginal causal effects. These could, in principle, be used to derive bounds for the MPE.

The second, more fundamental challenge involves the correlation between an agent's outcome, $Y_i$, and their unobserved true type, $R_i^\star$. Even if the joint distribution of $(R_i, R_i^{\star})$ is identified, as in \citet{agarwal2018demand}, this is not sufficient to compute the expectation in the MPE formula. Doing so requires the joint distribution of $(Y_i, R_i, R_i^{\star})$. The situation is analogous to a standard selection model, where $R_i$ is an observed choice and the latent type $R_i^\star$ is an unobserved state that may be correlated with the outcome $Y_i$, conditional on the choice. Extending methods from related problems, such as in \citet{kline2016evaluating}, could offer a path forward, but would likely require non-trivial extrapolation and further assumptions. Thus, applying our framework in non-strategy-proof settings requires confronting deep identification problems, likely leading to partial identification of the MPE.
\section{Conclusion}\label{sec:conclusion}

Evaluating policies in centralized markets is complicated by equilibrium spillovers, which standard methods often fail to capture. The conventional wisdom is that identifying these effects requires observing the system's response to variation across different policy environments. This paper challenges that view by developing a framework to identify the total welfare effect of any marginal policy reform using data from within a single market. Our solution is the construction of the equilibrium-adjusted outcome ($\Psi_i^{\text{total}}$), a policy-invariant structural object that augments an agent's private outcome with the full equilibrium externalities they impose on others. A key insight of our approach is that the building blocks for this object's externality terms are often precisely the Local Average Treatment Effects (LATEs) identified in the empirical Regression Discontinuity Design (RDD) literature. This construction yields a practical "separation principle," where the marginal policy effect is a simple covariance between the policy score and $\Psi_i^{\text{total}}$.

This framework provides a toolkit with several applications. For policymakers, it offers a direct way to evaluate the "bang-for-the-buck" of the iterative, marginal policy changes that are common in practice. For researchers, it serves as a disciplined first step for analyzing global reforms. By sharply delineating what is identified non-parametrically from the data, our results provide a transparent foundation upon which any extrapolation required for global analysis must be built. More ambitious structural models can also be disciplined by requiring them to reproduce the identified local effects our framework provides. The framework's flexibility is further demonstrated by its extensions to optimal policy targeting and its novel connection to the Marginal Treatment Effects (MTE) literature for analyzing endogenous choice. By bridging reduced-form empirical work with the equilibrium structure of the market, our results offer a path toward more robust, data-driven policy design in a wide array of important economic settings.

\bibliography{./references}
\bibliographystyle{econ-aea}

\newpage
\appendix
\begin{center}
    \Large{Online Appendix}
\end{center}

\counterwithin{figure}{section}
\counterwithin{table}{section}
\counterwithin{equation}{section}

\footnotesize

\section{Differentiating integrals}\label{app:integrals}

Let $X$ be a random variable on $\mathbb{R}^k$ with a probability density function $p(x)$. Let $I \subset \mathbb{R}^p$ be an open set of parameters. We are interested in the differentiability of the functional $\mathcal{U}: I \to \mathbb{R}$ defined as:
\begin{align*}
\mathcal{U}(\mathbf{c}) = \E[h(X, \mathbf{c}) \1\{\phi(X, \mathbf{c}) \ge 0\}] = \int_{\mathbb{R}^k} h(x, \mathbf{c}) p(x) \1\{\phi(x, \mathbf{c}) \ge 0\} \,dx
\end{align*}
This functional is central to our analysis, representing the aggregate welfare where $h(x, \mathbf{c})$ is an agent's outcome and the indicator function reflects an eligibility rule determined by market-clearing parameters $\mathbf{c}$.

\begin{assumption} \label{ass:main_appendix_A}
The functions $h(x, \mathbf{c})$, $p(x)$, and $\phi(x, \mathbf{c})$ satisfy the following conditions:
\begin{enumerate}
    \item \textbf{Integrability:} The density $p(x)$ is bounded and continuous; the function $x \mapsto h(x, \mathbf{c}_0)$ is integrable.
    \item \textbf{Differentiability:} For almost every $x \in \mathbb{R}^k$ function $\mathbf{c} \mapsto h(x,\mathbf{c})$ is differentiable in the neighbourhood of $\mathbf{c}_0$ and its derivative is uniformly bounded by some integrable function $K_h(x)$.
    \item \textbf{Continuity:} Function $h(x, \mathbf{c}_0)p(x)$ is continuous in the open neighbourhood of $\{x | \phi(x, \mathbf{c}_0) = 0\}$.
    \item \textbf{Boundary Non-degeneracy:} Function $(x, \mathbf{c}) \mapsto \phi(x, \mathbf{c})$ is Lipschitz continuous and $\|\nabla_x \phi(x, \mathbf{c}_0)\|_2 > \epsilon >0$ $\mathcal{H}^{k-1}$-a.s. on the boundary surface $\{\phi(x,\mathbf{c}_0) = 0\}$.
\end{enumerate}
\end{assumption}

\begin{theorem}\label{th:main_tech}
Under Assumption \ref{ass:main_appendix_A}, the functional $\mathcal{U}(\mathbf{c})$ is differentiable at $\mathbf{c}_0 \in I$. Its partial derivative with respect to $c_j$ is given by:
\begin{align*}
\frac{\partial \mathcal{U}}{\partial c_j}(\mathbf{c}_0) = 
\int_{\mathbb{R}^k} \frac{\partial h(x, \mathbf{c}_0)}{\partial c_j} p(x) \1\{\phi(x, \mathbf{c}_0) \ge 0\} \,dx
+
\int_{\{x \,|\, \phi(x,\mathbf{c}_0)=0\}} h(x, \mathbf{c}_0)p(x) \frac{\partial \phi(x, \mathbf{c}_0) / \partial c_j}{\|\nabla_{x} \phi(x, \mathbf{c}_0)\|_2} d \mathcal{H}^{k-1}(x)
\end{align*}
where $\mathcal{H}^{k-1}$ is the $(k-1)$-dimensional Hausdorff measure.
\end{theorem}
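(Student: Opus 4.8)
The plan is to read this as a Leibniz/Reynolds‑transport statement for an integral over a domain with a moving boundary and to isolate the two first‑order sources of change. Fixing the direction $e_j$ and writing the difference quotient, I would add and subtract $h(x,\mathbf{c}_0)p(x)\ind\{\phi(x,\mathbf{c}_0+te_j)\ge 0\}$ inside the integral, splitting $\tfrac{1}{t}\big(\mathcal{U}(\mathbf{c}_0+te_j)-\mathcal{U}(\mathbf{c}_0)\big)$ into an \emph{amplitude term}
$$A_t := \int \frac{h(x,\mathbf{c}_0+te_j)-h(x,\mathbf{c}_0)}{t}\,p(x)\,\ind\{\phi(x,\mathbf{c}_0+te_j)\ge 0\}\,dx$$
and a \emph{domain term}
$$B_t := \int h(x,\mathbf{c}_0)p(x)\,\frac{\ind\{\phi(x,\mathbf{c}_0+te_j)\ge 0\}-\ind\{\phi(x,\mathbf{c}_0)\ge 0\}}{t}\,dx.$$
The claim reduces to showing $A_t$ converges to the bulk integral and $B_t$ to the boundary integral in the statement.

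For $A_t$ I would apply dominated convergence: the integrand is bounded in modulus by the integrable function $K_h(x)$ of Assumption~\ref{ass:main_appendix_A}.2, and it converges pointwise to $\partial_{c_j}h(x,\mathbf{c}_0)\,p(x)\,\ind\{\phi(x,\mathbf{c}_0)\ge 0\}$ at every $x$ with $\phi(x,\mathbf{c}_0)\neq 0$ — using continuity of $\phi$ in $\mathbf{c}$ for the indicator. The exceptional set $\Sigma_0:=\{x:\phi(x,\mathbf{c}_0)=0\}$ is Lebesgue‑null because $\|\nabla_x\phi\|_2>\epsilon$ there, by the coarea formula. This step is routine.

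The heart of the argument is $B_t$. The set where its integrand is nonzero is the symmetric difference of $\{\phi(\cdot,\mathbf{c}_0+te_j)\ge 0\}$ and $\{\phi(\cdot,\mathbf{c}_0)\ge 0\}$; the Lipschitz bound on $\phi$ confines this set to the thin shell $\{|\phi(\cdot,\mathbf{c}_0)|\le L|t|\}$, so only a neighborhood of $\Sigma_0$ contributes. I would foliate that neighborhood by the level sets $\Sigma_s:=\{\phi(\cdot,\mathbf{c}_0)=s\}$ and use the coarea formula to rewrite $B_t$ as $\tfrac{1}{t}\int_{\mathbb{R}}\big[\int_{\Sigma_s} h\,p\,(\ind\{\phi(x,\mathbf{c}_0+te_j)\ge 0\}-\ind\{s\ge 0\})\,\|\nabla_x\phi\|^{-1}\,d\mathcal{H}^{k-1}\big]\,ds$. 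On $\Sigma_s$ the first‑order expansion $\phi(x,\mathbf{c}_0+te_j)=s+t\,\partial_{c_j}\phi(x,\mathbf{c}_0)+o(t)$ shows the threshold is crossed exactly when $s$ lies between $0$ and $-t\,\partial_{c_j}\phi(x,\mathbf{c}_0)$, and the signed contribution — an $s$‑integral over an interval of length $O(|t|)$ — collapses, after division by $t$, to $\partial_{c_j}\phi(x,\mathbf{c}_0)\,h(x,\mathbf{c}_0)\,p(x)\,\|\nabla_x\phi(x,\mathbf{c}_0)\|^{-1}$ evaluated on $\Sigma_0$; continuity of $h(\cdot,\mathbf{c}_0)p(\cdot)$ near $\Sigma_0$ (Assumption~\ref{ass:main_appendix_A}.3) is what licenses replacing nearby slice values by their boundary values in the limit. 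Letting $t\to 0$ then gives $B_t\to\int_{\Sigma_0} h\,p\,\partial_{c_j}\phi\,\|\nabla_x\phi\|^{-1}\,d\mathcal{H}^{k-1}$, as claimed, and the signs are seen to agree regardless of the sign of $\partial_{c_j}\phi$.

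The step I expect to be the main obstacle is the rigorous execution of this last limit. Because $\partial_{c_j}\phi(\cdot,\mathbf{c}_0)$ varies along each slice $\Sigma_s$ rather than being constant, the ``interval of length $O(|t|)$'' heuristic must be backed by a control of the $o(t)$ remainder that is uniform in $x$ over the shell; this is exactly where the joint Lipschitz continuity of $\phi$ together with the uniform lower bound $\|\nabla_x\phi\|_2>\epsilon$ are used, guaranteeing that the level sets $\Sigma_s$ are uniformly non‑degenerate and the tubular‑neighborhood foliation is well behaved. One must also handle possible non‑compactness of $\Sigma_0$ (exhaust by compact pieces, using integrability of $h(\cdot,\mathbf{c}_0)$ and boundedness of $p$ to dominate the tails) and the measure‑theoretic point that the coarea formula and the surface integral only require $\nabla_x\phi$ to exist $\mathcal{H}^{k-1}$‑a.e.\ on $\Sigma_0$. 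A cleaner but more demanding alternative — constructing an explicit flow $T_t$ mapping $\{\phi(\cdot,\mathbf{c}_0)\ge 0\}$ onto $\{\phi(\cdot,\mathbf{c}_0+te_j)\ge 0\}$ and differentiating via change of variables — would need $C^1$ rather than merely Lipschitz regularity of $\phi$, so I would avoid it.
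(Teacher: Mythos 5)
Your proposal is correct and uses the same decomposition as the paper: both split the difference quotient by adding and subtracting $h(x,\mathbf{c}_0)p(x)\ind\{\phi(x,\mathbf{c}_0+t\mathbf{e}_j)\ge 0\}$, and both dispatch the amplitude term via dominated convergence with the dominating function $K_h(x)p(x)$ (the paper routes this through the Mean Value Theorem, which is the same estimate). Where you genuinely diverge is the domain term: the paper does not prove it at all, but instead invokes a shape-derivative result (Theorem 4.2 of \citet{delfour2011shapes}) to assert that the derivative of $\mathbf{c}\mapsto\int_{\{\phi(\cdot,\mathbf{c})\ge 0\}}h(x,\mathbf{c}_0)p(x)\,dx$ equals the stated surface integral, whereas you give a self-contained argument via the coarea formula, foliating the thin shell $\{|\phi(\cdot,\mathbf{c}_0)|\le L|t|\}$ by level sets and computing the signed crossing contribution. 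Your route is more elementary and more transparent about exactly which hypotheses are doing the work; its cost is precisely the technical burden you identify, namely that the first-order expansion $\phi(x,\mathbf{c}_0+t\mathbf{e}_j)=\phi(x,\mathbf{c}_0)+t\,\partial_{c_j}\phi(x,\mathbf{c}_0)+o(t)$ must hold with a remainder controlled uniformly over the shell, and that $\|\nabla_x\phi\|$ must be bounded away from zero (and $\partial_{c_j}\phi$ must exist) not just on $\Sigma_0$ but on nearby level sets --- slightly more than the literal content of Assumption~\ref{ass:main_appendix_A}(4), though no more than the cited shape-derivative theorem implicitly requires. Since the paper's proof simply defers this step to a citation under essentially the same regularity, your version is, if anything, the more informative of the two.
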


\begin{proof}
The proof proceeds by analyzing the limit of the difference quotient for $\mathcal{U}(\mathbf{c})$ at $\mathbf{c}_0$. Let $\mathbf{e}_j$ be the $j$-th standard basis vector in $\mathbb{R}^p$. The partial derivative is the limit:
\begin{align*}
\frac{\partial \mathcal{U}}{\partial c_j}(\mathbf{c}_0) = \lim_{t \to 0} \frac{\mathcal{U}(\mathbf{c}_0 + t\mathbf{e}_j) - \mathcal{U}(\mathbf{c}_0)}{t}
\end{align*}
Let $g(x, \mathbf{c}) = h(x, \mathbf{c}) p(x)$ and let $H(z) = \1\{z \ge 0\}$ be the Heaviside step function. The difference quotient can be written as:
\begin{align*}
\frac{1}{t} \int_{\mathbb{R}^k} \left[ g(x, \mathbf{c}_0 + t\mathbf{e}_j) H(\phi(x, \mathbf{c}_0 + t\mathbf{e}_j)) - g(x, \mathbf{c}_0) H(\phi(x, \mathbf{c}_0)) \right] \,dx
\end{align*}
We add and subtract $g(x, \mathbf{c}_0) H(\phi(x, \mathbf{c}_0 + t\mathbf{e}_j))$ inside the brackets to separate the expression into two parts:
\begin{align*}
    \text{Term 1: } & \int_{\mathbb{R}^k} \frac{g(x, \mathbf{c}_0 + t\mathbf{e}_j) - g(x, \mathbf{c}_0)}{t} H(\phi(x, \mathbf{c}_0 + t\mathbf{e}_j)) \,dx \\
    \text{Term 2: } & \int_{\mathbb{R}^k} g(x, \mathbf{c}_0) \frac{H(\phi(x, \mathbf{c}_0 + t\mathbf{e}_j)) - H(\phi(x, \mathbf{c}_0))}{t} \,dx
\end{align*}
For the first term, by the Mean Value Theorem, the integrand is equal to $\frac{\partial g(x, \tilde{\mathbf{c}})}{\partial c_j}$ for some $\tilde{\mathbf{c}}$ on the line segment between $\mathbf{c}_0$ and $\mathbf{c}_0 + t\mathbf{e}_j$. By Assumption~\ref{ass:main_appendix_A}(2), this is bounded in absolute value by the integrable function $K_h(x)p(x)$. Therefore, the Dominated Convergence Theorem applies, and Term 1 converges to:
\begin{align*}
    \int_{\mathbb{R}^k} \frac{\partial h(x, \mathbf{c}_0)}{\partial c_j} p(x) \1\{\phi(x, \mathbf{c}_0) \ge 0\} \,dx
\end{align*}
The second term is the derivative of a function $\mathbf{c} \mapsto \int_{\{x|\phi(x, \mathbf{c}) \ge 0\}} h(x, \mathbf{c}_0)p(x) dx$. We appeal to the theory of shape derivatives to evaluate this. The conditions in Assumption~\ref{ass:main_appendix_A} are sufficient to apply Theorem 4.2 in \citet{delfour2011shapes}, which guarantees that the derivative exists and is equal to: 
\begin{align*}
    \int_{\{x \,|\, \phi(x,\mathbf{c}_0)=0\}} h(x, \mathbf{c}_0)p(x) \frac{\partial \phi(x, \mathbf{c}_0) / \partial c_j}{\|\nabla_{x} \phi(x, \mathbf{c}_0)\|_2} d \mathcal{H}^{k-1}(x),
\end{align*}
%
%
thus proving the result.
\end{proof}

\begin{corollary}\label{cor:cond_diff}
Let $X = (Y, Z)$, where $Y \in \R^{k_1}, Z \in \R^{k_2}$. Consider $\mathcal{U}(\mathbf{c}) = \E[h(Y,Z,\mathbf{c}) \1\{\phi(Y,Z, \mathbf{c}) \ge 0\}]$. Suppose for $P_Z$-almost every $z$, the assumptions of Theorem~\ref{th:main_tech} hold for the conditional expectation over $Y$. Let $g_j(\mathbf{c}|z)$ be the resulting conditional derivative. If $|g_j(\mathbf{c}|z)|$ is dominated by an integrable function $K(z)$, then $\mathcal{U}(\mathbf{c})$ is differentiable and
\begin{align*}
\frac{\partial \mathcal{U}(\mathbf{c})}{\partial \mathbf{c}_j} = \E_Z[g_j(\mathbf{c}|Z)].
\end{align*}
\end{corollary}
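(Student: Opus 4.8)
The plan is to obtain the result as a conditional-on-$Z$ application of Theorem~\ref{th:main_tech} combined with an exchange of limit and expectation. First I would invoke Fubini--Tonelli: since $\mathcal{U}(\mathbf{c})$ is assumed well-defined, the map $(y,z)\mapsto h(y,z,\mathbf{c})\1\{\phi(y,z,\mathbf{c})\ge 0\}$ is integrable against the joint law of $(Y,Z)$, so $\mathcal{U}(\mathbf{c}) = \int \mathcal{U}(\mathbf{c}\mid z)\,dP_Z(z)$, where $\mathcal{U}(\mathbf{c}\mid z) := \E\!\left[h(Y,Z,\mathbf{c})\1\{\phi(Y,Z,\mathbf{c})\ge 0\}\mid Z=z\right]$ is exactly the functional of Theorem~\ref{th:main_tech} for the conditional density $p(y\mid z)$. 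By hypothesis, for $P_Z$-almost every $z$ the triple $\big(p(\cdot\mid z),\, h(\cdot,z,\cdot),\, \phi(\cdot,z,\cdot)\big)$ satisfies Assumption~\ref{ass:main_appendix_A}, so Theorem~\ref{th:main_tech} yields that $\mathbf{c}\mapsto\mathcal{U}(\mathbf{c}\mid z)$ is differentiable with $j$-th partial derivative $g_j(\mathbf{c}\mid z)$, given by the explicit bulk-plus-surface formula.

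Next I would differentiate $\mathcal{U}$ directly from its integral representation. Forming the difference quotient in the $j$-th coordinate and pushing it through the $Z$-integral gives
\[
\frac{\mathcal{U}(\mathbf{c}_0 + t\mathbf{e}_j) - \mathcal{U}(\mathbf{c}_0)}{t} = \int \frac{\mathcal{U}(\mathbf{c}_0 + t\mathbf{e}_j\mid z) - \mathcal{U}(\mathbf{c}_0\mid z)}{t}\,dP_Z(z).
\]
For $P_Z$-a.e.\ $z$ the integrand converges to $g_j(\mathbf{c}_0\mid z)$ as $t\to 0$, by the conditional differentiability just established. To justify passing the limit inside the integral I would apply the mean value theorem to $t\mapsto\mathcal{U}(\mathbf{c}_0 + t\mathbf{e}_j\mid z)$---valid because this map is differentiable on a neighbourhood of $0$---writing each difference quotient as $g_j(\mathbf{c}_0 + \tilde t(z)\mathbf{e}_j\mid z)$ for some $\tilde t(z)$ between $0$ and $t$. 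The domination hypothesis $|g_j(\mathbf{c}\mid z)|\le K(z)$ with $K$ integrable then bounds every difference quotient by the $t$-independent envelope $K(z)$, so the Dominated Convergence Theorem applies and delivers simultaneously that $\mathcal{U}$ is differentiable in $c_j$ at $\mathbf{c}_0$ and that $\partial\mathcal{U}(\mathbf{c}_0)/\partial c_j = \int g_j(\mathbf{c}_0\mid z)\,dP_Z(z) = \E_Z[g_j(\mathbf{c}_0\mid Z)]$.

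The hard part will be the measurability and domination bookkeeping around the surface-integral piece of $g_j$. Two points deserve care. First, one must argue that $z\mapsto g_j(\mathbf{c}\mid z)$ is $P_Z$-measurable, so that both the iterated integral in the first step and the dominated convergence argument in the second are meaningful; the cleanest route is the coarea formula, which rewrites the Hausdorff-measure integral over $\{\phi(\cdot,z,\mathbf{c})=0\}$ as an ordinary integral and makes joint measurability in $z$ transparent. Second, the mean value theorem step implicitly consumes the bound $|g_j(\mathbf{c}\mid z)|\le K(z)$ \emph{uniformly for $\mathbf{c}$ in a neighbourhood of $\mathbf{c}_0$}, which is the natural reading of the hypothesis given the notation $g_j(\mathbf{c}\mid z)$ with free $\mathbf{c}$; if one only assumes the bound at $\mathbf{c}=\mathbf{c}_0$, the proof must instead dominate the bulk and surface pieces of the difference quotient directly, e.g.\ by positing integrable neighbourhood envelopes for $K_h(y,z)p(y\mid z)$ and for $h\,p$ near the boundary surface, and then re-running the estimates used inside the proof of Theorem~\ref{th:main_tech}. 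Everything else is routine.
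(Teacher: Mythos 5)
Your proposal is correct and follows essentially the same route as the paper's proof, which also conditions on $Z$, applies Theorem~\ref{th:main_tech} for $P_Z$-almost every $z$, and then invokes the "differentiated DCT" (i.e., your mean-value-theorem-plus-dominated-convergence argument) to interchange the derivative with $\E_Z[\cdot]$. Your additional remarks on measurability of $z\mapsto g_j(\mathbf{c}\mid z)$ and on reading the domination bound uniformly in a neighbourhood of $\mathbf{c}_0$ make explicit exactly the conditions the paper's terse proof implicitly uses.
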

\begin{proof}
By the law of total expectation, $\mathcal{U}(\mathbf{c}) = \E_Z[\mathcal{U}(\mathbf{c}|Z)]$. For a.e. $z$, $\mathcal{U}(\mathbf{c}|z)$ is differentiable with derivative $g_j(\mathbf{c}|z)$ by Theorem~\ref{th:main_tech}. The domination condition allows us to apply the Differentiated DCT to interchange the derivative and the outer expectation $\E_Z[\cdot]$.
\end{proof}
\newpage

\section{Derivation of the Marginal Policy Effect}\label{app:mpe_proof}

This appendix provides a formal derivation of the Marginal Policy Effect (MPE) discussed in Section~\ref{sec:derivation} of the main text. We first state the rigorous versions of the assumptions used in the derivation. We then prove a sequence of lemmas and propositions that decompose the MPE into its constituent parts: a direct effect, a competition effect, and a market conduct effect. The proof relies on the result for differentiating integrals over moving domains from Appendix~\ref{app:integrals}.

Let the space of observable data for an agent be $\mathcal{D} := \mathcal{Y}\times \mathcal{A}\times \mathcal{R}\times \mathcal{W}$. We assume the baseline policy regime is characterized by a probability measure on $\mathcal{D}$ that has a density with respect to a product measure $\lambda$. This density can be factorized as:
\begin{align*}
    f_{Y|A,W,R}(y|a,w,r) \mu_{a}(r, \mathbf{c}_0, P_{R|0}) f_{R|W}(r|w) f_{W|0}(w),
\end{align*}
where $\mathbf{c}_0$ is the equilibrium parameter vector and $P_{R|0}$ is the marginal measure on the report space $\mathcal{R}$, induced by the baseline policy $f_{W|0}$. Its density is $f_{R|0}(r) = \int f_{R|W}(r|w) f_{W|0}(w) dw$. We use $\E_0$ to denote the expectation with respect to this baseline measure.

A marginal policy reform is a perturbation of the baseline policy distribution $f_{W|0}$ in a specific direction. We characterize these directions by a set of score functions.
\begin{definition}[Policy Score Space]
The space of admissible policy scores, $\mathcal{S}_W$, is the set of functions $s_W: \mathcal{W} \to \R$ such that $\E_0[s_W(W_i)] = 0$ and $\E_0[s^2_W(W_i)] < \infty$.
\end{definition}
For any given score $s_W \in \mathcal{S}_W$, we can construct a local path of policy distributions indexed by a parameter $\theta \in \R$. A standard construction that accommodates all scores in $\mathcal{S}_W$ is the linear path:
\begin{align*}
    f_W(w|\theta, s_W) = f_{W|0}(w) (1 + \theta s_W(w))
\end{align*}
This path is well-defined for $\theta$ in a neighborhood of 0. It satisfies $f_W(w|0, s_W) = f_{W|0}(w)$ and guarantees that the score of the log-likelihood with respect to $\theta$ at $\theta=0$ is precisely $s_W(w)$:
\begin{align*}
    \frac{\partial}{\partial \theta} \log f_W(w|\theta, s_W) \bigg|_{\theta=0} = \frac{f_{W|0}(w)s_W(w)}{f_{W|0}(w)} = s_W(w).
\end{align*}
A reform to the policy distribution $f_W$ induces a change in the marginal distribution of reports $f_R$. The perturbed report density is given by:
\begin{align*}
    f_R(r|\theta, s_W) = \int f_{R|W}(r|w) f_W(w|\theta, s_W) dw.
\end{align*}
This perturbation of the report distribution also has a well-defined score, $s_R(r)$, which is characterized by the following result.
\begin{lemma}[Induced Score]\label{lemma:ind_score}
The score of the induced report distribution, $s_R(r) = \frac{\partial}{\partial\theta} \log f_R(r|\theta,s_W)|_{\theta=0}$, is the conditional expectation of the policy score:
\begin{align*}
    s_R(r) = \E_0[s_W(W_i) | R_i=r].
\end{align*}
We use $\mathcal{S}_{R}$ to denote the set of scores $s_{R}(R_i)$ induced by $\mathcal{S}_{W}$.
\end{lemma}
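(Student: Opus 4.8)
The plan is to compute the score $s_R$ directly from the explicit linear path, which turns the differentiation into an exact algebraic identity rather than a limiting operation. First I would substitute the path $f_W(w|\theta,s_W) = f_{W|0}(w)(1+\theta s_W(w))$ into the definition of the induced report density, obtaining
\[
    f_R(r|\theta,s_W) = \int f_{R|W}(r|w) f_{W|0}(w)\,dw + \theta \int f_{R|W}(r|w) f_{W|0}(w) s_W(w)\,dw = f_{R|0}(r) + \theta\, g(r),
\]
where $g(r) := \int f_{R|W}(r|w) f_{W|0}(w) s_W(w)\,dw$. Since this is affine in $\theta$, the derivative $\partial_\theta f_R(r|\theta,s_W)\big|_{\theta=0}$ equals $g(r)$, with no interchange of limit and integral required; the only point to check is that $g(r)$ is finite for $P_{R|0}$-almost every $r$, which follows from $\E_0[|s_W(W_i)|]<\infty$ (a consequence of $s_W\in\mathcal{S}_W\subset L_2$) together with Fubini's theorem applied to the baseline joint law of $(R_i,W_i)$.

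Next I would divide by $f_{R|0}(r)$, which is strictly positive $P_{R|0}$-almost surely on its support, to get
\[
    s_R(r) = \frac{g(r)}{f_{R|0}(r)} = \frac{\int f_{R|W}(r|w) f_{W|0}(w) s_W(w)\,dw}{f_{R|0}(r)},
\]
and then recognize the right-hand side via Bayes' rule: the baseline joint density of $(R_i,W_i)$ factors as $f_{R|W}(r|w)f_{W|0}(w)$, so the conditional density of $W_i$ given $R_i=r$ is $f_{R|W}(r|w)f_{W|0}(w)/f_{R|0}(r)$, and integrating $s_W$ against this conditional density is exactly $\E_0[s_W(W_i)\,|\,R_i=r]$. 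This establishes the claimed identity. To close the loop and justify that $\mathcal{S}_R$ is a well-defined set of scores, I would verify by the law of iterated expectations that $\E_0[s_R(R_i)] = \E_0[s_W(W_i)] = 0$, and by conditional Jensen's inequality that $\E_0[s_R(R_i)^2] \le \E_0[s_W(W_i)^2] < \infty$, so $s_R \in L_2(P_{R|0})$.

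There is no substantive obstacle here: the linearity of the path eliminates any delicate differentiation-under-the-integral argument, and the core of the proof is a one-line application of Bayes' rule once the derivative has been taken. The only care required is routine measure-theoretic bookkeeping—the almost-everywhere finiteness of $g$ and the positivity of $f_{R|0}$ on its own support—both of which are standard.
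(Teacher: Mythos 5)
Your proposal is correct and follows essentially the same route as the paper: differentiate the induced report density along the linear path (the paper invokes the DCT, you observe more directly that affinity in $\theta$ makes the derivative exact), then identify the resulting ratio as $\E_0[s_W(W_i)\,|\,R_i=r]$ via Bayes' rule. Your additional verification that $s_R$ has mean zero and finite second moment is a harmless bonus not present in the paper's proof.
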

\begin{proof}
By definition, $s_R(r) = f_R'(r|0) / f_{R|0}(r)$, where the prime denotes the partial derivative with respect to $\theta$ at $\theta=0$. We have:
\begin{align*}
    f_R'(r|0) &= \int f_{R|W}(r|w) f_W'(w|0) dw \\
    &= \int f_{R|W}(r|w) [s_W(w) f_{W|0}(w)] dw \\
    &= \int \frac{f_{R|W}(r|w) f_{W|0}(w)}{f_{R|0}(r)} s_W(w) f_{R|0}(r) dw \\
    &= f_{R|0}(r) \int f_{W|R}(w|r) s_W(w) dw = f_{R|0}(r) \E_0[s_W(W_i)|R_i=r].
\end{align*}
Here the differentiation under the integral is justified because $f_{W}(w|\theta, s_W)$ is linear in $\theta$ and the DCT can be applied because $\E_0[s^2_{W}(W_i)] < \infty$. Dividing by $f_{R|0}(r)$ yields the result.
\end{proof}

\begin{assumption}[Marginal Agent Regularity] \label{ass:marginal_agents_formal}
For each $a \in \mathcal{A}$ define $m_{a}(r) := \mathbb{E}_0[Y_i(W_i, a)|R_i =r]$; we assume
\begin{enumerate}
    \item \textbf{Report Structure:} The report vector can be decomposed as $R_i = (R_{i,un}, R_{i,cont})$, where, conditional on $R_{i,un}$, the distribution of $R_{i,cont}$ is absolutely continuous with respect to the Lebesgue measure on $\mathbb{R}^{k}$ with a bounded and continuous density function.
    
    \item \textbf{Continuity of Conditional Outcomes:} For each allocation $a \in \mathcal{A}$ and fixed $r_{un}$, the conditional mean function $r_{cont} \mapsto m_a(r_{un}, r_{cont})$ is continuous and bounded.
\end{enumerate}
\end{assumption}

\begin{assumption}[Well-Behaved Mechanism] \label{ass:mechanism_formal}
For any $a \in \mathcal{A}$ the allocation probability, $\mu_a(r, \mathbf{c}, P_R)$, can be decomposed as:
\begin{align*}
    \mu_a(r, \mathbf{c}, P_R) = h_a(r, \mathbf{c}, P_R) \cdot \ind\{\phi_a(r, \mathbf{c}) \ge 0\}
\end{align*}
where the components satisfy the following conditions at the baseline equilibrium $(\mathbf{c}_0, P_{R|0})$:
\begin{enumerate}
    \item $P_{R_un|0}$-a.s. the functions $(r_{cont}, \mathbf{c}) \mapsto \phi_a(r_{un},r_{cont}, \mathbf{c})$,  $(r_{cont}, \mathbf{c}) \mapsto h_a(r_{un},r_{cont}, \mathbf{c},P_{R|0})$ satisfy the conditions laid out in Assumption~\ref{ass:main_appendix_A} of Appendix~\ref{app:integrals}.
    
    \item  $P_{R|0}$-a.s. the function $(\mathbf{c}, P_R) \mapsto h_a(r, \mathbf{c}, P_R)$ is Hadamard differentiable with respect to $P_R$ at $P_{R|0}$ for paths induced by scores $s_R \in \mathcal{S}_R$. Its derivative in the direction $s_R$ is a continuous linear functional given by:
    \begin{align*}
        D_P h_a(r, \mathbf{c}, P_{R|0})[s_R] = \int L_{a,\mathbf{c}}(r, r') s_R(r') dr'
    \end{align*}
    where $L_{a,\mathbf{c}}(r, r')$ is the square-integrable kernel and the mapping $\mathbf{c} \mapsto L_{a,\mathbf{c}}(r, r')$ is continuous at $\mathbf{c}_0$.
\end{enumerate}
\end{assumption}

\begin{assumption}[Differentiability of the Market Conduct Rule] \label{ass:conduct_rule_formal}
The market conduct rule $\mathbf{c}(P_R)$ is Hadamard differentiable at the baseline report distribution $P_{R|0}$ along the paths induced by scores $s_R$. Its derivative, a continuous linear operator from the space of scores to $\R^{p}$, has the representation:
\begin{align*}
    \mathbf{c}'[s_R] = \langle \boldsymbol{\psi}_{c_0}, s_R \rangle_{\mathcal{H}_R}
\end{align*}
where $\langle \cdot, \cdot \rangle_{\mathcal{H}_R}$ is the inner product on a Hilbert space $\mathcal{H}_R$ containing the scores, and $\boldsymbol{\psi}_{c_0}$ is the representer of the derivative.
\end{assumption}

We will split the proof of Theorem~\ref{th:mpe_general} and Corollary~\ref{cor:psi} into Lemmas~\ref{lemma:direct}-\ref{lemma:id}. The final result follows from the direct combination of the intermediate results. 

\begin{lemma}[The Direct Effect]\label{lemma:direct} Suppose Assumption~\ref{ass:marginal_agents_formal} holds.
The direct effect of a policy reform, defined as the welfare impact of perturbing the policy distribution while holding the allocation rule fixed at the baseline $(\mathbf{c}_0, P_{R|0})$, is given by $\E_0[Y_i s_W(W_i)]$.
\end{lemma}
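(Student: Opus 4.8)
The plan is to compute the direct effect as the $\theta$-derivative at $\theta=0$ of the welfare functional obtained from the factorization in Proposition~\ref{prop:id_sup} when every component is frozen at its baseline value except the policy marginal $f_W$. Concretely, define
\begin{align*}
\mathcal{U}^{\text{dir}}(\theta) := \int_{\mathcal{D}} y\, f_{Y|A,W,R}(y|a,w,r)\, \mu_a(r,\mathbf{c}_0,P_{R|0})\, f_{R|W}(r|w)\, f_W(w|\theta,s_W)\,\dd\lambda,
\end{align*}
so that the direct effect, as it appears in the decomposition \eqref{eq:welfare_decomp}, is $\frac{d}{d\theta}\mathcal{U}^{\text{dir}}(\theta)\big|_{\theta=0}$. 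The structural fact I would exploit is that the admissible local path $f_W(w|\theta,s_W)=f_{W|0}(w)\big(1+\theta s_W(w)\big)$ is affine in $\theta$, and it is the only $\theta$-dependent factor in the integrand. Hence the difference quotient $\theta^{-1}\big(\mathcal{U}^{\text{dir}}(\theta)-\mathcal{U}^{\text{dir}}(0)\big)$ is identically equal, for every admissible $\theta\neq 0$, to the single integral
\begin{align*}
\int_{\mathcal{D}} y\, f_{Y|A,W,R}(y|a,w,r)\, \mu_a(r,\mathbf{c}_0,P_{R|0})\, f_{R|W}(r|w)\, f_{W|0}(w)\, s_W(w)\,\dd\lambda,
\end{align*}
so no genuine limiting argument is required: the derivative exists as soon as this integral converges absolutely.

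The second step is purely one of recognition. The integrand above is $y\, s_W(w)$ multiplied by the baseline joint density of $D_i=(Y_i,A_i,R_i,W_i)$ displayed at the start of this appendix, so the integral equals $\E_0[Y_i s_W(W_i)]$ by definition, which is the claimed expression. I would also remark, following the main text, that under Assumption~\ref{ass:random_assignment} this object coincides with the average causal effect of the policy on the realized outcome $Y_i$ evaluated at the baseline equilibrium, but that reading is not needed for the statement.

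The only substantive point is verifying the absolute integrability $\E_0\big[|Y_i s_W(W_i)|\big]<\infty$, which legitimizes writing the difference quotient as a convergent integral and identifying its value. I would establish this via the law of iterated expectations, conditioning on $(A_i,R_i,W_i)$ and controlling the conditional first moment of $Y_i$ using the boundedness of the conditional mean functions $m_a$ from Assumption~\ref{ass:marginal_agents_formal} together with the standing moment condition on $Y_i$, followed by the Cauchy--Schwarz inequality and $\E_0[s_W^2(W_i)]<\infty$, which holds because $s_W$ belongs to the policy score space $\mathcal{S}_W$. I expect this integrability bookkeeping to be the only real obstacle; because the perturbation enters linearly through $f_W$ alone, the differentiation step itself is mechanical, which is why this lemma serves chiefly as a building block for the more delicate competition and market conduct effects that follow.
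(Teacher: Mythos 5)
Your proposal is correct and follows essentially the same route as the paper: freeze the allocation rule at $(\mathbf{c}_0, P_{R|0})$, differentiate the resulting expectation in $\theta$, and identify the score of the only $\theta$-dependent factor as $s_W(w)$, with integrability supplied by the bounded conditional means in Assumption~\ref{ass:marginal_agents_formal}. Your observation that the affine path $f_{W|0}(w)(1+\theta s_W(w))$ makes the difference quotient constant in $\theta$ is a slightly cleaner justification than the paper's appeal to differentiation under the integral, but it is the same argument in substance.
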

\begin{proof}
The welfare functional for the direct effect is 
\begin{align*}
    \mathcal{U}_{\text{direct}}(\theta):= \int y f_{Y|A,W,R} \mu_a(r, \mathbf{c}_0, P_{R|0}) f_{R|W} f_W(w|\theta,s_W) d\lambda.
\end{align*}
Since $\mu_a$ is held fixed, this is a standard expectation. The score of the density of the data with respect to $\theta$ is simply $s_W(w)$. The derivative of the expectation is the expectation of the outcome multiplied by the score, which gives $\E_0[Y_i s_W(W_i)]$. The differentiation under the integral is permitted by Assumption~\ref{ass:marginal_agents_formal} which guarantees that the conditional expectation of $Y_i$ is bounded. 
\end{proof}

Next, we consider the indirect effect. To this end, we define 
\begin{align*}
    \mathcal{U}_{\text{part}}(\theta) := \sum_{a= 0}^K\int m_{a}(r) \mu_{a}(r, \mathbf{c}(\theta, s_W), P_{R|\theta, s_W}) f_{R|0}(r)dr
\end{align*}
We define this auxiliary functional, which evolves the market mechanism but holds the population of agents fixed to the baseline distribution, as a tool to isolate the indirect effects via the chain rule.

\begin{lemma}[Decomposition of the Indirect Effect]\label{lemma:decomp} Suppose Assumptions~\ref{ass:marginal_agents_formal}-\ref{ass:conduct_rule_formal} hold. Then $\mathcal{U}^{\prime}_{\text{part}}(0)$ exists and is equal to:
\begin{align*}
\mathcal{U}'_{\text{part}}(0) =
\sum_{a=0}^K\mathbb{E}_0[m_{a}(R_i) L_{a,\mathbf{c}_0}(R_i, R_j) s_R(R_j)] + \langle \nabla_c\mathcal{U}(c_0, P_{R|0}) \cdot \boldsymbol{\psi}_{c_0}, s_R \rangle_{\mathcal{H}_R},
\end{align*}
where 
\begin{align*}
    \nabla_c\mathcal{U}(c_0, P_{R|0}) = &
    \sum_{a = 0}^{K}
    \mathbb{E}_0\left[m_{a}(R_i)\nabla_{\mathbf{c}} h_a(R_i, \mathbf{c}_0, P_{R|0}) \1\{\phi_a(R_i, \mathbf{c}_0) \ge 0\}\right] + \\
    & \sum_{a = 0}^{K}\E_0\left[\mathbb{E}_0\left[ m_a(R_i)h_a(R_i, \mathbf{c}_0,P_{R|0}) \nabla_{\mathbf{c}} \phi_a(R_i, \mathbf{c}_0)\mid R_{i,un}, \phi_a(R_i, \mathbf{c}_0) = 0 \right] f_{\phi_a}(0\mid R_{i,un})\right],
\end{align*}
and $f_{\phi_a}(0\mid r_{un})$ is the conditional density of $\phi_a(R_i, \mathbf{c}_0)$ given $R_{i,un} = r_{un}$.
\end{lemma}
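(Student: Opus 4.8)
The plan is to recognize $\mathcal{U}_{\text{part}}$ as a composition and differentiate it by the chain rule. Reintroduce the population-fixed welfare functional $\mathcal{U}(\mathbf{c},P):=\sum_{a=0}^{K}\mathbb{E}_0\!\left[m_a(R_i)\mu_a(R_i,\mathbf{c},P)\right]$, so that $\mathcal{U}_{\text{part}}(\theta)=\mathcal{U}(\mathbf{c}(\theta,s_W),P_{R|\theta,s_W})$. I would prove the lemma in three steps: (i) differentiate the inner path $\theta\mapsto(\mathbf{c}(\theta,s_W),P_{R|\theta,s_W})$ at $\theta=0$; (ii) establish joint Hadamard differentiability of $\mathcal{U}$ at $(\mathbf{c}_0,P_{R|0})$ along the admissible directions and compute its two partials; (iii) combine by the chain rule and read off the three terms.

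Step (i) is immediate from the construction in the appendix: the linear path $f_W(w|\theta,s_W)=f_{W|0}(w)(1+\theta s_W(w))$ is affine in $\theta$, hence so is the induced report density, whose tangent is the score $s_R(r)=\mathbb{E}_0[s_W(W_i)\mid R_i=r]$ by Lemma~\ref{lemma:ind_score}; composing with the Hadamard-differentiable conduct rule $\mathbf{c}(\cdot)$ (Assumption~\ref{ass:conduct_rule_formal}) then gives $\frac{d}{d\theta}\mathbf{c}(\theta,s_W)\big|_{\theta=0}=\mathbf{c}'[s_R]=\langle\boldsymbol{\psi}_{c_0},s_R\rangle_{\mathcal{H}_R}$.

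Step (ii) exploits the factorization $\mu_a=h_a\cdot\ind\{\phi_a\ge0\}$. For the $\mathbf{c}$-partial I would hold $P=P_{R|0}$ and apply Theorem~\ref{th:main_tech} in the conditional form of Corollary~\ref{cor:cond_diff}, conditioning on $R_{i,un}$ and differentiating over the absolutely continuous component $R_{i,cont}$; its hypotheses (Assumption~\ref{ass:main_appendix_A}) hold term-by-term with the identifications $h\leftarrow m_ah_a$, $\phi\leftarrow\phi_a$, $p\leftarrow f_{R_{cont}\mid R_{un}}$, by Assumptions~\ref{ass:marginal_agents_formal} and \ref{ass:mechanism_formal}(1), while the domination over $R_{i,un}$ follows from boundedness of $m_a$, $h_a$ and its derivative. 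This yields, for each $a$, the interior term $\mathbb{E}_0[m_a(R_i)\nabla_{\mathbf{c}}h_a(R_i,\mathbf{c}_0,P_{R|0})\ind\{\phi_a(R_i,\mathbf{c}_0)\ge0\}]$ together with a Hausdorff-surface term; disintegrating $R_{i,cont}$ along $\{\phi_a=0\}$ — the coarea identity turning $\int_{\{\phi_a=0\}}g\,\|\nabla_{r_{cont}}\phi_a\|_2^{-1}f_{R_{cont}\mid R_{un}}\,d\mathcal{H}^{k-1}$ into $\mathbb{E}_0[g(R_i)\mid R_{i,un},\phi_a(R_i,\mathbf{c}_0)=0]\,f_{\phi_a}(0\mid R_{i,un})$ — converts the surface term into exactly the second displayed line of $\nabla_c\mathcal{U}(c_0,P_{R|0})$, and summing over $a$ produces the first line. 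For the $P$-partial I would hold $\mathbf{c}=\mathbf{c}_0$, so $\ind\{\phi_a(R_i,\mathbf{c}_0)\ge0\}$ is frozen; Assumption~\ref{ass:mechanism_formal}(2) then gives $D_P\mathcal{U}(\mathbf{c}_0,P_{R|0})[s_R]=\sum_a\mathbb{E}_0[m_a(R_i)L_{a,\mathbf{c}_0}(R_i,R_j)s_R(R_j)]$, the frozen indicator being absorbed into $L_{a,\mathbf{c}_0}$, which vanishes off the eligibility set since a marginal perturbation of competitors' reports cannot move a hard cutoff. The chain rule then delivers $\mathcal{U}'_{\text{part}}(0)=\nabla_{\mathbf{c}}\mathcal{U}(\mathbf{c}_0,P_{R|0})\cdot\mathbf{c}'[s_R]+D_P\mathcal{U}(\mathbf{c}_0,P_{R|0})[s_R]$, and $\nabla_{\mathbf{c}}\mathcal{U}\cdot\mathbf{c}'[s_R]=\langle\nabla_c\mathcal{U}(c_0,P_{R|0})\cdot\boldsymbol{\psi}_{c_0},s_R\rangle_{\mathcal{H}_R}$ by linearity of the inner product, which is the claimed identity.

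The main obstacle is that the chain rule of step (iii) requires \emph{joint} differentiability of $\mathcal{U}$ in $(\mathbf{c},P)$, whereas Theorem~\ref{th:main_tech} is stated for a fixed integrating density and varying $\mathbf{c}$ only — along the path both arguments move simultaneously. I would bridge this with an add-and-subtract decomposition $\mathcal{U}(\mathbf{c}(\theta),P_{R|\theta})-\mathcal{U}(\mathbf{c}_0,P_{R|0})=[\mathcal{U}(\mathbf{c}(\theta),P_{R|\theta})-\mathcal{U}(\mathbf{c}_0,P_{R|\theta})]+[\mathcal{U}(\mathbf{c}_0,P_{R|\theta})-\mathcal{U}(\mathbf{c}_0,P_{R|0})]$: the second bracket is the pure $P$-derivative above, and in the first bracket the moving $P_{R|\theta}$ enters only through $h_a$, so the required first-order expansion follows once the $\mathbf{c}$-derivative of $\mathcal{U}(\cdot,P)$ is shown continuous in $P$ at $P_{R|0}$ — for the interior term via continuity of $\nabla_{\mathbf{c}}h_a$ in $P$ and dominated convergence, and for the boundary term via continuity of $h_a$ in $P$ near the baseline level surface together with the continuity of $\mathbf{c}\mapsto L_{a,\mathbf{c}}$ at $\mathbf{c}_0$ already posited in Assumption~\ref{ass:mechanism_formal}(2). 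These continuity/uniformity checks are the substantive part; the remainder is routine bookkeeping with the mean value theorem, dominated convergence, and Theorem~\ref{th:main_tech}.
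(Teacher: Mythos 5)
Your proposal is correct and follows essentially the same route as the paper: a chain-rule decomposition of $\mathcal{U}(\mathbf{c},P)$ into the $P$-partial (via the Hadamard differentiability of $h_a$ in Assumption~\ref{ass:mechanism_formal}, with the frozen indicator absorbed into the vanishing of $L_{a,\mathbf{c}_0}$ off the eligibility set) and the $\mathbf{c}$-partial (via Corollary~\ref{cor:cond_diff} and the coarea conversion of the Hausdorff surface integral into the conditional-expectation-times-density form), composed with $\mathbf{c}'[s_R]=\langle\boldsymbol{\psi}_{c_0},s_R\rangle_{\mathcal{H}_R}$. Your add-and-subtract argument justifying the joint differentiability needed for the chain rule is a point the paper's proof leaves implicit, and is a welcome strengthening rather than a departure.
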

\begin{proof}
The derivative of $\mathcal{U}_{\text{part}}(\theta)$ at $\theta=0$ can be found by applying the chain rule to the underlying functional $\mathcal{U}(\mathbf{c}, P_R) = \sum_a \int m_a(r) \mu_a(r, \mathbf{c}, P_R) f_{R|0}(r) dr$. The derivative is the sum of the partial derivatives with respect to each argument, evaluated along the path of the reform:
\begin{align*}
    \mathcal{U}'_{\text{part}}(0) = \underbrace{D_P \mathcal{U}(\mathbf{c}_0, P_{R|0})[s_R]}_{\text{Competition Effect}} + \underbrace{\nabla_c \mathcal{U}(\mathbf{c}_0, P_{R|0}) \cdot \mathbf{c}'[s_R]}_{\text{Market Conduct Effect}}.
\end{align*}
The first term, the Competition Effect, is the functional derivative with respect to $P_R$ in the direction of the induced score $s_R$. Using Assumption~\ref{ass:mechanism_formal}, it is:
\begin{align*}
    D_P \mathcal{U}(\mathbf{c}_0, P_{R|0})[s_R] &= \sum_a \mathbb{E}_0[m_a(R_i) D_P \mu_a(R_i, \mathbf{c}_0, P_{R|0})[s_R]] \\
    &= \sum_a \mathbb{E}_0[m_a(R_i) \ind\{\phi_a(R_i, \mathbf{c}_0) \ge 0\} D_P h_a(R_i, \mathbf{c}_0, P_{R|0})[s_R]] \\
    &= \sum_a \mathbb{E}_0\left[m_a(R_i) \int L_{a, \mathbf{c}_0}(R_i, r') s_R(r') dr'\right] \\
    &= \sum_a \mathbb{E}_0[m_{a}(R_i) L_{a,\mathbf{c}_0}(R_i, R_j) s_R(R_j)].
\end{align*}
The second term is the Market Conduct Effect. The gradient $\nabla_c \mathcal{U}(\mathbf{c}_0, P_{R|0})$ is derived by applying Corollary~\ref{cor:cond_diff} from Appendix~\ref{app:integrals}, yielding the expression in the lemma statement. Combining this with the derivative of the market conduct rule from Assumption~\ref{ass:conduct_rule_formal}, $\mathbf{c}'[s_R] = \langle \boldsymbol{\psi}_{c_0}, s_R \rangle_{\mathcal{H}_R}$, gives the second part of the result. Summing the two effects proves the lemma.
\end{proof}

\begin{lemma}[Decomposition of the MPE]
Suppose Assumptions~\ref{ass:marginal_agents_formal}-\ref{ass:conduct_rule_formal} hold; then the total MPE can be decomposed into a direct effect and an indirect equilibrium effect:
\begin{align*}
    \mathcal{U}'(0) = \E_0[Y_i s_W(W_i)] + \mathcal{U}'_{\text{part}}(0).
\end{align*}
\end{lemma}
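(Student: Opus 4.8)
The plan is to obtain the decomposition as a product-rule argument along the affine policy path fixed in this appendix, $f_W(w\mid\theta,s_W)=f_{W|0}(w)(1+\theta s_W(w))$, exploiting the fact that along an affine path there is no cross term between perturbing the policy instrument and perturbing the mechanism. First I would use Proposition~\ref{prop:id_sup} to write the counterfactual welfare as
\begin{equation*}
\mathcal{U}(\theta,s_W)=\E_{(\theta,s_W)}[Y_i]=\int y\, f^{\text{obs}}_{Y|A,R,W}(y|a,r,w)\,\mu_a\!\left(r,\mathbf{c}(P_{R|\theta,s_W}),P_{R|\theta,s_W}\right) f^{\text{obs}}_{R|W}(r|w)\, f_W(w|\theta,s_W)\, d\lambda,
\end{equation*}
and observe that, by Assumptions~\ref{ass:mechanism} and \ref{ass:structural_invariance}, the only two $\theta$-dependent factors are the policy factor $f_W(w\mid\theta,s_W)$ and the mechanism factor, which I abbreviate $G_a(r,\theta):=\mu_a(r,\mathbf{c}(P_{R|\theta,s_W}),P_{R|\theta,s_W})$, with $G_a(r,0)=\mu_a(r,\mathbf{c}_0,P_{R|0})$.

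Second, substituting the affine path $f_W(w\mid\theta,s_W)=f_{W|0}(w)+\theta f_{W|0}(w)s_W(w)$ and distributing splits the integral \emph{exactly} as $\mathcal{U}(\theta,s_W)=A(\theta)+\theta B(\theta)$, where $A(\theta)$ carries the weight $f_{W|0}(w)$ and $B(\theta)$ the weight $f_{W|0}(w)s_W(w)$. In each piece $G_a(r,\theta)$ depends on neither $y$ nor $w$, so I integrate those out first; this is where the substantive modelling step enters. By Policy Invariance ($Y_i=Y_i(W_i,A_i)$) and the conditional exogeneity of the allocation lottery built into the anonymous mechanism of Assumption~\ref{ass:mechanism} (so that, given $R_i$, $A_i$ is drawn from $\mu_{\cdot}(R_i,\mathbf{c}_0,P_{R|0})$ independently of the potential outcomes and with a distribution not depending on $W_i$), the tower property gives
\begin{equation*}
\int y\, f^{\text{obs}}_{Y|A,R,W}(y|a,r,w)\, f^{\text{obs}}_{R|W}(r|w)\, f_{W|0}(w)\, dy\, dw \;=\; m_a(r)\, f_{R|0}(r), \qquad m_a(r):=\E_0[\,Y_i(W_i,a)\mid R_i=r\,].
\end{equation*}
Hence $A(\theta)=\sum_a\int m_a(r)\,G_a(r,\theta)\,f_{R|0}(r)\,dr=\mathcal{U}_{\text{part}}(\theta)$ is \emph{literally} the auxiliary functional of Lemma~\ref{lemma:decomp}, while the same computation with the extra weight $s_W(w)$ (square-integrable, so the conditional means stay integrable by Cauchy--Schwarz against the bounded conditional expectation of $Y_i$) gives $B(0)=\E_0[Y_i s_W(W_i)]$, the direct effect of Lemma~\ref{lemma:direct} evaluated at the baseline mechanism.

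Third, I differentiate at $\theta=0$. The term $A'(0)=\mathcal{U}'_{\text{part}}(0)$ exists and equals the expression supplied by Lemma~\ref{lemma:decomp}. For the remaining term, $\frac{d}{d\theta}\big[\theta B(\theta)\big]\big|_{\theta=0}=\lim_{\theta\to0}B(\theta)=B(0)$, which is exactly the point at which a cross term between the policy and mechanism perturbations \emph{could} appear but does not: it contributes only $\theta\,[B(\theta)-B(0)]=O(\theta^2)$. Continuity of $B$ at $0$ is a routine dominated-convergence argument: $0\le G_a(r,\theta)\le 1$ uniformly, and $G_a(r,\theta)\to G_a(r,0)$ for $P_{R|0}$-a.e.\ $r$, because $\mathbf{c}(P_{R|\theta,s_W})\to\mathbf{c}_0$ by continuity of the conduct rule (Assumption~\ref{ass:conduct_rule_formal}) and the discontinuity surface $\{\phi_a(r,\mathbf{c}_0)=0\}$ is Lebesgue-null, hence $P_{R|0}$-null, by the absolute continuity of $R_{i,cont}$ given $R_{i,un}$ in Assumption~\ref{ass:marginal_agents_formal}; the $y,w$-integrated weight is integrable in $r$ by the bounded-conditional-mean-against-$L_2$-score estimate above. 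Adding the two contributions gives $\mathcal{U}'(0)=\mathcal{U}'_{\text{part}}(0)+\E_0[Y_i s_W(W_i)]$ and, incidentally, establishes that $\mathcal{U}'(0)$ exists.

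The step to be most careful about — and the conceptual heart of the lemma — is the reduction $\int y\, f^{\text{obs}}_{Y|A,R,W}f^{\text{obs}}_{R|W}f_{W|0}\, dy\,dw=m_a(r)f_{R|0}(r)$: this is precisely where Policy Invariance and the exogeneity of the allocation lottery are invoked to replace the observed conditional mean of $Y_i$ by the policy-invariant structural object $m_a$, and it is what makes $A(\theta)$ \emph{coincide} with $\mathcal{U}_{\text{part}}(\theta)$ rather than merely agree with it to first order. Everything else is routine: the hard analytic work of differentiating through the moving eligibility boundary $\{\phi_a\ge0\}$ is already packaged inside Lemma~\ref{lemma:decomp} via the shape-derivative result of Appendix~\ref{app:integrals}, and the remaining interchanges of limit and integral are standard. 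The takeaway — and the reason the decomposition has no cross term — is that the affine parametrization of the policy path makes the joint perturbation of $(f_W,\mathbf{c}(P_R),P_R)$ separate into a first-order policy piece evaluated at the baseline mechanism plus a mechanism piece evaluated at the baseline policy.
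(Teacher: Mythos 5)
Your proposal is correct and follows essentially the same route as the paper: the paper adds and subtracts $\mu_a(0) f_{R|W} f_W(\theta)$ in the difference quotient to isolate a term that is exactly the difference quotient of $\mathcal{U}_{\text{part}}$, a direct-effect term converging to $\E_0[Y_i s_W(W_i)]$, and a cross term $(\mu_a(\theta)-\mu_a(0))\cdot\frac{f_W(\theta)-f_W(0)}{\theta}$ shown to vanish by pointwise convergence of the allocation probabilities plus dominated convergence --- which is precisely your $\theta[B(\theta)-B(0)]=O(\theta^2)$ step, merely reorganized via the exact affine split $\mathcal{U}(\theta)=A(\theta)+\theta B(\theta)$. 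Your justification that the cross term vanishes (null discontinuity surface $\{\phi_a(r,\mathbf{c}_0)=0\}$ under the absolute continuity of $R_{i,cont}$) is in fact slightly more explicit than the paper's appeal to ``continuity of the mechanism and market conduct rule.''
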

\begin{proof}
We analyze the difference quotient for $\mathcal{U}(\theta)$. Let $\mu_a(\theta) = \mu_a(r, \mathbf{c}(\theta), P_{R|\theta})$ and $\mu_a(0) = \mu_a(r, \mathbf{c}_0, P_{R|0})$. The difference quotient is:
\begin{align*}
    \frac{\mathcal{U}(\theta) - \mathcal{U}(0)}{\theta} = \frac{1}{\theta} \int y f_{Y|A,W,R} \left[ \mu_a(\theta) f_{R|W} f_W(\theta) - \mu_a(0) f_{R|W} f_W(0) \right] d\lambda
\end{align*}
We decompose the term in the brackets by adding and subtracting $\mu_a(0) f_{R|W} f_W(\theta)$:
\begin{align*}
    \frac{1}{\theta} \int y f_{Y|A,W,R} \left[ (\mu_a(\theta) - \mu_a(0)) f_{R|W} f_W(\theta) + \mu_a(0) f_{R|W} (f_W(\theta) - f_W(0)) \right] d\lambda
\end{align*}
The second term in this sum corresponds to the direct effect. As $\theta \to 0$, its limit is $\E_0[Y_i s_W(W_i)]$ by Lemma~\ref{lemma:direct}. We now focus on the first term, which captures the indirect effect:
\begin{align*}
    \text{Indirect Term} = \frac{1}{\theta} \int y f_{Y|A,W,R} (\mu_a(\theta) - \mu_a(0)) f_{R|W} f_W(\theta) d\lambda
\end{align*}
We further decompose this by writing $f_W(\theta) = f_W(0) + (f_W(\theta) - f_W(0))$:
\begin{align*}
    \text{Indirect Term} &= \frac{1}{\theta} \int y f_{Y|A,W,R} (\mu_a(\theta) - \mu_a(0)) f_{R|W} f_W(0) d\lambda \quad &(A) \\
    &+ \int y f_{Y|A,W,R} (\mu_a(\theta) - \mu_a(0)) f_{R|W} \frac{f_W(\theta) - f_W(0)}{\theta} d\lambda \quad &(B)
\end{align*}
The second part, Term (B), is negligible. As $\theta \to 0$, the factor $(\mu_a(\theta) - \mu_a(0))$ converges to 0 due to the continuity of the mechanism and market conduct rule. The other factor, $\frac{f_W(\theta) - f_W(0)}{\theta}$, converges to $s_W(w)f_{W|0}(w)$. Thus, the entire integrand converges pointwise to 0. The Dominated Convergence Theorem (justified by our assumptions) implies the integral of this term converges to 0.

The first part, Term (A), is the difference quotient for $\mathcal{U}_{\text{part}}(\theta)$. By integrating over $y$ and $w$ with the baseline policy $f_W(0)$, the expression becomes:
\begin{align*}
    \text{Term (A)} = \frac{1}{\theta} \left( \int m_a(r)\mu_a(\theta)f_{R|0}(r)dr - \int m_a(r)\mu_a(0)f_{R|0}(r)dr \right) = \frac{\mathcal{U}_{\text{part}}(\theta) - \mathcal{U}_{\text{part}}(0)}{\theta}
\end{align*}
Taking the limit as $\theta \to 0$ for all terms, we find that $\mathcal{U}'(0)$ is the sum of the direct effect and $\mathcal{U}'_{\text{part}}(0)$, which proves the result.
\end{proof}

\begin{lemma}[Identification]\label{lemma:id}
    Suppose Assumptions~\ref{ass:marginal_agents_formal}-\ref{ass:conduct_rule_formal} hold. Also, suppose $\E_0\left[\frac{1}{h_a(R_i, \mathbf{c}_0,P_{R|0})}\right] < \infty$ and $\E_0\left[\left(\frac{\|\nabla_{\mathbf{c}} h(R_i, \mathbf{c}_0, P_{R|0}) \|_2}{h_a(R_i, \mathbf{c}_0,P_{R|0})}\right)^2\right] <\infty$ Then the MPE is identified
\end{lemma}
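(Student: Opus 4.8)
The plan is to start from the MPE decomposition already assembled in Lemmas~\ref{lemma:direct}--\ref{lemma:decomp} and the ``Decomposition of the MPE'' lemma: the MPE equals the direct effect $\E_0[Y_i s_W(W_i)]$, plus the competition effect $\sum_{a}\E_0[m_a(R_i) L_{a,\mathbf{c}_0}(R_i,R_j)s_R(R_j)]$, plus the market conduct effect $\langle \nabla_{\mathbf{c}}\mathcal{U}(\mathbf{c}_0,P_{R|0})\cdot\boldsymbol{\psi}_{\mathbf{c}_0},\,s_R\rangle_{\mathcal{H}_R}$, where $\nabla_{\mathbf{c}}\mathcal{U}$ itself splits into an inframarginal piece and a boundary (surface-integral) piece. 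Every ingredient other than the counterfactual conditional mean $m_a(r)=\E_0[Y_i(W_i,a)\mid R_i=r]$ is either directly observed ($Y_i,A_i,R_i,W_i$, hence $P_{R|0}$, and by the known conduct rule $\mathbf{c}_0=\mathbf{c}(P_{R|0})$) or is a known function of the mechanism ($\mu_a$, $h_a$, $\phi_a$, the kernel $L_{a,\mathbf{c}}$, the representer $\boldsymbol{\psi}_{\mathbf{c}_0}$, e.g.\ $-\mathbf{J}_0^{-1}(\mu_1,\dots,\mu_K)^\top$ in the competitive case). So the proof reduces to eliminating $m_a$ from each term. The direct effect needs no work: it is already an expectation of observables, finite because $m_a$ is bounded (Assumption~\ref{ass:marginal_agents_formal}) and $s_W\in\mathcal{S}_W$.

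For the competition effect and the inframarginal part of $\nabla_{\mathbf{c}}\mathcal{U}$, I would use an inverse-probability-weighting identity. Under random assignment the allocation $A_i$ given $R_i=r$ is drawn from $\mu_{\cdot}(r,\mathbf{c}_0,P_{R|0})$ independently of $(W_i,\{Y_i(w,a)\})$, so $\E_0[Y_i\mid A_i=a,R_i=r]=m_a(r)$, and for any measurable $g(a,r)$ that vanishes where $\phi_a(r,\mathbf{c}_0)<0$,
\[
\E_0\!\left[Y_i\,\frac{g(A_i,R_i)}{\mu_{A_i}(R_i,\mathbf{c}_0,P_{R|0})}\right]=\sum_{a}\E_0\bigl[m_a(R_i)\,g(a,R_i)\bigr],
\]
since $\mu_a=h_a\ind\{\phi_a\ge0\}$ and $\{h_a(\cdot,\mathbf{c}_0,P_{R|0})=0\}$ is $P_{R|0}$-null by the hypothesis $\E_0[1/h_a]<\infty$, which also makes the weights a.s.\ well defined. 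Because $D_P\mu_a=\ind\{\phi_a\ge0\}D_Ph_a$, the competition kernel enters only as $\ind\{\phi_a(r,\mathbf{c}_0)\ge0\}L_{a,\mathbf{c}_0}(r,r')$; taking $g(a,r)=\ind\{\phi_a(r,\mathbf{c}_0)\ge0\}L_{a,\mathbf{c}_0}(r,R_j)$ turns the competition effect into $\E_0[\gamma(R_j)s_R(R_j)]=\E_0[\gamma(R_j)s_W(W_j)]$ with $\gamma(R_j)=\E_0[Y_i L_{A_i,\mathbf{c}_0}(R_i,R_j)/\mu_{A_i}(R_i,\mathbf{c}_0,P_{R|0})\mid R_j]$, and (using $\phi_{A_i}(R_i,\mathbf{c}_0)\ge0$ a.s.\ in the observed data) the inframarginal gradient becomes $\E_0[Y_i\,\nabla_{\mathbf{c}}h_{A_i}(R_i,\mathbf{c}_0,P_{R|0})/h_{A_i}(R_i,\mathbf{c}_0,P_{R|0})]$. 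Here Lemma~\ref{lemma:ind_score} converts $s_R$ back to the observable $s_W$, and finiteness follows from the square-integrability of $L_{a,\mathbf{c}_0}$ (Assumption~\ref{ass:mechanism_formal}) together with $\E_0[(\|\nabla_{\mathbf{c}}h\|_2/h_a)^2]<\infty$ and boundedness of $m_a$.

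The delicate term is the boundary (RDD) piece of $\nabla_{\mathbf{c}}\mathcal{U}$, which evaluates $m_a$ on the Lebesgue-null set $\{\phi_a(r,\mathbf{c}_0)=0\}$. I would argue identification in two steps: (i) the conditional boundary density $f_{\phi_a}(0\mid r_{un})$, the Hausdorff surface integral, and the weight $h_a\nabla_{\mathbf{c}}\phi_a$ are functionals of $P_{R|0}$ and of the known $\phi_a,h_a$ once $\mathbf{c}_0$ is pinned down; (ii) the boundary trace of $m_a$ is identified because marginal agents satisfy $\phi_a=0\ge0$ and hence receive allocation $a$ with probability $h_a>0$, so $m_a(r)=\E_0[Y_i\mid A_i=a,R_i=r]$ on the eligible side of a neighbourhood of the boundary, and Assumption~\ref{ass:marginal_agents_formal}(2) (continuity of $m_a$ in $r_{cont}$) lets one recover the trace as a one-sided limit of this identified regression---precisely the RDD/LATE estimand of the empirical literature. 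Equivalently, one can bypass the null-set issue by writing the full conduct gradient as $\nabla_{\mathbf{c}}\bigl[\sum_a\int m_a(r)h_a(r,\mathbf{c},P_{R|0})\ind\{\phi_a(r,\mathbf{c})\ge0\}\,dP_{R|0}(r)\bigr]|_{\mathbf{c}_0}$, representing the bracketed integral near $\mathbf{c}_0$ by baseline IPW, and invoking Theorem~\ref{th:main_tech} (whose hypotheses are Assumptions~\ref{ass:marginal_agents_formal}--\ref{ass:mechanism_formal}) to produce the surface formula from an identified integrand. Finally, $\boldsymbol{\psi}_{\mathbf{c}_0}$ is identified from the mechanism and $P_{R|0}$, so $\Psi^{\text{conduct}}(R_i)=\nabla_{\mathbf{c}}\mathcal{U}(\mathbf{c}_0,P_{R|0})\cdot\boldsymbol{\psi}_{\mathbf{c}_0}(R_i)$ and $\langle\Psi^{\text{conduct}},s_R\rangle_{\mathcal{H}_R}$ are identified; in the $\mathcal{H}_R=L_2$ case this is $\E_0[\Psi^{\text{conduct}}(R_i)s_W(W_i)]$ by the tower property, recovering the covariance form of Corollary~\ref{cor:psi}. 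Collecting the three identified pieces yields the Theorem~\ref{th:mpe_general} representation. I expect the main obstacle to be making step (ii) of the boundary argument rigorous---ensuring the one-sided limit of the identified regression equals the boundary trace in a way compatible with integration against the surface measure, and controlling the case where $h_a$ degenerates toward the boundary, which is exactly where $\E_0[1/h_a]<\infty$ is needed.
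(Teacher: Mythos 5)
Your proposal is correct and follows essentially the same route as the paper: the same inverse-probability-weighting identity (replacing $m_a(R_i)$ by $Y_i$ via iterated expectations, exploiting that $L_{a,\mathbf{c}_0}$ vanishes for ineligible agents and that the indicator can be written as $\mu_a/h_a$), the same treatment of the boundary term via continuity of $m_a$ in $r_{cont}$ as a one-sided limit of the identified regression, and the same use of the stated moment conditions to guarantee the weighted expectations are well defined. Your additional remarks on why $\{h_a=0\}$ is null and on the compatibility of the boundary trace with the surface measure are refinements of, not departures from, the paper's argument.
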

\begin{proof}
To show identification, we must express the components of the MPE as expectations over the observed data distribution. The core of the argument is an inverse weighting identity that allows us to replace the unobserved $m_a(R_i)$ with the observed outcome $Y_i$. For any sufficiently regular function $g(a, r, r')$, the following identity holds:
\begin{align*}
    \sum_{a=0}^K \mathbb{E}_0[m_a(R_i) g(a, R_i, R_j) \mu_a(R_i, \mathbf{c}_0, P_{R|0})] = \mathbb{E}_0[Y_i g(A_i, R_i, R_j)].
\end{align*}
To see this, we apply the law of iterated expectations to the right-hand side, conditioning first on $(A_i, R_i, R_j)$:
\begin{align*}
    \mathbb{E}_0[Y_i g(A_i, R_i, R_j)] &= \mathbb{E}_0[\mathbb{E}_0[Y_i | A_i, R_i, R_j] g(A_i, R_i, R_j)] \\
    &= \mathbb{E}_0[m_{A_i}(R_i) g(A_i, R_i, R_j)] \\
    &= \sum_{a=0}^K \mathbb{E}_0[m_a(R_i) g(a, R_i, R_j) \ind\{A_i=a\}] \\
    &= \sum_{a=0}^K \mathbb{E}_0[m_a(R_i) g(a, R_i, R_j) \mathbb{P}(A_i=a | R_i, R_j)],
\end{align*}
where the second line follows from the definition of $m_a$ and the fact that $Y_i$ is independent of $R_j$ conditional on $(A_i, R_i)$. Since $\mathbb{P}(A_i=a | R_i, R_j) = \mu_a(R_i, \mathbf{c}_0, P_{R|0})$, the identity is established. The moment conditions in the Lemma statement ensure these expectations are well-defined.

We now apply this identity to the two terms.

\textbf{1. Competition Effect Term:}
Let $g(a, R_i, R_j) = \frac{L_{a,\mathbf{c}_0}(R_i, R_j)}{h_{a}(R_i, \mathbf{c}_0, P_{R|0})} s_R(R_j)$. The term is:
\begin{align*}
    \sum_{a=0}^K\mathbb{E}_0[m_{a}(R_i) L_{a,\mathbf{c}_0}(R_i, R_j) s_R(R_j)].
\end{align*}
Note that $L_{a,\mathbf{c}_0}(R_i, R_j)$ must be zero if agent $i$ is ineligible ($\phi_a(R_i, \mathbf{c}_0)<0$), as their allocation cannot be affected by others. Thus, we can write the term as:
\begin{align*}
     \sum_{a=0}^K\mathbb{E}_0\left[m_{a}(R_i) \frac{L_{a,\mathbf{c}_0}(R_i, R_j)}{h_{a}(R_i, \mathbf{c}_0, P_{R|0})}\mu_{a}(R_i, \mathbf{c}_0, P_{R|0}) s_R(R_j)\right].
\end{align*}
Using our identity, this is equal to $\mathbb{E}_0\left[Y_i \frac{L_{A_i,\mathbf{c}_0}(R_i, R_j)}{h_{A_i}(R_i, \mathbf{c}_0, P_{R|0})} s_R(R_j)\right]$, which simplifies to $\mathbb{E}_0[\gamma(R_j) s_W(W_j)]$. The assumption $\E_0[1/h_a] < \infty$ ensures this expression is well-defined.

\textbf{2. Inframarginal Market Conduct Term:}
Let $g(a, R_i, R_j) = \frac{\nabla_{\mathbf{c}}h_{a}(R_i, \mathbf{c}_0, P_{R|0})}{h_{a}(R_i, \mathbf{c}_0, P_{R|0})}$. The term is:
\begin{align*}
    \sum_{a=0}^K \mathbb{E}_0[m_a(R_i) \nabla_{\mathbf{c}}h_a(R_i, \mathbf{c}_0, P_{R|0}) \ind\{\phi_a(R_i, \mathbf{c}_0) \ge 0\}].
\end{align*}
Replacing the indicator with $\mu_a/h_a$ and applying the identity yields the identified expression:
\begin{align*}
    \mathbb{E}_0\left[Y_i\frac{\nabla_{\mathbf{c}}h_{A_i}(R_i, \mathbf{c}_0, P_{R|0})}{h_{A_i}(R_i, \mathbf{c}_0, P_{R|0})}\right].
\end{align*}
The assumption $\E_0[(\|\nabla_{\mathbf{c}} h_a\|/h_a)^2] < \infty$ ensures this is well-defined.

Finally, the boundary term in $\nabla_c \mathcal{U}$ involves the conditional mean $m_a(r)$ evaluated at the boundary surface. Since $m_a(r)$ is assumed to be continuous in $r_{cont}$, its value at the boundary is the limit of its values on the interior, which are identified from the observed data. This standard RDD-style argument ensures that $m_a(r)$ is identified for any $r$ at the boundary. This, together with the fact that $h_{a}$ and $\phi_a$ are known functions, guarantees that the entire boundary term is identified.
\end{proof}

\newpage

\section{Examples}\label{app:examples}

This appendix applies the general framework developed in Appendix~\ref{app:mpe_proof} to the canonical examples presented in Section~\ref{sec:examples} of the main text. For each example, we follow a three-step process: (1) we formally specify the model's components, (2) we verify that the model satisfies the key assumptions required for our main theorem, and (3) we derive the specific form of the equilibrium-adjusted outcome, $\Psi_i^{\text{total}}$, by simplifying the general formula.

\subsection{Price-Based Allocation}

\subsubsection{Model Specification}
The market allocates a single product ($a \in \{0,1\}$) with fixed supply $q \in (0,1)$.
\begin{itemize}
    \item \textbf{Reports:} Agents submit a continuous valuation $R_i \in \R_+$. We assume the baseline distribution of reports $P_{R|0}$ admits a continuous and positive density, $f_R(r)$.
    \item \textbf{Allocation Rule:} An agent receives the good if their report exceeds a market-clearing price or cutoff, $c \in \R_+$. The allocation probability is thus:
    \begin{equation*}
     \mu_1(R_i, c) = \1\{R_i > c\}, \quad \mu_0(R_i, c) = \1\{R_i \le c\}.
    \end{equation*}
    \item \textbf{Market Conduct Rule:} The cutoff $c_0$ is set to satisfy the supply constraint, i.e., it is the $(1-q)$-quantile of the report distribution:
    \begin{equation*}
     \E_0[\mu_1(R_i, c_0)] = \int_{c_0}^\infty f_R(r)dr = q.
    \end{equation*}
\end{itemize}

\subsubsection{Verification of Assumptions}
We verify the key assumptions from Appendix~\ref{app:mpe_proof}.
\begin{itemize}
    \item \textbf{Assumption~\ref{ass:mechanism_formal} (Well-Behaved Mechanism):} The allocation rule fits the required decomposition. For $a=1$, we have $\phi_1(R_i, c) = R_i - c$ and $h_1(R_i, c, P_R) = 1$. For $a=0$, we have $\phi_0(R_i, c) = c - R_i$ and $h_0(R_i, c, P_R) = 1$.
    \begin{enumerate}
        \item The functions $h_a=1$ and $\phi_a$ are continuously differentiable in $c$.
        \item The eligibility function $\phi_a$ does not depend on $P_R$.
        \item The smooth component $h_a=1$ is constant and thus trivially Hadamard differentiable in $P_R$, with a derivative kernel $L_a(r, r') = 0$.
        \item All components are continuous.
    \end{enumerate}
    \item \textbf{Assumption~\ref{ass:marginal_agents_formal} (Marginal Agents):} The report $R_i$ is fully continuous. We assume the conditional mean outcomes $m_a(r)$ are continuous in $r$, as stated in the main text. The non-degeneracy condition $\|\nabla_{r}\phi_a\|_2 = |1| = 1 > 0$ holds.
    \item \textbf{Assumption~\ref{ass:conduct_rule_formal} (Differentiability of Market Conduct):} The market conduct rule is defined implicitly by $G(c, P_R) := \int_c^\infty dP_R(r) - q = 0$. This is differentiable in $L_2$. By the Implicit Function Theorem, its derivative is $c'[s_R] = -(\partial G/\partial c)^{-1} (\partial G/\partial P_R)[s_R]$. We have $\partial G/\partial c = -f_R(c)$ and $(\partial G/\partial P_R)[s_R] = \E_0[s_R(R_i)\1\{R_i > c\}]$. Thus, $c'[s_R]$ is a continuous linear functional in $L_2$, and the assumption holds.
\end{itemize}
Since the market conduct rule is differentiable in $L_2$, Corollary~\ref{cor:psi} applies.

\subsubsection{Derivation of $\Psi_i^{\text{total}}$}
We start with the general formula from Corollary~\ref{cor:psi}: $\Psi_i^{\text{total}} = \Psi_i^{\text{fixed}} + \Psi^{\text{conduct}}(R_i)$.

\begin{enumerate}
    \item \textbf{Fixed Component $\Psi_i^{\text{fixed}} = Y_i + \gamma(R_i)$:}
    The competition externality, $\gamma(R_i)$, depends on the kernel $L_a$. Since $h_a$ is constant with respect to $P_R$, its derivative kernel $L_a(r,r')$ is identically zero. Therefore, $\gamma(R_i) = 0$, and $\Psi_i^{\text{fixed}} = Y_i$.

    \item \textbf{Market Conduct Component $\Psi^{\text{conduct}}(R_i) = \nabla_c \mathcal{U}(c_0) \cdot \psi_{c_0}(R_i)$:}
    We need to derive the welfare gradient $\nabla_c \mathcal{U}(c_0)$ and the influence function of the cutoff, $\psi_{c_0}(R_i)$.
    \begin{itemize}
        \item \textbf{Welfare Gradient $\nabla_c \mathcal{U}(c_0)$:} We apply Theorem~\ref{th:main_tech}. Since $h_a=1$ does not depend on $c$, the volume term is zero. The derivative comes entirely from the boundary term. For $a=1$, the boundary is at $R_i=c$, and for $a=0$, it is also at $R_i=c$. A marginal increase in $c$ moves agents from allocation 1 to 0. The total effect is the mass of agents at the boundary, $f_R(c_0)$, times their change in average outcome, $m_0(c_0) - m_1(c_0)$.
        \begin{equation*}
         \nabla_c \mathcal{U}(c_0) = -[m_1(c_0) - m_0(c_0)]f_R(c_0) = -\tau(c_0)f_R(c_0).
        \end{equation*}
        \item \textbf{Influence Function $\psi_{c_0}(R_i)$:} From the verification of Assumption~\ref{ass:conduct_rule_formal}, we have $c'[s_R] = \E_0[s_R(R_i) \frac{\1\{R_i > c_0\}}{f_R(c_0)}]$. By definition, the influence function is the term inside the expectation multiplying the score:
        \begin{equation*}
         \psi_{c_0}(R_i) = \frac{\1\{R_i > c_0\}}{f_R(c_0)} = \frac{A_i}{f_R(c_0)}.
        \end{equation*}
    \end{itemize}
    Combining these pieces, the market conduct externality is:
    \begin{align*}
    \Psi^{\text{conduct}}(R_i) &= \nabla_c \mathcal{U}(c_0) \cdot \psi_{c_0}(R_i) \\
    &= \left( -\tau(c_0)f_R(c_0) \right) \cdot \left( \frac{A_i}{f_R(c_0)} \right) = -\tau(c_0)A_i.
    \end{align*}
    \item \textbf{Total Equilibrium-Adjusted Outcome:}
    Summing the components gives the final result:
    \begin{equation*}
     \Psi_i^{\text{total}} = Y_i - \tau(c_0)A_i.
    \end{equation*}
    This matches the expression in the main text. The density terms $f_R(c_0)$ cancel out, revealing that the externality an inframarginal agent imposes by taking a slot is exactly the welfare loss of the single marginal agent they displace.
\end{enumerate}

\subsection{School Choice with Multiple Schools}

\subsubsection{Model Specification}
We consider a market with two schools ($k \in \{1,2\}$) and an outside option ($a=0$). Each school has a fixed capacity $q_k$.
\begin{itemize}
    \item \textbf{Reports:} An agent's report is a vector $R_i = (\succ_i, V_{i,1}, V_{i,2})$, consisting of a strict preference ranking $\succ_i$ over $\{0,1,2\}$ and a vector of school-specific continuous scores (e.g., grades or test scores).
    \item \textbf{Allocation Rule:} The market is cleared by a vector of score cutoffs $\mathbf{c}=(c_1, c_2)$. An agent is eligible for school $k$ if their score exceeds the cutoff, $V_{i,k} > c_k$. Each agent is assigned to their most-preferred school for which they are eligible. If they are not eligible for any school they prefer to the outside option, they receive $a=0$.
    \item \textbf{Market Conduct Rule:} The cutoff vector $\mathbf{c}_0$ is determined by the system of capacity constraints:
    \begin{align*}
        \E_0[\mu_1(R_i, \mathbf{c}_0)] &= q_1 \\
        \E_0[\mu_2(R_i, \mathbf{c}_0)] &= q_2
    \end{align*}
\end{itemize}

\subsubsection{Verification of Assumptions}
\begin{itemize}
    \item \textbf{Assumption~\ref{ass:mechanism_formal} (Well-Behaved Mechanism):} For any given preference ranking $\succ_i$, the allocation rule is a series of indicator functions based on the scores. For example, for an agent with preferences $1 \succ_2 \succ 0$, the allocation rule is $\mu_1(R_i, \mathbf{c}) = \1\{V_{i,1} > c_1\}$ and $\mu_2(R_i, \mathbf{c}) = \1\{V_{i,1} \le c_1, V_{i,2} > c_2\}$. Each of these can be written in the required form with $h_a=1$ and $\phi_a$ being a function of the score margins (e.g., $\phi_1 = V_{i,1}-c_1$). The conditions on smoothness, anonymity, and Hadamard differentiability (with $L_a=0$) hold trivially.
    \item \textbf{Assumption~\ref{ass:marginal_agents_formal} (Marginal Agents):} The scores $(V_{i,1}, V_{i,2})$ are the continuous components of the report, while the preference ranking $\succ_i$ is the discrete component. We assume $m_a(R_i)$ is continuous in the scores. The non-degeneracy condition on the gradients of $\phi_a$ holds.
    \item \textbf{Assumption~\ref{ass:conduct_rule_formal} (Differentiability of Market Conduct):} The market conduct rule is defined by the system of equations $\mathbf{G}(\mathbf{c}, P_R) := \E_{P_R}[\boldsymbol{\mu}(R_i, \mathbf{c})] - \mathbf{q} = \mathbf{0}$. The Jacobian of this system is $J_{kj} = \partial G_k / \partial c_j$. We assume this matrix is invertible at $\mathbf{c}_0$, which requires that the cross-school substitution effects are not perfectly collinear. The rule is differentiable in $L_2$.
\end{itemize}
As in the previous example, Corollary~\ref{cor:psi} applies.

\subsubsection{Derivation of $\Psi_i^{\text{total}}$}
We again use the formula $\Psi_i^{\text{total}} = Y_i + \gamma(R_i) + \Psi^{\text{conduct}}(R_i)$.

\begin{enumerate}
    \item \textbf{Fixed Component $\Psi_i^{\text{fixed}}$:} Since the allocation rule (conditional on preferences) does not depend on $P_R$, the derivative kernel $L_a$ is zero. This implies the competition externality is $\gamma(R_i) = 0$, and thus $\Psi_i^{\text{fixed}} = Y_i$.

    \item \textbf{Market Conduct Component $\Psi^{\text{conduct}}(R_i)$:} This component is given by $\langle \nabla_{\mathbf{c}} \mathcal{U}(\mathbf{c}_0), \boldsymbol{\psi}_{\mathbf{c}_0}(R_i) \rangle$, where the gradient and influence function are now vectors.
    \begin{itemize}
        \item \textbf{Influence Function $\boldsymbol{\psi}_{\mathbf{c}_0}(R_i)$:} Applying the Implicit Function Theorem to the vector market conduct rule gives $\mathbf{c}'[s_R] = -J_0^{-1} (\partial \mathbf{G}/\partial P_R)[s_R]$. The second term is $\E_0[s_R(R_i)\boldsymbol{\mu}(R_i, \mathbf{c}_0)] = \E_0[s_R(R_i)\mathbf{A}_i]$, where $\mathbf{A}_i$ is the vector of allocation indicators. The vector-valued influence function is therefore:
        \begin{equation*}
         \boldsymbol{\psi}_{\mathbf{c}_0}(R_i) = -J_0^{-1}\mathbf{A}_i.
        \end{equation*}
        \item \textbf{Welfare Gradient $\nabla_{\mathbf{c}} \mathcal{U}(\mathbf{c}_0)$:} We apply Theorem~\ref{th:main_tech}. The derivative with respect to a single cutoff, $c_j$, is determined by the welfare change of agents at that margin ($V_{i,j}=c_j$). Let $\rho_{j \to k}$ be the density of agents with score $V_{i,j}=c_j$ who, upon losing eligibility for school $j$, are reallocated to their next-best option, $k$. Let $\tau_{j \to k}(c_j)$ be their average change in outcome. The $j$-th component of the gradient is the sum over all possible reallocations from margin $j$:
        \begin{equation*}
         \frac{\partial \mathcal{U}}{\partial c_j}(\mathbf{c}_0) = -\sum_{k \in \mathcal{A}} \rho_{j \to k} \tau_{j \to k}(c_j).
        \end{equation*}
        This corresponds to the vector $\tilde{\boldsymbol{\sigma}}$ defined in the main text.
        \item \textbf{The Jacobian Matrix $J_0$:} The entry $J_{kj} = \partial \E_0[\mu_k]/\partial c_j$ measures how enrollment at school $k$ changes when the cutoff for school $j$ increases. The diagonal elements $J_{jj}$ are negative (enrollment at $j$ falls). The off-diagonal elements $J_{kj}$ ($k\neq j$) are positive and represent the substitution effects: the density of agents who are pushed out of school $j$ and into school $k$, i.e., $J_{kj} = \rho_{j \to k}$.
    \end{itemize}
    Combining these, the market conduct externality is:
    \begin{align*}
    \Psi^{\text{conduct}}(R_i) &= \langle \nabla_{\mathbf{c}} \mathcal{U}(\mathbf{c}_0), \boldsymbol{\psi}_{\mathbf{c}_0}(R_i) \rangle = (\nabla_{\mathbf{c}} \mathcal{U}(\mathbf{c}_0))^\top (-J_0^{-1}\mathbf{A}_i) \\
    &= - \left( (J_0^{-1})^\top \nabla_{\mathbf{c}} \mathcal{U}(\mathbf{c}_0) \right)^\top \mathbf{A}_i.
    \end{align*}
    Let us define the vector of social externality values $\mathbf{v} := (J_0^{-1})^\top \nabla_{\mathbf{c}} \mathcal{U}(\mathbf{c}_0)$. Then the expression simplifies to $\Psi^{\text{conduct}}(R_i) = -\mathbf{v}^\top\mathbf{A}_i$.

    \item \textbf{Total Equilibrium-Adjusted Outcome:}
    Summing the components gives the final vector form:
    \begin{equation*}
     \Psi_i^{\text{total}} = Y_i - \mathbf{v}^\top\mathbf{A}_i = Y_i - v_1 \cdot \1\{A_i=1\} - v_2 \cdot \1\{A_i=2\}.
    \end{equation*}
    This result shows that the social value of a seat at a given school, $v_k$, is a complex combination of the marginal treatment effects at *all* cutoffs, weighted by the full matrix of equilibrium substitution patterns captured by $(J_0^{-1})^\top$.
\end{enumerate}

\subsection{Second-Price Auction with a Reserve Price}

\subsubsection{Model Specification}
We consider a sealed-bid, second-price auction for a single good ($a \in \{0,1\}$) among $n$ i.i.d. participants.
\begin{itemize}
    \item \textbf{Reports:} Each agent $i$ submits a bid $R_i \in \R_+$, which we take to be their private valuation.
    \item \textbf{Allocation Rule:} An agent wins if their bid is above the reserve price, $c$, and is the highest of all $n$ bids. The probability of winning is $\mu_1(r, c, P_R) = \ind\{r > c\} \cdot F_R(r)^{n-1}$.
    \item \textbf{Market Conduct Rule:} The reserve price $c_0$ is set to ensure the ex-ante probability of an agent winning is a fixed quantity $q$. This rule simplifies to an algebraic equation:
\begin{align*}
    \frac{1}{n}\left(1 - F_{R|0}(c_0)^n\right) = q.
\end{align*}
\end{itemize}

\subsubsection{Derivation of $\Psi_i^{\text{total}}$}
The total equilibrium-adjusted outcome is the sum of the private outcome and two distinct externality terms: $\Psi_i^{\text{total}} = Y_i + \gamma(R_i) + \Psi^{\text{conduct}}(R_i)$.

\paragraph{1. Competition Externality $\gamma(R_i)$:}
This term captures the welfare impact from a change in the bid distribution on other bidders, holding the reserve price fixed. It simplifies to an intuitive expression involving the maximum order statistic of the $n-1$ competing bids, which we denote $R_{(n-1)}$:
\begin{align*}
    \gamma(R_i) = \mathbb{E}_0\left[\tau(R_{(n-1)}) | R_{(n-1)} \ge \tilde{r}\right] \times \left(1 - F_{R|0}(\tilde{r})^{n-1}\right), \quad \text{where } \tilde{r} = \max(c_0, R_i).
\end{align*}
This is the expected treatment effect for the winning competitor, conditional on them being a relevant threat (bidding above $\tilde{r}$), multiplied by the probability that such a threat exists.

\paragraph{2. Market Conduct Externality $\Psi^{\text{conduct}}(R_i)$:}
This term captures the welfare impact from agent $i$'s influence on the equilibrium reserve price. We derive its influence function, $\psi_{c_0}(R_i)$, from the simplified market conduct rule using the Implicit Function Theorem. This yields:
\begin{align*}
    \psi_{c_0}(R_i) = \frac{\ind\{R_i > c_0\}}{f_{R|0}(c_0)}.
\end{align*}
Multiplying this by the welfare gradient, $\nabla_c \mathcal{U}(c_0) = -\tau(c_0) F_{R|0}(c_0)^{n-1} f_{R|0}(c_0)$, gives the externality:
\begin{align*}
    \Psi^{\text{conduct}}(R_i) = -\tau(c_0) F_{R|0}(c_0)^{n-1} \cdot \ind\{R_i > c_0\}.
\end{align*}
This shows the externality is non-zero only for losing bidders.

\subsection{Auction with an Optimal Reserve Price}

We now modify the previous auction example by changing the platform's objective. Instead of setting a reserve price to meet a quantity target, the platform sets it to maximize expected revenue, following Myerson (1981).

\subsubsection{Model Specification}
The setup for agents, reports, and the allocation rule is identical to the second-price auction in the previous section. The only change is the market conduct rule.
\begin{itemize}
    \item \textbf{Market Conduct Rule:} The reserve price $c$ is set to solve the platform's first-order condition for revenue maximization:
    \begin{equation*}
    G(c, P_R) := (1-F_R(c)) - c f_R(c) = 0,
    \end{equation*}
    where $f_R(c)$ is the probability density function of reports evaluated at the reserve price $c$.
\end{itemize}

\subsubsection{Verification of Assumptions}
The presence of the density term $f_R(c)$ in the market conduct rule makes its differentiability properties more subtle.
\begin{itemize}
    \item \textbf{Failure in $L_2$:} The derivative of the functional $c(P_R)$ with respect to a perturbation with score $s_R$ can be found via the Implicit Function Theorem. This derivative involves a term proportional to $f_R(c_0) s_R(c_0)$, which is a point evaluation of the score function. The point-evaluation operator is not a continuous linear functional on the space $L_2(R_i)$. A sequence of scores can converge to zero in the $L_2$ norm while diverging at the specific point $c_0$. Consequently, Assumption~\ref{ass:conduct_rule_formal} is violated for the standard $L_2$ space, and Corollary~\ref{cor:psi} does not apply.

    \item \textbf{Resolution in Sobolev Space:} To proceed, we must restrict the class of admissible policy reforms to those that induce sufficiently smooth scores. We assume the space of report scores $\mathcal{H}_R$ is a weighted Sobolev space $H^1$, with the inner product:
    \begin{equation*}
     \langle \psi, s \rangle_{H^1} := \E_0[\psi(R_i)s(R_i)] + \E_0[\psi'(R_i)s'(R_i)].
    \end{equation*}
    In this space, the Sobolev embedding theorem ensures that point evaluation is a continuous operator. Therefore, the functional $c(P_R)$ is continuously differentiable. Assumption~\ref{ass:conduct_rule_formal} now holds, but for this stronger Hilbert space. All other assumptions are maintained as before.
\end{itemize}

\subsubsection{Derivation of the MPE}
Since Corollary~\ref{cor:psi} does not apply, we must use the general form of the MPE from Theorem~\ref{th:mpe_general}:
\begin{equation*}
\text{MPE} = \E_0[\Psi_i^{\text{fixed}} s_W(W_i)] + \langle \Psi^{\text{conduct}}, s_R \rangle_{\mathcal{H}_R}.
\end{equation*}

\begin{enumerate}
    \item \textbf{Fixed Component $\Psi_i^{\text{fixed}}$:} This component is identical to the previous auction example: $\Psi_i^{\text{fixed}} = Y_i + \gamma(R_i)$.

    \item \textbf{Market Conduct Component $\langle \Psi^{\text{conduct}}, s_R \rangle_{\mathcal{H}_R}$:} This term is $\langle \nabla_c \mathcal{U}(c_0) \cdot \psi_{c_0}(R_i), s_R(R_i) \rangle_{H^1}$. We derive the representer $\psi_{c_0}$ below.
    \begin{itemize}
        \item \textbf{Welfare Gradient $\nabla_c \mathcal{U}(c_0)$:} This is identical to the previous auction example:
        \begin{equation*}
         \nabla_c \mathcal{U}(c_0) = -\tau(c_0) f_{R|0}(c_0) F_{R|0}(c_0)^{n-1}.
        \end{equation*}
        \item \textbf{Characterization of the Representer $\psi_{c_0}$:} The representer is defined by the relation $c'[s_R] = \langle \psi_{c_0}, s_R \rangle_{H^1}$. We first find an expression for the functional $c'[s_R]$ using the Implicit Function Theorem on $G(c, P_R)=0$. Differentiating w.r.t. the policy perturbation at baseline gives:
        \begin{equation*}
         \frac{\partial G}{\partial c}\bigg|_{c_0, P_{R|0}} \cdot c'[s_R] + \frac{\partial G}{\partial P_R}[s_R]\bigg|_{c_0, P_{R|0}} = 0.
        \end{equation*}
        The partial derivatives are $\frac{\partial G}{\partial c} = -2f_R(c) - c f'_R(c)$ and
        $\frac{\partial G}{\partial P_R}[s_R] = -\E_0[s_R(R_i)\1\{R_i \le c_0\}] - c_0 f_{R|0}(c_0)s_R(c_0)$.
        Solving for $c'[s_R]$ gives the functional:
        \begin{equation*}
         c'[s_R] = K \cdot \left( \E_0[s_R(R_i)\1\{R_i \le c_0\}] + c_0 f_{R|0}(c_0)s_R(c_0) \right),
        \end{equation*}
        where $K = (2f_{R|0}(c_0) + c_0 f'_{R|0}(c_0))^{-1}$. We now equate this with the inner product form:
        \begin{equation*}
         \E_0[\psi_{c_0} s_R] + \E_0[\psi'_{c_0} s'_R] = K \cdot \left( \E_0[s_R \cdot \1_{\le c_0}] + c_0 f_{R|0}(c_0)s_R(c_0) \right).
        \end{equation*}
        Using integration by parts on the second term on the left, $\E_0[\psi'_{c_0} s'_R] = -\E_0[s_R (\psi'_{c_0}f_{R|0})'/f_{R|0}]$, and representing the point evaluation using the Dirac delta function, $s_R(c_0) = \E_0[s_R(R_i) \delta(R_i-c_0)/f_{R|0}(c_0)]$, we can group all terms under a single expectation over $s_R(R_i)$:
        \begin{equation*}
         \E_0\left[ s_R(R_i) \left( \psi_{c_0}(R_i) - \frac{(\psi'_{c_0}(R_i)f_{R|0}(R_i))'}{f_{R|0}(R_i)} - K \cdot (\1\{R_i \le c_0\} + c_0\delta(R_i-c_0)) \right) \right] = 0.
        \end{equation*}
        Since this must hold for all $s_R \in H^1$, the term in the parentheses must be zero. This yields the Sturm-Liouville differential equation that uniquely defines the representer $\psi_{c_0}(r)$:
        \begin{equation*}
         \psi_{c_0}(r)f_{R|0}(r) - (\psi'_{c_0}(r)f_{R|0}(r))' = K \cdot \left( \1\{r \le c_0\}f_{R|0}(r) + c_0 f_{R|0}(c_0)\delta(r-c_0) \right).
        \end{equation*}
    \end{itemize}
    \item \textbf{Total Marginal Policy Effect:}
    The MPE is therefore:
    \begin{align*}
    \text{MPE} = \E_0[(Y_i + \gamma(R_i))s_W(W_i)] + \nabla_c \mathcal{U}(c_0) \cdot \langle \psi_{c_0}, s_R \rangle_{H^1}.
    \end{align*}
    This expression cannot be simplified into a single covariance. It demonstrates that when the market-clearing rule is itself the solution to an optimization problem that depends on local features of the report distribution, the welfare impact of a policy reform depends not just on its direction ($s_R$), but also on its smoothness (via the derivative term $s'_R$ implicit in the $H^1$ inner product).
\end{enumerate}

\subsection{Top Trading Cycles}

This section discusses the Top Trading Cycles (TTC) mechanism. Unlike the previous examples, the market-clearing parameters (the cutoffs) are defined as the solution to a dynamic system. We first describe this characterization and then provide a high-level sketch of our analysis.

\subsubsection{Model Specification and Cutoff Characterization}
We follow the continuum model of \citet{leshno2021cutoff}. The market consists of a continuum of students and a finite set of schools $\mathcal{C}=\{1,...,n\}$ with capacities $q_c$. A student's type $R_i$ includes their strict preference ordering $\succ_i$ and a vector of priority scores $V_i = (V_{i,1}, ..., V_{i,n}) \in [0,1]^n$.

\begin{itemize}
    \item \textbf{Allocation Rule and Cutoffs:} The TTC algorithm assigns students by clearing trading cycles. \citet{leshno2021cutoff} show that the final assignment can be described by a matrix of cutoffs $\mathbf{c} = \{c_{a,b}\}_{a,b \in \mathcal{C}}$. A student $i$ is admitted to their most-preferred school $a$ within their "budget set," which is the set of schools $B(R_i, \mathbf{c}) = \{k \in \mathcal{C} \mid \exists b \in \mathcal{C} \text{ s.t. } V_{i,b} \ge c_{k,b}\}$

    \item \textbf{Market Conduct Rule (Dynamic System):} The cutoff matrix $\mathbf{c}$ is not determined by a simple set of algebraic equations but as the solution to a dynamic process. The key objects are:
    \begin{enumerate}
        \item \textbf{The TTC Path $\boldsymbol{\gamma}(t)$:} A vector-valued function $\boldsymbol{\gamma}(t) \in [0,1]^n$ that tracks the priority frontiers of the schools over time, starting from $\boldsymbol{\gamma}(0)=\mathbf{1}$.
        \item \textbf{Marginal Trade Balance Equations:} The path evolves according to a system of ordinary differential equations that ensure the "flow" of students trading into a school equals the "flow" of students trading out of it at every moment. For each school $k$, this is:
        \begin{equation*}
         \sum_{b \in \mathcal{C}} \gamma'_b(t) H_b^k(\boldsymbol{\gamma}(t)) = \sum_{a \in \mathcal{C}} \gamma'_k(t) H_k^a(\boldsymbol{\gamma}(t)).
        \end{equation*}
       Here, $H_b^k(\mathbf{x})$ is the marginal density of students at the priority frontier $\mathbf{x}$ whose top choice is school $k$ and who have the highest priority at school $b$.
        \item \textbf{Capacity Equations (Stopping Conditions):} The process for a school $k$ stops at time $t^{(k)}$ when its capacity is filled. The final cutoffs are determined by the path evaluated at these stopping times: $c_{k,b} = \gamma_b(t^{(k)})$.
    \end{enumerate}
\end{itemize}
The market conduct rule $\mathbf{c}(P_R)$ is therefore the function that maps a distribution of reports $P_R$ (which determines the marginal densities $H_b^k$) to the matrix of cutoffs that solves this dynamic system.

\subsubsection{TTC in the MPE Framework}
For a given cutoff matrix $\mathbf{c}$, the allocation rule $\mu_a(R_i, \mathbf{c})$ is a complex but deterministic indicator function. It can be written in our decomposition form with $h_a=1$ and $\phi_a(R_i, \mathbf{c})$ representing the condition that $a$ is the most-preferred school in the budget set $B(R_i, \mathbf{c})$.
Since $h_a=1$, the allocation rule does not depend on $P_R$ once $\mathbf{c}$ is known. Therefore, the Hadamard derivative kernel $L_a$ is zero, and the competition externality $\gamma(R_i)$ is zero. The entire equilibrium spillover is captured by the market conduct effect. We assume the primitives are regular enough for the assumptions of Corollary~\ref{cor:psi} to hold.

\subsubsection{Derivation for a Parametric Case}
Let's analyze an economy with two schools ($n=2$), capacities $q_1, q_2$, and a unit mass of students. A fraction $\pi_1$ of students prefer school 1, and the remaining $\pi_2=1-\pi_1$ prefer school 2. Priorities $V_i=(V_{i,1}, V_{i,2})$ are independently and uniformly distributed on $[0,1]^2$ and are independent of preferences.

\begin{enumerate}
    \item \textbf{Solving the Dynamic System:}
    In this setting, the marginal densities are constant: $H_b^k(\mathbf{x}) = \pi_k$ for any $b,k \in \{1,2\}$. The trade balance equation for school 1 becomes:
    \begin{equation*}
     \gamma'_1(t) H_1^1 + \gamma'_2(t) H_2^1 = \gamma'_1(t) H_1^1 + \gamma'_1(t) H_1^2 \implies \gamma'_2(t) \pi_1 = \gamma'_1(t) \pi_2.
    \end{equation*}
    This gives the simple linear ODE $\gamma'_2(t)/\gamma'_1(t) = \pi_2/\pi_1$. Parameterizing the path by $t$ such that $\gamma_1(t) = 1-t$, the solution with initial condition $\boldsymbol{\gamma}(0)=(1,1)$ is the line:
    \begin{equation*}
     \gamma_2(t) = 1 - (\pi_2/\pi_1)t.
    \end{equation*}
    \item \textbf{Characterizing the Welfare Gradient $\nabla_{\mathbf{c}} \mathcal{U}$:}
    The parameter vector $\mathbf{c}$ is the matrix of four cutoffs $\{c_{1,1}, c_{1,2}, c_{2,1}, c_{2,2}\}$. Applying Theorem~\ref{th:main_tech}, the gradient $\partial \mathcal{U}/\partial c_{k,b}$ is identified by the welfare change of agents on the boundary $V_{i,b} = c_{k,b}$ who are reallocated. We denote this gradient abstractly by $\nabla_{\mathbf{c}} \mathcal{U}$.

    \item \textbf{Characterizing the Influence Function $\boldsymbol{\psi}_{\mathbf{c}_0}(R_i)$:}
    The influence function is the vector of derivatives of the cutoffs with respect to a perturbation in the distribution. A general policy perturbation with score $s_R$ on the report distribution $P_R$ induces a pathwise derivative on any functional of that distribution. We find the influence function by differentiating the entire dynamic system that determines the cutoffs.

    \begin{itemize}
        \item \textbf{Perturbation of Primitives:} The perturbation directly affects the marginal densities $H_b^k(\boldsymbol{\gamma})$ that govern the path's evolution, and the demand functions $D^k(\boldsymbol{\gamma})$ that determine the stopping times. We denote their pathwise derivatives as $H_b^{k\prime}[s_R]$ and $D^{k\prime}[s_R]$. For example, in our parametric case, the perturbation affects both the shares $\pi_k$ and the uniformity of the priority distribution.

        \item \textbf{Path Influence Function:} The TTC path is defined by the ODE $\gamma'_2(t) H_2^1(\boldsymbol{\gamma}(t)) = \gamma'_1(t) H_1^2(\boldsymbol{\gamma}(t))$. We differentiate this entire equation with respect to the policy perturbation. This yields a variational equation for the influence on the path, $\boldsymbol{\psi}_\gamma(t; R_i)$, which now includes a forcing term due to the direct perturbation of the $H$ functions:
        \begin{equation*}
         \frac{d}{dt}\psi_{\gamma,2}(t) = J(\boldsymbol{\gamma}(t)) \cdot \psi_{\gamma,2}(t) + F(\boldsymbol{\gamma}(t), s_R).
        \end{equation*}
        Here, $J$ is a Jacobian term from differentiating the ODE's coefficients with respect to $\boldsymbol{\gamma}$, and the forcing term $F$ is a linear functional of the score $s_R$ that depends on the derivatives $H_b^{k\prime}[s_R]$. The solution to this ODE gives the influence function for the path shape.

        \item \textbf{Stopping Time Influence Function:} Assume school 1 fills first. The stopping time $t^{(1)}$ is implicitly defined by the capacity constraint $D^1(\boldsymbol{\gamma}(t^{(1)}), P_R) = q_1$. Differentiating this constraint with respect to the perturbation at baseline gives:
        \begin{equation*}
         \frac{\partial D^1}{\partial P_R}[s_R] + \nabla_{\boldsymbol{\gamma}}D^1 \cdot \left( \boldsymbol{\gamma}'(t^{(1)}_0)\frac{d t^{(1)}}{d\theta}[s_R] + \frac{d\boldsymbol{\gamma}(t^{(1)}_0)}{d\theta}[s_R] \right) = 0.
        \end{equation*}
        The term $\frac{\partial D^1}{\partial P_R}[s_R]$ captures the direct effect of the perturbation on the demand functional. Solving for the derivative of the stopping time, $\frac{d t^{(1)}}{d\theta}[s_R]$, yields its influence function $\psi_{t^{(1)}}(R_i)$:
        \begin{equation*}
         \psi_{t^{(1)}}(R_i) = -\left(\frac{d D^1}{dt}\bigg|_{t^{(1)}_0}\right)^{-1} \left( D^{1\prime}[R_i] + \nabla_{\boldsymbol{\gamma}}D^1 \cdot \boldsymbol{\psi}_\gamma(t^{(1)}_0; R_i) \right),
        \end{equation*}
        where $D^{1\prime}[R_i]$ is the representer for the pathwise derivative of the demand functional.

        \item \textbf{Cutoff Influence Functions:} The influence functions for the cutoffs are found by applying the chain rule. The derivation for the second-round cutoffs simplifies considerably. As established by \citet{leshno2021cutoff}, once school 1 fills at time $t^{(1)}$, its priority frontier stops advancing, i.e., $\gamma_1(t) = \gamma_1(t^{(1)})$ for all $t \ge t^{(1)}$.
            \begin{itemize}
                \item \textbf{The cutoff $c_{1,2}$:} This cutoff is defined as $\gamma_1(t^{(2)})$. Since $t^{(2)} \ge t^{(1)}$, it follows that $\gamma_1(t^{(2)}) = \gamma_1(t^{(1)}) = c_{1,1}$. Thus, we have the identity $c_{1,2} = c_{1,1}$, which implies their influence functions must also be equal: $\psi_{c_{1,2}}(R_i) = \psi_{c_{1,1}}(R_i)$.
                \item \textbf{The cutoff $c_{2,2}$:} The final cutoff, $c_{2,2} = \gamma_2(t^{(2)})$, is found by solving the capacity constraint for school 2 in the residual economy. Its influence function is found by differentiating this expression: $\psi_{c_{2,2}}(R_i) = \gamma'_{2}(t^{(2)}_0)\psi_{t^{(2)}}(R_i) + \psi_{\gamma,2}(t^{(2)}_0; R_i)$. The term $\psi_{t^{(2)}}(R_i)$ is the influence function for the second stopping time, which is found by differentiating the capacity constraint for the residual economy. This constraint depends on the outcomes of the first round, making $\psi_{t^{(2)}}(R_i)$ a function of the previously derived first-round influence functions.
            \end{itemize}
    \end{itemize}
This more general procedure fully identifies the vector of influence functions, $\boldsymbol{\psi}_{\mathbf{c}_0}(R_i)$, for any perturbation to the report distribution that satisfies our regularity conditions.

    \item \textbf{Total Equilibrium-Adjusted Outcome:}
    The final expression is:
    \begin{equation*}
     \Psi_i^{\text{total}} = Y_i - \langle \nabla_{\mathbf{c}} \mathcal{U}(\mathbf{c}_0), \boldsymbol{\psi}_{\mathbf{c}_0}(R_i) \rangle.
    \end{equation*}
    This derivation confirms that even for a complex procedural mechanism like TTC, the MPE can be constructed within our framework. The market conduct externality is fully characterized by the welfare gradient at the cutoff boundaries and the influence function of the cutoffs, which is itself found by analyzing the sensitivity of the underlying dynamic system to policy perturbations.
\end{enumerate}

\newpage

\section{Extensions}\label{app:extensions}

This appendix provides formal derivations for the extensions discussed in Section~\ref{sec:extensions} of the main text.

\subsection{General Welfare Functionals}
This section demonstrates that our framework extends from the mean to a general class of welfare criteria. We first provide a simple proof for functionals defined by moment conditions using the Implicit Function Theorem. We then provide a more abstract, general proof that covers any Hadamard differentiable functional.

\subsubsection{A Direct Proof for Functionals Defined by Moment Conditions}
Many common statistics, like quantiles, are defined implicitly as the solution to a moment equation. For this large class of functionals, we can prove the main result directly using the Implicit Function Theorem.

\begin{proposition}
Let the welfare functional $\mathcal{U}$ be defined implicitly as the unique solution to a moment equation $\E[g(Y_i, \mathcal{U})] = 0$. Assume the function $g(y,u)$ is continuously differentiable in $u$ and that $\E_0[\partial g(Y_i, \mathcal{U}_0)/\partial \mathcal{U}] \neq 0$. The Marginal Policy Effect on $\mathcal{U}$ is given by applying Theorem~\ref{th:mpe_general} (or Corollary~\ref{cor:psi}) to the transformed outcome $Z_i = IF(Y_i; \mathcal{U}_0)$, where $IF(y; u) = -\left(\E[\frac{\partial g}{\partial u}]\right)^{-1} g(y, u)$ is the influence function of the functional $\mathcal{U}$.
\end{proposition}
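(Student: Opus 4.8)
\emph{Proof proposal.} The plan is to treat $\mathcal{U}$ as implicitly defined along the policy path and differentiate its defining moment equation, thereby reducing the claim to the already-established mean case. First I would embed the baseline policy in the path $f_W(\cdot\,|\,\theta,s_W)$ of Appendix~\ref{app:mpe_proof} and, for $\theta$ in a neighbourhood of $0$, let $\mathcal{U}(\theta,s_W)$ be the unique root of $m(\theta,u):=\E_{(\theta,s_W)}[g(Y_i,u)]=0$, so $\mathcal{U}(0,s_W)=\mathcal{U}_0$. Differentiability of $u\mapsto m(\theta,u)$ with $\partial_u m(0,\mathcal{U}_0)=\E_0[\partial g(Y_i,\mathcal{U}_0)/\partial u]\neq 0$ follows from the smoothness hypothesis on $g$ (together with a dominated-derivative bound in $y$), while differentiability of $\theta\mapsto m(\theta,u)$ for fixed $u$ follows from the argument used for the mean in Theorem~\ref{th:mpe_general}, now applied to the transformed outcome $g(Y_i,u)$; here I would invoke the standing requirement that $g(Y_i,u)$ inherit the regularity of that theorem, namely that $\E_0[g(Y_i,u)\mid A_i=a,R_i=r]$ be bounded and continuous in $r_{cont}$, locally uniformly in $u$ near $\mathcal{U}_0$. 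Granting joint differentiability of $m$ at $(0,\mathcal{U}_0)$, the implicit function theorem then gives that $\theta\mapsto\mathcal{U}(\theta,s_W)$ is differentiable at $\theta=0$.

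The core step is to differentiate the identity $m(\theta,\mathcal{U}(\theta,s_W))\equiv 0$ at $\theta=0$, which by the chain rule yields
\begin{equation*}
\partial_\theta m(0,\mathcal{U}_0)\;+\;\E_0\!\left[\frac{\partial g(Y_i,\mathcal{U}_0)}{\partial u}\right]\cdot\frac{\partial\mathcal{U}(\theta,s_W)}{\partial\theta}\bigg|_{\theta=0}\;=\;0.
\end{equation*}
By construction, $\partial_\theta m(0,\mathcal{U}_0)$ is the marginal policy effect of the \emph{mean} welfare criterion with outcome $g(Y_i,\mathcal{U}_0)$; call it $\mathcal{U}'_g(s_W)$. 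By Theorem~\ref{th:mpe_general} (or Corollary~\ref{cor:psi}) it equals $\E_0[\Psi_i^{\text{fixed}}s_W(W_i)]+\langle\Psi^{\text{conduct}},s_R\rangle_{\mathcal{H}_R}$, with the structural objects of Section~\ref{sec:derivation} built from $g(Y_i,\mathcal{U}_0)$ in place of $Y_i$. Solving the display gives
\begin{equation*}
\frac{\partial\mathcal{U}(\theta,s_W)}{\partial\theta}\bigg|_{\theta=0}\;=\;-\left(\E_0\!\left[\frac{\partial g(Y_i,\mathcal{U}_0)}{\partial u}\right]\right)^{-1}\mathcal{U}'_g(s_W).
\end{equation*}

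It remains to absorb the scalar $-(\E_0[\partial g/\partial u])^{-1}$ into the outcome. Every ingredient of $\mathcal{U}'_g$ — the conditional means $m_a$, the competition externality $\gamma$, the welfare gradient $\nabla_{\mathbf{c}}\mathcal{U}$, the market-conduct externality $\Psi^{\text{conduct}}$, and hence $\Psi_i^{\text{fixed}}$ — depends \emph{linearly} on the outcome variable, because the allocation rule $\mu_a$, the spillover kernels $L_a$, and the influence function $\boldsymbol{\psi}_{\mathbf{c}_0}$ are outcome-free objects of the mechanism and conduct rule. Consequently the map $Y\mapsto\mathcal{U}'(s_W)$ from outcomes to the mean MPE is linear, so scaling the outcome by the constant $-(\E_0[\partial g/\partial u])^{-1}$ scales the MPE by the same constant: $-(\E_0[\partial g/\partial u])^{-1}\mathcal{U}'_g(s_W)=\mathcal{U}'_Z(s_W)$ with $Z_i:=-(\E_0[\partial g(Y_i,\mathcal{U}_0)/\partial u])^{-1}g(Y_i,\mathcal{U}_0)=IF(Y_i;\mathcal{U}_0)$. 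This is precisely the claim: the MPE on $\mathcal{U}$ is obtained by applying Theorem~\ref{th:mpe_general} / Corollary~\ref{cor:psi} to the transformed outcome $IF(Y_i;\mathcal{U}_0)$ (for the mean, $g(y,u)=y-u$ and $IF(y;u)=y-u$, recovering the original statements).

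I expect the main obstacle to be the rigorous justification of the chain-rule decomposition of the total derivative of $\theta\mapsto m(\theta,\mathcal{U}(\theta,s_W))$ — establishing \emph{joint} differentiability of $m$ at $(0,\mathcal{U}_0)$, not merely the two separate partials, which is what legitimizes both the implicit function theorem and the split displayed above. This requires that the transformed outcome $g(Y_i,u)$ satisfy the regularity hypotheses of Theorem~\ref{th:mpe_general} locally uniformly in $u$, which will likely need a mild extra condition on $g$ in the $y$ argument — a Lipschitz or dominated-derivative bound analogous to Assumption~\ref{ass:main_appendix_A}(2) of Appendix~\ref{app:integrals} — together with a uniform-integrability argument to carry the $\theta$-derivative inside the expectation. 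Once these regularity facts are in hand, the remaining steps are algebra.
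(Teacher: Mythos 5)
Your proposal is correct and follows essentially the same route as the paper's own proof: apply the implicit function theorem to $\E_\theta[g(Y_i,\mathcal{U}(\theta))]=0$, recognize $\partial_\theta G$ as the MPE of the mean of the transformed outcome $g(Y_i,\mathcal{U}_0)$, and absorb the scalar $-(\E_0[\partial g/\partial u])^{-1}$ via linearity of the construction of $\Psi$. Your additional care about joint differentiability and the regularity of $g$ in its $y$-argument is a sensible refinement, but the argument is the same.
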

\begin{proof}
Under a policy perturbation $\theta$, the moment condition must hold for the perturbed value of the functional, $\mathcal{U}(\theta)$:
\begin{equation*}
G(\theta, \mathcal{U}(\theta)) := \E_\theta[g(Y_i, \mathcal{U}(\theta))] = 0.
\end{equation*}
Our goal is to find the MPE, $\frac{d\mathcal{U}}{d\theta}|_{\theta=0}$. By the Implicit Function Theorem:
\begin{equation*}
\frac{d\mathcal{U}}{d\theta}\bigg|_{\theta=0} = - \left( \frac{\partial G}{\partial \mathcal{U}}\bigg|_{0, \mathcal{U}_0} \right)^{-1} \left( \frac{\partial G}{\partial \theta}\bigg|_{0, \mathcal{U}_0} \right).
\end{equation*}
The first component is $\frac{\partial G}{\partial \mathcal{U}} = \E_0[\frac{\partial g(Y_i, \mathcal{U}_0)}{\partial \mathcal{U}}]$. The second component is the MPE for the outcome variable $Z_i^g := g(Y_i, \mathcal{U}_0)$:
\begin{equation*}
 \frac{\partial G}{\partial \theta} = \frac{d}{d\theta}\E_\theta[g(Y_i, \mathcal{U}_0)]\bigg|_{\theta=0} = \mathcal{L}(g),
\end{equation*}
where $\mathcal{L}(g)$ is the MPE operator. From Appendix~\ref{app:mpe_proof}, we know $\mathcal{L}(g) = \E_0[\Psi_i^g s_W(W_i)]$ (in the $L_2$ case).
Substituting these into the IFT formula gives:
\begin{equation*}
    \frac{d\mathcal{U}}{d\theta} = - \left(\E_0\left[\frac{\partial g}{\partial \mathcal{U}}\right]\right)^{-1} \E_0[\Psi_i^g s_W(W_i)] = \E_0\left[ - \left(\E_0\left[\frac{\partial g}{\partial \mathcal{U}}\right]\right)^{-1} \Psi_i^g \cdot s_W(W_i) \right].
\end{equation*}
The term inside the expectation is the equilibrium-adjusted outcome for $\mathcal{U}$. The influence function for $\mathcal{U}$ is $IF(Y; \mathcal{U}_0) = -(\E_0[\frac{\partial g}{\partial \mathcal{U}}])^{-1} g(Y, \mathcal{U}_0)$. Since the construction of the externality terms in $\Psi$ is linear, it follows that $-(\E_0[\frac{\partial g}{\partial \mathcal{U}}])^{-1}\Psi_i^g = \Psi_i^{IF}$. This shows that the MPE for $\mathcal{U}$ is $\E_0[\Psi_i^{IF} s_W(W_i)]$, which completes the proof.
\end{proof}

\subsubsection{The General Case for Hadamard Differentiable Functionals}
The result holds more generally for any Hadamard differentiable functional. The proof below uses an integration-by-parts argument that relies on the linearity of the MPE operator and an assumption of bounded support for the outcome variable.

\begin{lemma}[Continuity of the MPE Operator]
The MPE operator $\mathcal{L}(g) := \frac{d}{d\theta} \E_\theta[g(Y_i)] |_{\theta=0}$ is a continuous (bounded) linear operator on the space of bounded, measurable functions $g$, equipped with the sup norm $||g||_\infty = \sup_y |g(y)|$.
\end{lemma}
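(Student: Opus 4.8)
The plan is to read off the explicit identified representation of the MPE obtained in Theorem~\ref{th:mpe_general} (equivalently, Lemmas~\ref{lemma:direct}--\ref{lemma:id}) and bound it term by term. For the fixed reform underlying the path $\E_\theta[\cdot]$ — and hence a fixed induced report score $s_R = \E_0[s_W(W_i)\mid R_i]$ — that representation writes $\mathcal{L}(g)$ as a finite sum of terms, each manifestly linear in $g$ and of one of two types. Type (i) are expectations $\E_0[g(Y_i)\,w_\ell(D_i)]$ against fixed weights $w_\ell$ assembled from the known mechanism objects and the fixed scores $s_W,s_R$: this covers the private/direct term $\E_0[g(Y_i)s_W(W_i)]$, the competition externality $\E_0\!\left[g(Y_i)\frac{L_{A_i,\mathbf{c}_0}(R_i,R_j)}{h_{A_i}}s_R(R_j)\right]$ (an expectation over the product of the baseline measure with an independent copy), and the inframarginal market-conduct gradient $\E_0\!\left[g(Y_i)\frac{\nabla_{\mathbf{c}}h_{A_i}}{h_{A_i}}\right]$. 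Type (ii) is the marginal (RDD) part of $\nabla_{\mathbf{c}}\mathcal{U}^g$: a fixed linear combination of boundary integrals of $m_a^g(r):=\E_0[g(Y_i)\mid A_i=a,R_i=r]$ against fixed weights (the products $h_a\nabla_{\mathbf{c}}\phi_a$ and the boundary densities $f_{\phi_a}(0\mid r_{un})$). Linearity of $\mathcal{L}$ is then immediate, since $\E_\theta[\cdot]$, conditional expectation, the inverse-propensity substitution, and the chain-rule derivatives entering the representation are all linear in $g$.

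For boundedness I would bound each constituent by $\|g\|_\infty$ times a finite constant. For the type-(i) terms, because $\E_0$ is an expectation under a probability measure, $|\E_0[g(Y_i)w_\ell(D_i)]|\le \|g\|_\infty\,\E_0[|w_\ell(D_i)|]$, and each $\E_0[|w_\ell|]$ is finite: this is precisely what the moment conditions imposed in Lemma~\ref{lemma:id} deliver (namely $\E_0[1/h_a]<\infty$, $\E_0[(\|\nabla_{\mathbf{c}}h_a\|_2/h_a)^2]<\infty$, square-integrability of the kernel $L_{a,\mathbf{c}_0}$, and $s_W,s_R\in L_2$), combined with Cauchy--Schwarz and, where a conditional expectation appears, conditional Jensen. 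For the type-(ii) term, the key observation is that $m_a^g$ is a conditional expectation of $g(Y_i)$, so $|m_a^g(r)|\le\|g\|_\infty$ for $P_{R|0}$-a.e.\ $r$; this bound persists on the boundary surfaces $\{\phi_a(\cdot,\mathbf{c}_0)=0\}$ by the continuity/regularity in Assumption~\ref{ass:marginal_agents_formal} and the co-area bound of Theorem~\ref{th:main_tech}, so each boundary integral is bounded by $\|g\|_\infty$ times a fixed finite quantity built from $h_a$, $\nabla_{\mathbf{c}}\phi_a$, and $f_{\phi_a}(0\mid\cdot)$. Hence $\nabla_{\mathbf{c}}\mathcal{U}^g$ is a vector in $\mathbb{R}^p$ depending linearly on $g$ with $\|\nabla_{\mathbf{c}}\mathcal{U}^g\|_2\le C\|g\|_\infty$, and the market-conduct contribution is $\langle \nabla_{\mathbf{c}}\mathcal{U}^g\cdot\boldsymbol{\psi}_{\mathbf{c}_0},\,s_R\rangle_{\mathcal{H}_R}=\sum_j[\nabla_{\mathbf{c}}\mathcal{U}^g]_j\,\langle\psi_{\mathbf{c}_0,j},s_R\rangle_{\mathcal{H}_R}$, bounded by $C\|g\|_\infty\sum_j|\langle\psi_{\mathbf{c}_0,j},s_R\rangle_{\mathcal{H}_R}|$, a finite multiple since $\boldsymbol{\psi}_{\mathbf{c}_0}$ is the representer of a continuous functional. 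Summing all terms yields $|\mathcal{L}(g)|\le C\|g\|_\infty$ with $C$ depending on the fixed reform and mechanism primitives but not on $g$.

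I expect the main obstacle to be bookkeeping rather than any sharp estimate: enumerating the constituent terms of the representation precisely and, for the conduct pieces, carefully tracking the $\mathbb{R}^p$-valued gradient $\nabla_{\mathbf{c}}\mathcal{U}^g$ dotted into the fixed vector-valued representer $\boldsymbol{\psi}_{\mathbf{c}_0}$ and then paired with $s_R$ in the possibly-Sobolev space $\mathcal{H}_R$. A secondary point worth one sentence is well-definedness: the formula's conditional expectations and integrals converge for \emph{every} bounded measurable $g$ under the stated moment conditions, irrespective of any continuity of $g$ itself, so it furnishes the bounded linear extension of $\mathcal{L}$ to all of $L^\infty$ that agrees with the difference-quotient derivative on the subclass of $g$ for which Theorem~\ref{th:mpe_general} was established.
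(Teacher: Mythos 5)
Your proposal is correct and follows essentially the same route as the paper: the paper's proof is a two-line version of your argument, invoking Cauchy--Schwarz on $\E_0[\Psi^g s_W]$ together with the observation that the externality terms in $\Psi^g$ are bounded linear operations on conditional means of $g$, each of which is bounded by $\|g\|_\infty$. Your term-by-term unpacking (including the general $\mathcal{H}_R$ pairing for the conduct piece and the remark that the explicit formula extends $\mathcal{L}$ to all bounded measurable $g$) is a more explicit rendering of the same idea, not a different approach.
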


\begin{proof}
Linearity follows from the linearity of expectations and the construction of $\Psi^g$. For continuity, we show the operator is bounded. From Corollary~\ref{cor:psi}, $\mathcal{L}(g) = \E_0[\Psi^g s_W]$. By Cauchy-Schwarz, $|\mathcal{L}(g)| \le \sqrt{\E_0[(\Psi^g)^2]} \cdot \sqrt{\E_0[s_W^2]}$. The externality terms in $\Psi^g$ are constructed via bounded linear operations on the conditional mean of $g$, which is itself bounded by $||g||_\infty$. It follows that $\sqrt{\E_0[(\Psi^g)^2]}$ is bounded by a constant times $||g||_\infty$. Thus, $|\mathcal{L}(g)| \le C \cdot ||g||_\infty$ for some constant $C$.
\end{proof}

\begin{proposition}
Let $\mathcal{U}(F_Y)$ be a Hadamard differentiable functional with a continuously differentiable influence function $IF(y; F_{Y|0})$. Assume the outcome variable $Y$ has bounded support. The Marginal Policy Effect on $\mathcal{U}$ is given by applying Theorem~\ref{th:mpe_general} (or Corollary~\ref{cor:psi}) to the transformed outcome $IF(Y_i; F_{Y|0})$.
\end{proposition}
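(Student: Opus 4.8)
The plan is to reduce the claim to the linearity and continuity of the MPE operator $\mathcal{L}$ recorded in the two preceding lemmas, combined with the chain rule for Hadamard differentiable maps and the classical von Mises representation of a Hadamard derivative through its influence function. Write $F_{Y|\theta}$ for the marginal CDF of $Y_i$ under the counterfactual data law $P_D^{count}(\theta,s_W)$ of Proposition~\ref{prop:id_sup}, so that $F_{Y|0}$ is the baseline outcome CDF and aggregate welfare under the reform is $\mathcal{U}(F_{Y|\theta})$. The MPE on $\mathcal{U}$ is $\frac{d}{d\theta}\mathcal{U}(F_{Y|\theta})\big|_{\theta=0}$, and the entire argument consists in showing that this equals $\mathcal{L}\big(\text{IF}(\cdot; F_{Y|0})\big)$; once this is established the conclusion is immediate, since $\mathcal{L}$ applied to the bounded transformed outcome $\text{IF}(Y_i; F_{Y|0})$ is, by the construction in Appendix~\ref{app:mpe_proof} (where the outcome enters only through its conditional means), precisely the MPE delivered by Theorem~\ref{th:mpe_general}---or by Corollary~\ref{cor:psi} when the market conduct rule is $L_2$-differentiable---with the equilibrium-adjusted outcome $\Psi_i^{\mathcal{U}}$ built from $\text{IF}(Y_i; F_{Y|0})$ in place of $Y_i$. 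Here boundedness of $\text{supp}(Y)$ and continuity of $\text{IF}$ keep the conditional-mean regularity maintained for $\text{IF}(Y_i; F_{Y|0})$ mild.

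The first substantive step is the chain rule. Fix a compact interval $[-M,M]\supset\text{supp}(Y)$. Pointwise, $F_{Y|\theta}(y)=\E_\theta[\ind\{Y_i\le y\}]$ is differentiable in $\theta$ with derivative $\dot F(y):=\mathcal{L}(\ind\{\cdot\le y\})$; by the Continuity Lemma above, $|\dot F(y)|\le C\,\|\ind\{\cdot\le y\}\|_\infty=C$ uniformly in $y$, and $\dot F(-M)=\dot F(M)=0$ because $\ind\{\cdot\le -M\}\equiv 0$ and $\ind\{\cdot\le M\}\equiv 1$ while $\mathcal{L}$ annihilates constants. Arguing as in the preceding lemmas, one upgrades pointwise differentiability to differentiability of $\theta\mapsto F_{Y|\theta}$ in the sup-norm on $[-M,M]$, with derivative $\dot F$. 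The chain rule for Hadamard differentiable maps, applied to the composition of this differentiable path with the Hadamard differentiable functional $\mathcal{U}$, then gives $\frac{d}{d\theta}\mathcal{U}(F_{Y|\theta})\big|_{\theta=0}=D\mathcal{U}_{F_{Y|0}}[\dot F]$, and the von Mises representation of the Hadamard derivative yields $D\mathcal{U}_{F_{Y|0}}[\dot F]=\int \text{IF}(y; F_{Y|0})\,d\dot F(y)$.

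The remaining step---and the main obstacle---is to identify $\int \text{IF}(y; F_{Y|0})\,d\dot F(y)$ with $\mathcal{L}\big(\text{IF}(\cdot; F_{Y|0})\big)=\frac{d}{d\theta}\E_\theta[\text{IF}(Y_i; F_{Y|0})]\big|_{\theta=0}$, i.e., to commute the $\theta$-derivative with integration against the moving measure; this is exactly where the two hypotheses of the proposition are used. Since $\text{IF}(\cdot; F_{Y|0})\in C^1$ and the support is compact, Lebesgue--Stieltjes integration by parts gives $\int \text{IF}(y; F_{Y|0})\,d\dot F(y)=-\int_{-M}^{M}\text{IF}'(y; F_{Y|0})\,\dot F(y)\,dy$ (the boundary term vanishes because $\dot F(\pm M)=0$). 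Applying the same integration by parts to $\E_\theta[\text{IF}(Y_i; F_{Y|0})]=\int \text{IF}(y; F_{Y|0})\,dF_{Y|\theta}(y)$ before differentiating turns it into $\text{IF}(M; F_{Y|0})-\int_{-M}^{M}\text{IF}'(y; F_{Y|0})\,F_{Y|\theta}(y)\,dy$, and now the $\theta$-derivative passes under an ordinary Lebesgue integral by dominated convergence---legitimate because $\text{IF}'$ is bounded on $[-M,M]$ and the difference quotients $\theta^{-1}(F_{Y|\theta}(y)-F_{Y|0}(y))$ are uniformly dominated under the regularity maintained in Appendix~\ref{app:mpe_proof}---yielding $-\int_{-M}^{M}\text{IF}'(y; F_{Y|0})\,\dot F(y)\,dy$ as well. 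Hence $\frac{d}{d\theta}\mathcal{U}(F_{Y|\theta})\big|_{\theta=0}=\mathcal{L}\big(\text{IF}(\cdot; F_{Y|0})\big)$, and the proof is completed by the reduction in the first paragraph; linearity of $\mathcal{L}$ (used in decomposing $\Psi_i^{\mathcal{U}}$ into its private, competition, and conduct components) and its continuity on bounded functions (which makes every displayed quantity finite) are exactly what the two preceding lemmas provide.
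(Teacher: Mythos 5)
Your proof is correct and follows essentially the same route as the paper's: both arguments exploit bounded support to integrate by parts against $IF'$, kill the boundary terms because the MPE operator $\mathcal{L}$ annihilates constants and the zero function, and then match the two sides using the linearity/continuity of $\mathcal{L}$. The only (immaterial) difference is in the final identification: the paper commutes $\mathcal{L}$ with the Lebesgue integral and evaluates $\int_Y^b IF'(y)\,dy = IF(b)-IF(Y)$ to land on $\mathcal{L}(IF(Y))$ directly, whereas you compute $\mathcal{L}(IF(Y))$ separately by differentiating under the integral and show both sides equal $-\int IF'(y)\,\dot F(y)\,dy$.
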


\begin{proof}
The proof establishes the identity $\text{MPE}(\mathcal{U}) = \text{MPE}(\E[IF(Y)])$.

\textbf{Step 1: MPE($\mathcal{U}$) as a Stieltjes Integral.}
By definition of the Hadamard derivative, the MPE of $\mathcal{U}$ is the integral of its influence function against the derivative of the path of outcome measures, $\nu'$. Let the bounded support of $Y$ be $[a,b]$.
\begin{equation*}
    \text{MPE}(\mathcal{U}) = \int_{a}^{b} IF(y; F_{Y|0}) \, d\nu'(y), \quad \text{where} \quad \nu'([a, y]) = \mathcal{L}(\1\{Y \le y\}).
\end{equation*}

\textbf{Step 2: Integration by Parts.}
We apply the integration by parts formula for Stieltjes integrals. Let $u(y) = IF(y)$ and $dv = d\nu'(y)$, which implies $v(y) = \mathcal{L}(\1\{Y \le y\})$.
\begin{equation*}
    \text{MPE}(\mathcal{U}) = \left[ IF(y) \cdot \mathcal{L}(\1\{Y \le y\}) \right]_{y=a}^{b} - \int_{a}^{b} \mathcal{L}(\1\{Y \le y\}) \cdot IF'(y) dy.
\end{equation*}
The boundary terms are zero. At $y=b$, $\mathcal{L}(\1\{Y \le b\}) = \mathcal{L}(1) = 0$. At $y=a$, $\mathcal{L}(\1\{Y \le a\}) = \mathcal{L}(0) = 0$.

\textbf{Step 3: Swapping Linear Operators.}
We are left with the integral term. By the preceding lemma, $\mathcal{L}$ is a continuous linear operator and thus commutes with the integral:
\begin{equation*}
    \text{MPE}(\mathcal{U}) = - \int_{a}^{b} \mathcal{L}(\1\{Y \le y\}) \cdot IF'(y) dy = - \mathcal{L}\left( \int_{a}^{b} IF'(y) \cdot \1\{Y \le y\} dy \right).
\end{equation*}

\textbf{Step 4: Conclusion.}
The inner integral, for a fixed realization of $Y$, is $\int_Y^{b} IF'(y) dy = [IF(y)]_Y^b = IF(b) - IF(Y)$.
Substituting this back, the MPE is:
\begin{equation*}
    \text{MPE}(\mathcal{U}) = - \mathcal{L}\left( IF(b) - IF(Y) \right) = - \mathcal{L}(IF(b)) + \mathcal{L}(IF(Y)).
\end{equation*}
Since $IF(b)$ is a constant, its MPE is zero, so $\mathcal{L}(IF(b)) = 0$. This leaves the final identity:
\begin{equation*}
    \text{MPE}(\mathcal{U}) = \mathcal{L}(IF(Y)) = \text{MPE}(\E_0[IF(Y)]).
\end{equation*}
This shows that computing the MPE for a general functional is equivalent to computing the MPE for the mean of its influence function, which our main framework is designed to do.
\end{proof}

\end{document}